\newtheorem{fact}[theorem]{Fact}
\newcommand{\braced}[1]{{\left\{#1\right\}}}
\newcommand{\fst}{\textit{fst}}
\newcommand{\lst}{\textit{lst}}
\title{Faster 3-Periodic Merging Networks}
\author{Marek Piotr\'ow}
\institute{Institute of Computer Science, University of Wroc\l aw,\\ 
ul.~Joliot-Curie~15, PL-50-383 Wroc\l aw, Poland\\
\email{Marek.Piotrow@ii.uni.wroc.pl}}
\begin{document}

\maketitle

\begin{abstract} 
We consider the problem of merging two sorted sequences on a comparator
network that is used repeatedly, that is, if the output is not sorted,
the network is applied again using the output as input. The challenging
task is to construct such networks of small depth. The first
constructions of merging networks with a constant period were given by
Kuty{\l}owski, Lory{\'s} and Oesterdikhoff \cite{klo}. They have given
$3$-periodic network that merges two sorted sequences of $N$ numbers in
time $12\log N$ and a similar network of period $4$ that works in
$5.67\log N$.  We present a new family of such networks that are based
on Canfield and Williamson periodic sorter \cite{cw}. Our $3$-periodic
merging networks work in time upper-bounded by $6\log N$. The construction
can be easily generalized to larger constant periods with decreasing
running time, for example, to $4$-periodic ones that work in time 
upper-bounded by $4\log N$. Moreover, to obtain the facts we have introduced 
a new proof technique.
\end{abstract}

\medskip
\noindent\textbf{Keywords:} parallel merging, comparison networks, 
merging networks, periodic networks, comparators, oblivious merging.



\section{Introduction}  
\label{intro}

Comparator networks are probably the simplest parallel model that is used to
solve such tasks as sorting, merging or selecting \cite{k}. Each network
represents a data-oblivious algorithm, which can be easily implemented in
hardware. Moreover, sorting networks can be applied in secure, multi-party
computation (SMC) protocols. They are also strongly connected with switching
networks \cite{l}. The most famous constructions of sorting networks are
Odd-Even and Bitonic networks of depth $\frac{1}{2}\log^2 N$ due to Batcher
\cite{b} and AKS networks of depth $O(\log N)$ due to Ajtai, Komlos and
Szemeredi \cite{aks}. The long-standing disability to decrease a large
constant hidden behind the asymptotically optimal complexity of AKS networks
to a practical value \cite{s} has resulted in studying easier, sorting-related
problems, whose optimal networks have small constants.

A comparator network consists of a set of $N$ registers, each of which can
contain an item from a totally ordered set, and a sequence of
comparator stages.  Each stage is a set of comparators that connect
disjoint pairs of registers and, therefore, can work in parallel (a
comparator is a simple device that takes a contents of two registers
and performs a compare-exchange operation on them: the minimum is put
into the first register and the maximum into the second one). Stages
are run one after another in synchronous manner, hence we can consider
the number of stages as the running time. The size of a network is 
defined to be the total number of comparators in all its stages.

A network $A$ consisting of stages $S_1,S_2,\ldots,S_d$ is called $p$-periodic
if $p<d$ and for each $i$, $1\le i\le d-p$, stages $S_i$ and $S_{i+p}$ are
identical.  A periodic network is easy to implement, especially in hardware,
because one can use the first $p$ stages in a cycle: if the output of $p$-th
stage is not correct (sorted, for example), the sequence of $p$ stages is run
again. We can also define a $p$-periodic network just by giving the total
number of stages and a description of its first $p$ stages. A challenging task
is to construct a family of small-periodic networks for sorting-related
problems with the running time equal to, or not much greater than that of
non-periodic networks.

Dowd et al.\ \cite{dpsr} gave the construction of $\log N$-periodic sorting
networks of $N$ registers with running time of $\log^2 N$.  Kuty{\l}owski et
al.\ \cite{klow} introduced a general method to convert a non-periodic sorting
network into a 5-periodic one, but the running time increases by a factor of
$O(\log N)$ during the conversion. For simpler problems such as merging or
correction there are constant-periodic networks that solve the corresponding
problem in asymptotically optimal logarithmic time \cite{klo,p}. In
particular, Kuty{\l}owski, Lory{\'s} and Oesterdikhoff \cite{klo} have given
$3$-periodic network that merges two sorted sequences of $N$ numbers in time
$12\log N$ and a similar network of period $4$ that works in $5.67\log N$.
They have also sketched a construction of merging networks with periods larger
than 4 and running time decreasing asymptotically to $2.25\log N$.

In this paper, we introduce a new family of constant-periodic merging
networks that are based on the Canfield and Williamson $O(\log
N)$-periodic sorter \cite{cw} by a certain periodification technique.
Our $3$-periodic merging networks work in time upper-bounded by $6\log
N$ and $4$-periodic ones - in time upper-bounded by $4\log N$. The
construction can be easily generalized to larger constant periods with
decreasing running time.

The advantage of constant-periodic networks is that they have pretty
simple patterns of communication links, that is, each node (register)
of such a network can only be connected to a constant number of other
nodes. Such patterns are easier to implement, for example, in hardware.
Moreover, a node uses these links in a simple periodic manner and this
can save control login and simplify timing considerations.

\section{Periodic merging networks}

Our merging networks are based on the Canfield and Williamson \cite{cw}
$O(\log N)$-periodic sorters. We recall now the definition of their
networks: for each $k\ge 1$ let $CW_k=S_1,\ldots,S_k$ denote a network of
$N=2^k$ registers, where the stages are defined as follows (see also
Figures \ref{mergeCW} and \ref{anotherCW}):
\begin{eqnarray}
S_1 & = & \braced{[2i:2i+1]: i=0,1,\ldots,2^{k-1}-1}, \\
S_{j+1} & = & \braced{[2i+1:2i+2^{k-j}]:
  i=0,1,\ldots,2^{k-1}-2^{k-j-1}-1}, j=1,\ldots,k-1.
\end{eqnarray}
\begin{figure}[ht]
\begin{center}
\begin{picture}(164,99)
\newcounter{reg}
\thicklines
\setcounter{reg}{28}
\multiput(0,3)(0,12){8}{
	\put(-4,7){\makebox(0,0)[br]{\footnotesize\arabic{reg}}} 
	\put(0,0){\line(1,0){164}} 
	\put(0,3){\line(1,0){164}} 
	\put(0,6){\line(1,0){164}} 
	\put(0,9){\line(1,0){164}} 
	\addtocounter{reg}{-4}}
\put(-4,2){\makebox(0,0)[br]{\footnotesize 31}}
\setlength{\arrowlength}{2pt}
\begin{picture}(164,99)
\put(20,96){\vector(0,-1){3}}
\put(20,90){\vector(0,-1){3}}
\put(20,84){\vector(0,-1){3}}
\put(20,78){\vector(0,-1){3}}
\put(20,72){\vector(0,-1){3}}
\put(20,66){\vector(0,-1){3}}
\put(20,60){\vector(0,-1){3}}
\put(20,54){\vector(0,-1){3}}
\put(20,48){\vector(0,-1){3}}
\put(20,42){\vector(0,-1){3}}
\put(20,36){\vector(0,-1){3}}
\put(20,30){\vector(0,-1){3}}
\put(20,24){\vector(0,-1){3}}
\put(20,18){\vector(0,-1){3}}
\put(20,12){\vector(0,-1){3}}
\put(20,6){\vector(0,-1){3}}
\put(40,93){\vector(0,-1){45}}
\put(44,87){\vector(0,-1){45}}
\put(48,81){\vector(0,-1){45}}
\put(52,75){\vector(0,-1){45}}
\put(56,69){\vector(0,-1){45}}
\put(60,63){\vector(0,-1){45}}
\put(64,57){\vector(0,-1){45}}
\put(68,51){\vector(0,-1){45}}
\put(88,93){\vector(0,-1){21}}
\put(92,87){\vector(0,-1){21}}
\put(96,81){\vector(0,-1){21}}
\put(100,75){\vector(0,-1){21}}
\put(88,69){\vector(0,-1){21}}
\put(92,63){\vector(0,-1){21}}
\put(96,57){\vector(0,-1){21}}
\put(100,51){\vector(0,-1){21}}
\put(88,45){\vector(0,-1){21}}
\put(92,39){\vector(0,-1){21}}
\put(96,33){\vector(0,-1){21}}
\put(100,27){\vector(0,-1){21}}
\put(120,93){\vector(0,-1){9}}
\put(124,87){\vector(0,-1){9}}
\put(120,81){\vector(0,-1){9}}
\put(124,75){\vector(0,-1){9}}
\put(120,69){\vector(0,-1){9}}
\put(124,63){\vector(0,-1){9}}
\put(120,57){\vector(0,-1){9}}
\put(124,51){\vector(0,-1){9}}
\put(120,45){\vector(0,-1){9}}
\put(124,39){\vector(0,-1){9}}
\put(120,33){\vector(0,-1){9}}
\put(124,27){\vector(0,-1){9}}
\put(120,21){\vector(0,-1){9}}
\put(124,15){\vector(0,-1){9}}
\put(144,93){\vector(0,-1){3}}
\put(144,87){\vector(0,-1){3}}
\put(144,81){\vector(0,-1){3}}
\put(144,75){\vector(0,-1){3}}
\put(144,69){\vector(0,-1){3}}
\put(144,63){\vector(0,-1){3}}
\put(144,57){\vector(0,-1){3}}
\put(144,51){\vector(0,-1){3}}
\put(144,45){\vector(0,-1){3}}
\put(144,39){\vector(0,-1){3}}
\put(144,33){\vector(0,-1){3}}
\put(144,27){\vector(0,-1){3}}
\put(144,21){\vector(0,-1){3}}
\put(144,15){\vector(0,-1){3}}
\put(144,9){\vector(0,-1){3}}
\thinlines
\put(30,0){\line(0,1){99}}
\put(78,0){\line(0,1){99}}
\put(110,0){\line(0,1){99}}
\put(134,0){\line(0,1){99}}
\end{picture}
\end{picture}
\end{center}
\caption{The Canfield and Williamson $\log N$-periodic sorter, where
  $N=32$. Registers and comparators are represented by horizontal lines and 
  arrows, respectively. Stages are separated by vertical lines.}
\label{mergeCW}
\end{figure}
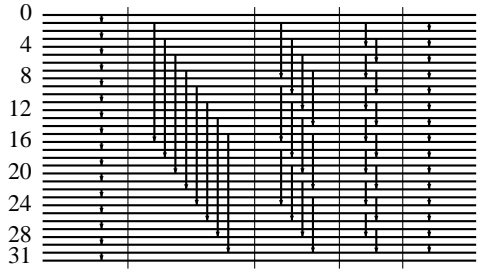

The merging and sorting properties of the networks are given in the
following proposition.

\begin{proposition}
(1) For each $k\ge 1$, if two sorted sequences of length
$2^{k-1}$ are given in registers with odd and even indices, respectively, 
then $CW_k$ is a merging network.
(2) For each $k\ge 1$, $CW_k$ is a $k$-pass periodic sorting network. 
\end{proposition}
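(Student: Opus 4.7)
By Knuth's 0-1 principle, it suffices to verify both claims on 0-1 inputs.

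For part (1) I would argue by induction on $k$. The base $k=1$ is a single comparator. For the inductive step, let the odd-indexed registers hold $a$ zeros in sorted positions $1,3,\ldots,2a-1$ and the even-indexed registers hold $b$ zeros in sorted positions $0,2,\ldots,2b-2$. Stage $S_1$ sorts every adjacent pair, so each pair becomes $00$, $01$, or $11$ according as it contained two, one, or zero zeros; the resulting sequence takes the ``zipper'' form
\[
0^{2\min(a,b)}\,(01)^{|a-b|}\,1^{2(2^{k-1}-\max(a,b))}.
\]
It then remains to convert the central block $(01)^{|a-b|}$ into $0^{|a-b|}1^{|a-b|}$. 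The crucial observation is that every comparator in stages $S_2,\ldots,S_k$ has the form $[2i+1:2i+2^{k-j}]$, linking an odd-indexed register to a strictly higher even-indexed one; hence 0s can only drift down-leftward and 1s up-rightward, and the geometrically decreasing distances $2^{k-1}-1,\,2^{k-2}-1,\ldots,1$ are exactly what is needed to bring each misplaced 0 (resp.\ 1) to its target even (resp.\ odd) position, which I would verify by tracking their locations stage by stage.

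For part (2) I would again reduce to 0-1 inputs and aim for the stronger invariant: \emph{after $k-1$ passes of $CW_k$, the odd-indexed and the even-indexed subsequences are each sorted.} Once this is established, the $k$-th pass is precisely a merge of two sorted sequences placed in odd and even positions, which sorts by part (1), giving a $k$-pass sorter. To prove the invariant I would use a nested induction on $k$ and on the pass number, identifying a strictly decreasing progress measure such as the total number of inversions inside the two subsequences, and showing that every pass of $CW_k$ drives this measure down to zero in at most $k-1$ passes.

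The principal technical obstacle is the stage-by-stage bookkeeping for part (1): verifying that the distances $2^{k-j}-1$ in $S_{j+1}$ precisely match the residual positions of the misplaced zipper elements and that long-distance comparators never perturb a correctly placed element outside the zipper. For part (2), the harder point is formulating the inter-pass invariant so that it is both preserved by one full pass and strong enough to guarantee sorted odd/even subsequences within $k-1$ passes; this appears to be where the ``new proof technique'' promised in the abstract is most likely to play a role.
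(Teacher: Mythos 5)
First, a point of reference: the paper does not actually prove this proposition --- it is recalled from Canfield and Williamson \cite{cw} and stated without argument, so there is no in-paper proof to measure you against. Judged on its own, your proposal contains correct pieces (the reduction to 0-1 inputs, and the computation that stage $S_1$ turns the input into $0^{2\min(a,b)}(01)^{|a-b|}1^{2(2^{k-1}-\max(a,b))}$ is right), but both halves stop exactly where the work begins. For part (1), the claim that the distances $2^{k-j}-1$ ``are exactly what is needed'' is the entire theorem, and it is not routine: after a comparator $[2i+1:2i+2^{k-j}]$ swaps a misplaced $1$ upward, that $1$ now sits at an \emph{even} index, so in the next stage it can only appear as the right endpoint of a comparator, where it is not moved further; the displacement of a given element is therefore not simply a greedy binary decomposition, and the parity bookkeeping you defer is precisely the content of the statement. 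A complete argument has to produce an explicit invariant describing the block boundary after each stage (or an induction on $k$ that splits the zipper between the two halves of the register set), not a promise to ``track locations stage by stage.''

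For part (2) the gap is more serious: your proposed progress measure cannot work as stated. A strictly decreasing count of inversions in the odd and even subsequences only bounds the number of passes by the initial number of inversions, which is $\Theta(N^2)$, whereas you need the bound $k-1=\log N-1$. Nothing in the proposal explains why a single pass of $CW_k$ removes all but a geometrically shrinking amount of disorder, and the intermediate invariant ``after $k-1$ passes the odd and even subsequences are each sorted'' is asserted, not derived. The reduction of the final pass to part (1) is fine once that invariant is in hand, but the invariant is the whole difficulty. (Incidentally, the ``new proof technique'' advertised in the abstract concerns the paper's own networks $M_k$ --- reduction to column counts, balanced sequences and interval tracking --- and is not offered for, nor does it directly apply to, this cited proposition about $CW_k$.)
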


\begin{figure}[ht]
\begin{center}
\begin{picture}(210,160)
\includegraphics{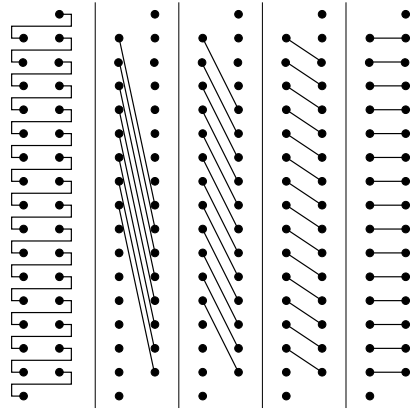}
\end{picture}
\end{center}
\caption{Another view of $CW_5$ 5-pass 5-periodic sorter. Registers and
  comparators are represented by dots and edges, respectively. Stages
  are separated by vertical lines.}
\label{anotherCW}
\end{figure}
We would like to implement a version of this network as a
constant-periodic comparator network. Consider first the most
challenging 3-periodic implementation. We start with the definition of a
temporally construction $P_k$ which structure is similar to the
structure of $CW_k$. Then we transform it to 3-periodic network $M_k$.
The idea is to replace each register $i$ in $CW_k$ (except the first and
the last ones) with a sequence of $k-2$ consecutive registers, move the
endpoints of long comparators one register further or closer depending
on the parity of $i$ and insert between each pair of stages containing
long comparators a stage with short comparators joining the endpoints of
those long ones. The result is depicted in Fig. \ref {merge92}. In this
way, we obtain a network in which each register is used in at most
three consecutive stages. Therefore the network $P_k$ can be packed into
the first 3 stages and used periodically to get the desired 3-periodic
merging network.
\begin{figure}[ht]
\begin{center}
\begin{picture}(320,100)
\includegraphics[scale=0.8]{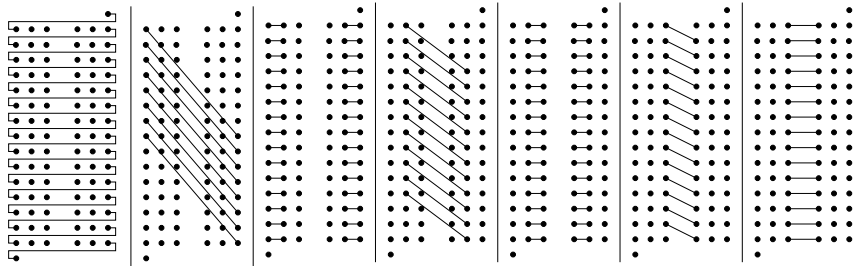}
\end{picture}
\end{center}
\caption{$P_5$ as an implementation of $CW_5$. Registers and comparators
  are represented by dots and edges, respectively. Stages are separated
  by vertical lines. Stages with short horizontal comparators are
  inserted between stages with long comparators.}
\label{merge92}
\end{figure}

Let $[i:j]$ denote a comparator connecting registers $i$ and $j$. A
comparator $[i:j]$ is {\em standard} if $i<j$.  For an $N$-register
network $A = S_1,S_2,\ldots,S_d$, where $S_1,S_2,\ldots,S_{d}$ denote
stages, and for an integer $j\in\{1,\ldots,N\}$, we will use the
following notations:
\begin{eqnarray}
   \fst(j,A) &  = & \min\braced{1\le i\le d: j \in regs(S_i)}\\
   \lst(j,A) &  = & \max\braced{1\le i\le d: j \in regs(S_i)}\\
   delay(A) &  = & \max_{j\in\{1,\ldots,N\}}\braced{\lst(j,A)-\fst(j,A)+1}
\end{eqnarray}
where $regs(\{[i_1:j_1], \ldots, [i_r:j_r]\})$ denotes the set
$\{i_1, j_1, \ldots, i_r, j_r\}$.

Let us define formally the new family of merging networks.  For each
$k\ge 3$ we would like to transform the network $CW_k$ into a new
network $P_k$. 
\begin{definition}\label{defMk}
  Let $n_k=2^{k-1}-1$ be one less than the half of the number of
  registers in $CW_k$ and $b_k=2(k-2)$. The number of registers of $P_k$
  is defined to be $N_k=n_k\cdot b_k+2$. The stages of $P_k =
  S_{k,1}\cup\{[0:1],[N_k-2:N_k-1]\}, S_{k,2},\ldots,S_{k,2k-3}$ are defined by
  the following equations, where $j=1,\ldots,\frac{b_k}{2}$:
\begin{eqnarray}
S_{k,1} & = & \braced{[b_ki:b_ki+1]: i=1,\ldots,n_k-1}\\
S_{k,2j} & = & \braced{[b_ki+j:b_k(i+2^{k-j-1}-1)+(b_k-j+1)]: 
          i=0,\ldots,n_k-2^{k-j-1}}\\ 
S_{k,2j+1} & = & \left\{[b_ki+j:b_ki+j+1],\right.\\
         &   & \left. [b_ki+(b_k-j):b_ki+(b_k-j+1)]: i=0,\ldots,n_k-1 \right\}
\end{eqnarray}
\end{definition}

The network $P_5$ is depicted in Figure \ref{merge3}.

\begin{figure}[ht]
\begin{center}
\setcounter{reg}{88}
\begin{picture}(204,279) 
\multiput(0,3)(0,12){23}{
	\put(-4,7){\makebox(0,0)[br]{\footnotesize\arabic{reg}}} 
	\put(0,0){\line(1,0){204}} 
	\put(0,3){\line(1,0){204}} 
	\put(0,6){\line(1,0){204}} 
	\put(0,9){\line(1,0){204}} 
	\addtocounter{reg}{-4}}
\put(-4,2){\makebox(0,0)[br]{\footnotesize 91}}
\setlength{\arrowlength}{2pt}
\begin{picture}(204,279)
\setlength{\arrowlength}{2pt}
\put(20,276){\vector(0,-1){3}}
\put(20,258){\vector(0,-1){3}}
\put(20,240){\vector(0,-1){3}}
\put(20,222){\vector(0,-1){3}}
\put(20,204){\vector(0,-1){3}}
\put(20,186){\vector(0,-1){3}}
\put(20,168){\vector(0,-1){3}}
\put(20,150){\vector(0,-1){3}}
\put(20,132){\vector(0,-1){3}}
\put(20,114){\vector(0,-1){3}}
\put(20,96){\vector(0,-1){3}}
\put(20,78){\vector(0,-1){3}}
\put(20,60){\vector(0,-1){3}}
\put(20,42){\vector(0,-1){3}}
\put(20,24){\vector(0,-1){3}}
\put(20,6){\vector(0,-1){3}}
\put(40,273){\vector(0,-1){141}}
\put(44,255){\vector(0,-1){141}}
\put(48,237){\vector(0,-1){141}}
\put(52,219){\vector(0,-1){141}}
\put(56,201){\vector(0,-1){141}}
\put(60,183){\vector(0,-1){141}}
\put(64,165){\vector(0,-1){141}}
\put(68,147){\vector(0,-1){141}}
\put(88,273){\vector(0,-1){3}}
\put(88,261){\vector(0,-1){3}}
\put(88,255){\vector(0,-1){3}}
\put(88,243){\vector(0,-1){3}}
\put(88,237){\vector(0,-1){3}}
\put(88,225){\vector(0,-1){3}}
\put(88,219){\vector(0,-1){3}}
\put(88,207){\vector(0,-1){3}}
\put(88,201){\vector(0,-1){3}}
\put(88,189){\vector(0,-1){3}}
\put(88,183){\vector(0,-1){3}}
\put(88,171){\vector(0,-1){3}}
\put(88,165){\vector(0,-1){3}}
\put(88,153){\vector(0,-1){3}}
\put(88,147){\vector(0,-1){3}}
\put(88,135){\vector(0,-1){3}}
\put(88,129){\vector(0,-1){3}}
\put(88,117){\vector(0,-1){3}}
\put(88,111){\vector(0,-1){3}}
\put(88,99){\vector(0,-1){3}}
\put(88,93){\vector(0,-1){3}}
\put(88,81){\vector(0,-1){3}}
\put(88,75){\vector(0,-1){3}}
\put(88,63){\vector(0,-1){3}}
\put(88,57){\vector(0,-1){3}}
\put(88,45){\vector(0,-1){3}}
\put(88,39){\vector(0,-1){3}}
\put(88,27){\vector(0,-1){3}}
\put(88,21){\vector(0,-1){3}}
\put(88,9){\vector(0,-1){3}}
\put(108,270){\vector(0,-1){63}}
\put(112,252){\vector(0,-1){63}}
\put(116,234){\vector(0,-1){63}}
\put(120,216){\vector(0,-1){63}}
\put(108,198){\vector(0,-1){63}}
\put(112,180){\vector(0,-1){63}}
\put(116,162){\vector(0,-1){63}}
\put(120,144){\vector(0,-1){63}}
\put(108,126){\vector(0,-1){63}}
\put(112,108){\vector(0,-1){63}}
\put(116,90){\vector(0,-1){63}}
\put(120,72){\vector(0,-1){63}}
\put(140,270){\vector(0,-1){3}}
\put(140,264){\vector(0,-1){3}}
\put(140,252){\vector(0,-1){3}}
\put(140,246){\vector(0,-1){3}}
\put(140,234){\vector(0,-1){3}}
\put(140,228){\vector(0,-1){3}}
\put(140,216){\vector(0,-1){3}}
\put(140,210){\vector(0,-1){3}}
\put(140,198){\vector(0,-1){3}}
\put(140,192){\vector(0,-1){3}}
\put(140,180){\vector(0,-1){3}}
\put(140,174){\vector(0,-1){3}}
\put(140,162){\vector(0,-1){3}}
\put(140,156){\vector(0,-1){3}}
\put(140,144){\vector(0,-1){3}}
\put(140,138){\vector(0,-1){3}}
\put(140,126){\vector(0,-1){3}}
\put(140,120){\vector(0,-1){3}}
\put(140,108){\vector(0,-1){3}}
\put(140,102){\vector(0,-1){3}}
\put(140,90){\vector(0,-1){3}}
\put(140,84){\vector(0,-1){3}}
\put(140,72){\vector(0,-1){3}}
\put(140,66){\vector(0,-1){3}}
\put(140,54){\vector(0,-1){3}}
\put(140,48){\vector(0,-1){3}}
\put(140,36){\vector(0,-1){3}}
\put(140,30){\vector(0,-1){3}}
\put(140,18){\vector(0,-1){3}}
\put(140,12){\vector(0,-1){3}}
\put(160,267){\vector(0,-1){21}}
\put(164,249){\vector(0,-1){21}}
\put(160,231){\vector(0,-1){21}}
\put(164,213){\vector(0,-1){21}}
\put(160,195){\vector(0,-1){21}}
\put(164,177){\vector(0,-1){21}}
\put(160,159){\vector(0,-1){21}}
\put(164,141){\vector(0,-1){21}}
\put(160,123){\vector(0,-1){21}}
\put(164,105){\vector(0,-1){21}}
\put(160,87){\vector(0,-1){21}}
\put(164,69){\vector(0,-1){21}}
\put(160,51){\vector(0,-1){21}}
\put(164,33){\vector(0,-1){21}}
\put(184,267){\vector(0,-1){3}}
\put(184,249){\vector(0,-1){3}}
\put(184,231){\vector(0,-1){3}}
\put(184,213){\vector(0,-1){3}}
\put(184,195){\vector(0,-1){3}}
\put(184,177){\vector(0,-1){3}}
\put(184,159){\vector(0,-1){3}}
\put(184,141){\vector(0,-1){3}}
\put(184,123){\vector(0,-1){3}}
\put(184,105){\vector(0,-1){3}}
\put(184,87){\vector(0,-1){3}}
\put(184,69){\vector(0,-1){3}}
\put(184,51){\vector(0,-1){3}}
\put(184,33){\vector(0,-1){3}}
\put(184,15){\vector(0,-1){3}}
\end{picture}
\end{picture}
\end{center}
\caption{The traditional drawing of $P_5$ network}
\label{merge3}
\end{figure}
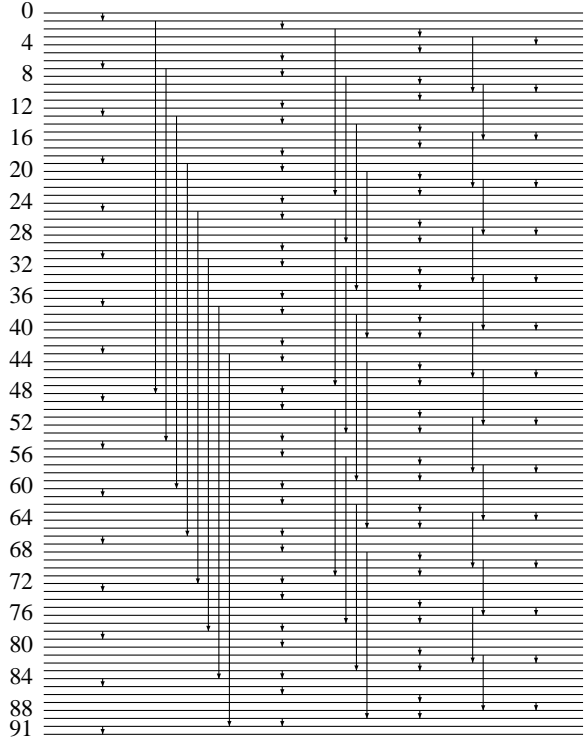

\begin{fact}
$delay(P_k) = 3$ for $ k \ge 3$. \qed
\end{fact}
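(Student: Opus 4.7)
The plan is to bound $delay(P_k)$ above and below by $3$ via a case analysis on the position of each register within a ``block'' of $b_k$ consecutive registers. Every register index $r \in \{0, \ldots, N_k - 1\}$ can be written uniquely as $r = b_k i + s$ with $0 \le s < b_k$, where $i = n_k$ is permitted only for $s \in \{0, 1\}$ since $N_k = n_k b_k + 2$.

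First I would read off from Definition \ref{defMk} which offsets $s$ are touched by each stage of $P_k$: stage $S_1$ touches exactly offsets $\{0, 1\}$; the long-comparator stage $S_{2j}$ touches offset $j$ at its tops and offset $b_k - j + 1$ at its bottoms, the latter wrapping to offset $0$ of the next block when $j = 1$; and the short-comparator stage $S_{2j+1}$ touches the pair $\{j, j+1\}$ in its short-top comparators and $\{b_k - j, b_k - j + 1\}$ in its short-bottom comparators, with the same block-wrap for $j = 1$. From this inventory it follows by inspection that any register with offset $s \in \{2, \ldots, b_k/2\}$ appears only in the three consecutive stages $S_{2s-1}, S_{2s}, S_{2s+1}$; by symmetry, with $j := b_k - s + 1$, any register with $s \in \{b_k/2 + 1, \ldots, b_k - 1\}$ appears only in $S_{2j-1}, S_{2j}, S_{2j+1}$; and any register with $s \in \{0, 1\}$ is confined by the wrap-around observations above to $S_1, S_2, S_3$. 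In every case the span $\lst(r, P_k) - \fst(r, P_k) + 1$ is at most three.

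For the matching lower bound I would exhibit a concrete register of span three, for instance $r = b_k$ (so $s = 0, i = 1$): it lies in $S_1$ as the top of $[b_k : b_k + 1]$ and in $S_3$ as the bottom of the wrapped short-bottom comparator $[b_k - 1 : b_k]$ taken at $i' = 0, j = 1$.

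The main obstacle will be the bookkeeping required to certify that no additional stages touch each offset. The delicate points are the two block-wrap cases at $j = 1$ in $S_2$ and $S_3$, the degenerate stage $S_{2k-3}$ in which the short-top and short-bottom of $S_{2j+1}$ coincide at $j = b_k/2$, and the two boundary registers $r = 0$ and $r = N_k - 1$ which turn out to appear only in $S_1$ and therefore do not increase the maximum.
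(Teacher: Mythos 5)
Your plan is correct: the offset-by-offset inventory of which stages touch which residues modulo $b_k$, including the wrap-around at $j=1$, the degenerate coincidence at $j=b_k/2$, and the two boundary registers, is exactly the verification needed, and your witness register $r=b_k$ does realize span $3$. The paper itself states this Fact without proof, treating it as immediate from the construction, so your argument is just the direct inspection the paper leaves implicit.
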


Let $A = S_1,S_2,\ldots,S_d$ and $A' = S'_1,S'_2,\ldots,S'_{d'}$ be
$N$-input comparator networks such that for each $i$, $1\le i \le\min(d,d')$,
$regs(S_i) \cap regs(S'_i) = \emptyset$.  Then
$A\cup A'$ is defined to be
$(S_1\cup S'_1),(S_2\cup S'_2),\ldots,(S_{\max(d,d')}\cup S'_{\max(d,d')})$,
where empty stages are added at the end of the network of smaller depth.

For any comparator network $A=S_1,\ldots,S_d$ and
$D = delay(A)$, let us define a network $B=T_1,\ldots,T_D$ to be a {\em 
compact form} of $A$, where $T_q=\bigcup\braced{S_{q+pD}: 0\le p\le(d-q)/D}$, 
$1\le q\le D$. Observe that $B$ is correctly defined due to the delay
of $A$.  Moreover, $depth(B)=delay(B)=delay(A)$.

\begin{definition}
For $k\ge 3$ let $M_k$ denote the compact form of $P_k$ with the first
and the last registers deleted. That is, the network $M_k=T^k_1,T^k_2,T^k_3$ 
is using the set of registers numbered $\{1,2, \ldots, N_k\}$, where 
$N_k=(2^{k-1}-1)\cdot 2(k-2)$, and $T^k_j=\{S_{k,j+3i}: 0\le 
i\le\frac{2k-j-3}{3}\}$, $j=1,2,3$.
\end{definition}
It is not necessary to delete the first and the last registers of $P_k$ but 
this will simplify proofs a little bit in the next section. The network $M_5$ 
is given in Fig. \ref{merge3p}.
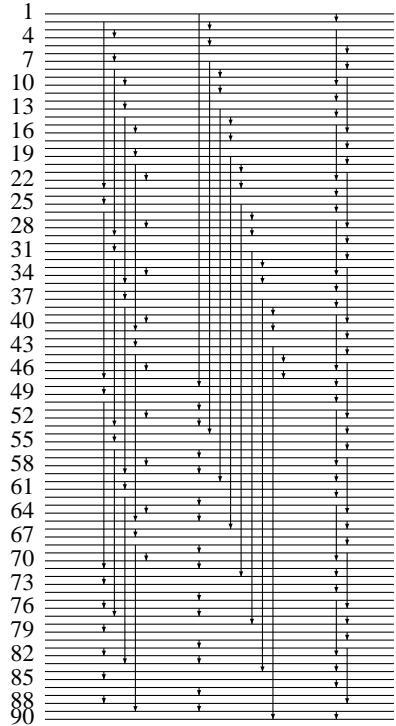
\begin{figure}[ht]
\begin{center}
\setcounter{reg}{88}
\begin{picture}(132,279)
\multiput(0,3)(0,18){15}{
	\put(-4,4){\makebox(0,0)[br]{\footnotesize\arabic{reg}}} 
	\addtocounter{reg}{-3}
	\put(-4,13){\makebox(0,0)[br]{\footnotesize\arabic{reg}}} 
	\put(0,0){\line(1,0){132}} 
	\put(0,3){\line(1,0){132}} 
	\put(0,6){\line(1,0){132}} 
	\put(0,9){\line(1,0){132}} 
	\put(0,12){\line(1,0){132}} 
	\put(0,15){\line(1,0){132}} 
	\addtocounter{reg}{-3}}
\put(-4,1){\makebox(0,0)[br]{\footnotesize 90}}
\setlength{\arrowlength}{2pt}
\begin{picture}(132,279)(0,3)
\setlength{\arrowlength}{2pt}
\put(20,270){\vector(0,-1){63}}
\put(24,267){\vector(0,-1){3}}
\put(24,258){\vector(0,-1){3}}
\put(24,252){\vector(0,-1){63}}
\put(28,249){\vector(0,-1){3}}
\put(28,240){\vector(0,-1){3}}
\put(28,234){\vector(0,-1){63}}
\put(32,231){\vector(0,-1){3}}
\put(32,222){\vector(0,-1){3}}
\put(32,216){\vector(0,-1){63}}
\put(36,213){\vector(0,-1){3}}
\put(20,204){\vector(0,-1){3}}
\put(20,198){\vector(0,-1){63}}
\put(36,195){\vector(0,-1){3}}
\put(24,186){\vector(0,-1){3}}
\put(24,180){\vector(0,-1){63}}
\put(36,177){\vector(0,-1){3}}
\put(28,168){\vector(0,-1){3}}
\put(28,162){\vector(0,-1){63}}
\put(36,159){\vector(0,-1){3}}
\put(32,150){\vector(0,-1){3}}
\put(32,144){\vector(0,-1){63}}
\put(36,141){\vector(0,-1){3}}
\put(20,132){\vector(0,-1){3}}
\put(20,126){\vector(0,-1){63}}
\put(36,123){\vector(0,-1){3}}
\put(24,114){\vector(0,-1){3}}
\put(24,108){\vector(0,-1){63}}
\put(36,105){\vector(0,-1){3}}
\put(28,96){\vector(0,-1){3}}
\put(28,90){\vector(0,-1){63}}
\put(36,87){\vector(0,-1){3}}
\put(32,78){\vector(0,-1){3}}
\put(32,72){\vector(0,-1){63}}
\put(36,69){\vector(0,-1){3}}
\put(20,60){\vector(0,-1){3}}
\put(20,51){\vector(0,-1){3}}
\put(20,42){\vector(0,-1){3}}
\put(20,33){\vector(0,-1){3}}
\put(20,24){\vector(0,-1){3}}
\put(20,15){\vector(0,-1){3}}
\put(56,273){\vector(0,-1){141}}
\put(60,270){\vector(0,-1){3}}
\put(60,264){\vector(0,-1){3}}
\put(60,255){\vector(0,-1){141}}
\put(64,252){\vector(0,-1){3}}
\put(64,246){\vector(0,-1){3}}
\put(64,237){\vector(0,-1){141}}
\put(68,234){\vector(0,-1){3}}
\put(68,228){\vector(0,-1){3}}
\put(68,219){\vector(0,-1){141}}
\put(72,216){\vector(0,-1){3}}
\put(72,210){\vector(0,-1){3}}
\put(72,201){\vector(0,-1){141}}
\put(76,198){\vector(0,-1){3}}
\put(76,192){\vector(0,-1){3}}
\put(76,183){\vector(0,-1){141}}
\put(80,180){\vector(0,-1){3}}
\put(80,174){\vector(0,-1){3}}
\put(80,165){\vector(0,-1){141}}
\put(84,162){\vector(0,-1){3}}
\put(84,156){\vector(0,-1){3}}
\put(84,147){\vector(0,-1){141}}
\put(88,144){\vector(0,-1){3}}
\put(88,138){\vector(0,-1){3}}
\put(56,126){\vector(0,-1){3}}
\put(56,120){\vector(0,-1){3}}
\put(56,108){\vector(0,-1){3}}
\put(56,102){\vector(0,-1){3}}
\put(56,90){\vector(0,-1){3}}
\put(56,84){\vector(0,-1){3}}
\put(56,72){\vector(0,-1){3}}
\put(56,66){\vector(0,-1){3}}
\put(56,54){\vector(0,-1){3}}
\put(56,48){\vector(0,-1){3}}
\put(56,36){\vector(0,-1){3}}
\put(56,30){\vector(0,-1){3}}
\put(56,18){\vector(0,-1){3}}
\put(56,12){\vector(0,-1){3}}
\put(108,273){\vector(0,-1){3}}
\put(108,267){\vector(0,-1){21}}
\put(112,261){\vector(0,-1){3}}
\put(112,255){\vector(0,-1){3}}
\put(112,249){\vector(0,-1){21}}
\put(108,243){\vector(0,-1){3}}
\put(108,237){\vector(0,-1){3}}
\put(108,231){\vector(0,-1){21}}
\put(112,225){\vector(0,-1){3}}
\put(112,219){\vector(0,-1){3}}
\put(112,213){\vector(0,-1){21}}
\put(108,207){\vector(0,-1){3}}
\put(108,201){\vector(0,-1){3}}
\put(108,195){\vector(0,-1){21}}
\put(112,189){\vector(0,-1){3}}
\put(112,183){\vector(0,-1){3}}
\put(112,177){\vector(0,-1){21}}
\put(108,171){\vector(0,-1){3}}
\put(108,165){\vector(0,-1){3}}
\put(108,159){\vector(0,-1){21}}
\put(112,153){\vector(0,-1){3}}
\put(112,147){\vector(0,-1){3}}
\put(112,141){\vector(0,-1){21}}
\put(108,135){\vector(0,-1){3}}
\put(108,129){\vector(0,-1){3}}
\put(108,123){\vector(0,-1){21}}
\put(112,117){\vector(0,-1){3}}
\put(112,111){\vector(0,-1){3}}
\put(112,105){\vector(0,-1){21}}
\put(108,99){\vector(0,-1){3}}
\put(108,93){\vector(0,-1){3}}
\put(108,87){\vector(0,-1){21}}
\put(112,81){\vector(0,-1){3}}
\put(112,75){\vector(0,-1){3}}
\put(112,69){\vector(0,-1){21}}
\put(108,63){\vector(0,-1){3}}
\put(108,57){\vector(0,-1){3}}
\put(108,51){\vector(0,-1){21}}
\put(112,45){\vector(0,-1){3}}
\put(112,39){\vector(0,-1){3}}
\put(112,33){\vector(0,-1){21}}
\put(108,27){\vector(0,-1){3}}
\put(108,21){\vector(0,-1){3}}
\put(108,9){\vector(0,-1){3}}
\end{picture}
\end{picture}
\end{center}
\caption{The $M_5$ network}
\label{merge3p}
\end{figure}
\begin{theorem} \label{3merger} There exists a family of 3-periodic
  comparator networks $M_k$, $k\ge 3$, such that each $M_k$ is a
  $2k-5$-pass merger of two sorted sequences given in odd and even
  registers, respectively. The running time of $M_k$ is $6k-15\le 6\log
  N_k$, where $N_k=(2^k-2)(k-2)$ is the number of registers in $M_k$.
\end{theorem}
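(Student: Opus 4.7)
The plan is to reduce the claim in three movements. First I would establish that the auxiliary network $P_k$ of Definition~\ref{defMk} is itself a one-pass merger, inheriting this property from $CW_k$. Each register of $CW_k$ is ``blown up'' in $P_k$ into a column of $b_k=2(k-2)$ consecutive registers: the long comparators of $CW_k$'s $(j+1)$-th stage become the long comparators of $P_k$'s stage $S_{k,2j}$, with endpoints sliding by $j$ and $b_k-j+1$ within the columns so that the delay-3 budget is respected, while the odd-indexed stages $S_{k,2j+1}$ supply short intra-column comparators that forward each value from one endpoint to the next, aligning the data for the upcoming long comparator. Invoking the 0--1 principle, one constructs a stage-by-stage simulation showing that for any two sorted 0--1 inputs placed on the appropriate rows, the configuration of $P_k$ after each stage matches that of $CW_k$ on the corresponding stage; the merger property of $CW_k$ from the stated Proposition then transfers to $P_k$.

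Second, and this is the heart of the argument, I would show that $2k-5$ passes of $M_k$ suffice to complete the merge. By the Fact, $delay(P_k)=3$, so each register of $P_k$ is active in a contiguous window of at most three consecutive stages. The compact form bundles stages $S_{k,1}, S_{k,4}, S_{k,7},\ldots$ into $T^k_1$, and similarly for $T^k_2$ and $T^k_3$; these bundlings are conflict-free exactly because of the delay property. Although a single pass of $M_k$ executes the bundled stages in a ``phase-shifted'' order with respect to $P_k$, each individual register sees its comparators in the correct cyclic order within any three consecutive steps of the periodic execution. Using the 0--1 principle, I would track the motion of 0-values and 1-values pass by pass; an induction on the number of passes, complemented by a direct analysis of the first, boundary pass, shows that after $2k-5$ passes every register holds the correct output value, since the longest chain of dependencies along which a value must travel has length $2k-3$ in $P_k$, and is absorbed into $2k-5$ passes of $M_k$ once the initial phase shift is accounted for.

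The running-time bound is then arithmetic: each pass of $M_k$ uses $3$ stages, giving $3(2k-5)=6k-15$ stages; since $N_k=(2^k-2)(k-2)\ge 2^{k-1}$ for $k\ge 3$ we have $\log N_k\ge k-1$, so $6k-15\le 6(k-1)\le 6\log N_k$.

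The main obstacle is clearly the second step. The periodic execution of $M_k$ genuinely reorders the comparators of $P_k$, and one cannot directly equate one pass of $M_k$ with one pass of $P_k$. Bridging this gap requires tracking register configurations across passes and showing that the phase shift introduced by the compact form does not accumulate but is corrected after a bounded number of passes. This appears to be precisely the ``new proof technique'' alluded to in the abstract, and carrying it out will require a careful case analysis---keyed on $\fst(r,P_k)\bmod 3$---of how each register's values propagate through the stages $T^k_1,T^k_2,T^k_3$.
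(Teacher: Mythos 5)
Your plan has a genuine gap at its central step, and the gap is not merely technical. First, the premise of your step one is off: with the input format of the theorem (two sorted sequences interleaved over \emph{all} $N_k$ registers of $M_k$), the network $P_k$ is \emph{not} a one-pass merger. Viewing the registers as an $n_k\times b_k$ matrix, a 2-sorted input makes every column sorted, but a single pass of $P_k$ only sends one ``wavefront'' of data from the outermost column pair $(C_1,C_{b_k})$ inward through the decreasing sequence of offsets $h_1>h_2>\cdots>h_{k-2}$; the data initially sitting in the inner columns never sees the early, large-offset comparisons. What $P_k$ merges in one pass is the special input in which each blown-up group of registers carries a copy of one $CW_k$ value --- which is not the input of the theorem. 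Consequently your step two cannot be a ``phase-shift correction'' argument reducing $2k-5$ passes of $M_k$ to one pass of $P_k$: the required $6k-15$ stages are roughly three times $depth(P_k)=2k-3$, precisely because the computation is a \emph{pipeline} of $k-2$ separate column-pair merges, not a single delayed simulation. The observation that each register sees its three comparators in the correct cyclic order does not address the real difficulty, namely that the comparators of $T^k_1$ belonging to late stages of $P_k$ act prematurely on unprocessed data, and one must prove these extra comparisons are harmless and that the pipeline drains in time. You name this as ``the main obstacle'' but supply no mechanism to overcome it; the dependency-chain count you invoke would predict roughly $\lceil(2k-3)/3\rceil$ passes, not $2k-5$, which signals that the intended argument does not fit the actual dynamics.

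The paper's route is entirely different and is worth contrasting. After the 0--1 reduction it proves (Lemma~\ref{l35}) that all $b_k$ columns remain sorted after every stage, so the whole state collapses to the vector $\overline{c}=(c_1,\ldots,c_{b_k})$ of ones-counts per column. Each stage then acts on $\overline{c}$ by explicit $\min/\max$ maps ($cyc^k$, $dec^k_j$, $mov^k_j$), and the theorem reduces (Lemma~\ref{l3}) to showing that $(Q^k_3\circ Q^k_2\circ Q^k_1)^{2k-5}$ turns every 2-flat integer vector into a flat one. That is proved first for \emph{balanced} inputs by tracking intervals containing each reduced coordinate through the pipeline (Lemmas~\ref{l6} and~\ref{l7}), and then for general inputs by sandwiching $\overline{c}$ between balanced bounds of heights differing by one and using monotonicity (Theorem~\ref{thm-19}). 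None of these ingredients appears in your proposal, and without something playing their role the key step of your plan does not go through.
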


The proof is based on the observation that $M_k$ merges $k-2$ pairs of
sorted subsequences, one after another, in pipeline fashion. Details
are given in the next section. 

In a similar way, we can convert $CW_k$ into a 4-periodic merging
network. Assume that $k$ is even. We replace each register (except the
first and the last ones) with a sequence of $(k-2)/2$ consecutive
registers, move the endpoints of long comparators in such a way that
exactly two long comparators start or end at each new register and
insert after each pair of stages containing long comparators a stage
with short comparators joining the endpoints of those long
comparators. The result is depicted in Fig. \ref{merge4}.

\begin{figure}[ht]
\begin{center}
\setcounter{reg}{120}
\begin{picture}(264,381) 
\multiput(0,9)(0,12){31}{
	\put(-4,7){\makebox(0,0)[br]{\footnotesize\arabic{reg}}} 
	\put(0,0){\line(1,0){264}} 
	\put(0,3){\line(1,0){264}} 
	\put(0,6){\line(1,0){264}} 
	\put(0,9){\line(1,0){264}} 
	\addtocounter{reg}{-4}}
\put(0,3){\line(1,0){264}} 
\put(0,6){\line(1,0){264}} 
\put(-4,2){\makebox(0,0)[br]{\footnotesize 125}}
\setlength{\arrowlength}{2pt}
\begin{picture}(264,381)
\put(20,378){\vector(0,-1){3}}
\put(20,366){\vector(0,-1){3}}
\put(20,354){\vector(0,-1){3}}
\put(20,342){\vector(0,-1){3}}
\put(20,330){\vector(0,-1){3}}
\put(20,318){\vector(0,-1){3}}
\put(20,306){\vector(0,-1){3}}
\put(20,294){\vector(0,-1){3}}
\put(20,282){\vector(0,-1){3}}
\put(20,270){\vector(0,-1){3}}
\put(20,258){\vector(0,-1){3}}
\put(20,246){\vector(0,-1){3}}
\put(20,234){\vector(0,-1){3}}
\put(20,222){\vector(0,-1){3}}
\put(20,210){\vector(0,-1){3}}
\put(20,198){\vector(0,-1){3}}
\put(20,186){\vector(0,-1){3}}
\put(20,174){\vector(0,-1){3}}
\put(20,162){\vector(0,-1){3}}
\put(20,150){\vector(0,-1){3}}
\put(20,138){\vector(0,-1){3}}
\put(20,126){\vector(0,-1){3}}
\put(20,114){\vector(0,-1){3}}
\put(20,102){\vector(0,-1){3}}
\put(20,90){\vector(0,-1){3}}
\put(20,78){\vector(0,-1){3}}
\put(20,66){\vector(0,-1){3}}
\put(20,54){\vector(0,-1){3}}
\put(20,42){\vector(0,-1){3}}
\put(20,30){\vector(0,-1){3}}
\put(20,18){\vector(0,-1){3}}
\put(20,6){\vector(0,-1){3}}
\put(40,375){\vector(0,-1){189}}
\put(44,363){\vector(0,-1){189}}
\put(48,351){\vector(0,-1){189}}
\put(52,339){\vector(0,-1){189}}
\put(56,327){\vector(0,-1){189}}
\put(60,315){\vector(0,-1){189}}
\put(64,303){\vector(0,-1){189}}
\put(68,291){\vector(0,-1){189}}
\put(72,279){\vector(0,-1){189}}
\put(76,267){\vector(0,-1){189}}
\put(80,255){\vector(0,-1){189}}
\put(84,243){\vector(0,-1){189}}
\put(88,231){\vector(0,-1){189}}
\put(92,219){\vector(0,-1){189}}
\put(96,207){\vector(0,-1){189}}
\put(100,195){\vector(0,-1){189}}
\put(120,375){\vector(0,-1){93}}
\put(124,363){\vector(0,-1){93}}
\put(128,351){\vector(0,-1){93}}
\put(132,339){\vector(0,-1){93}}
\put(136,327){\vector(0,-1){93}}
\put(140,315){\vector(0,-1){93}}
\put(144,303){\vector(0,-1){93}}
\put(148,291){\vector(0,-1){93}}
\put(120,279){\vector(0,-1){93}}
\put(124,267){\vector(0,-1){93}}
\put(128,255){\vector(0,-1){93}}
\put(132,243){\vector(0,-1){93}}
\put(136,231){\vector(0,-1){93}}
\put(140,219){\vector(0,-1){93}}
\put(144,207){\vector(0,-1){93}}
\put(148,195){\vector(0,-1){93}}
\put(120,183){\vector(0,-1){93}}
\put(124,171){\vector(0,-1){93}}
\put(128,159){\vector(0,-1){93}}
\put(132,147){\vector(0,-1){93}}
\put(136,135){\vector(0,-1){93}}
\put(140,123){\vector(0,-1){93}}
\put(144,111){\vector(0,-1){93}}
\put(148,99){\vector(0,-1){93}}
\put(168,375){\vector(0,-1){3}}
\put(168,369){\vector(0,-1){3}}
\put(168,363){\vector(0,-1){3}}
\put(168,357){\vector(0,-1){3}}
\put(168,351){\vector(0,-1){3}}
\put(168,345){\vector(0,-1){3}}
\put(168,339){\vector(0,-1){3}}
\put(168,333){\vector(0,-1){3}}
\put(168,327){\vector(0,-1){3}}
\put(168,321){\vector(0,-1){3}}
\put(168,315){\vector(0,-1){3}}
\put(168,309){\vector(0,-1){3}}
\put(168,303){\vector(0,-1){3}}
\put(168,297){\vector(0,-1){3}}
\put(168,291){\vector(0,-1){3}}
\put(168,285){\vector(0,-1){3}}
\put(168,279){\vector(0,-1){3}}
\put(168,273){\vector(0,-1){3}}
\put(168,267){\vector(0,-1){3}}
\put(168,261){\vector(0,-1){3}}
\put(168,255){\vector(0,-1){3}}
\put(168,249){\vector(0,-1){3}}
\put(168,243){\vector(0,-1){3}}
\put(168,237){\vector(0,-1){3}}
\put(168,231){\vector(0,-1){3}}
\put(168,225){\vector(0,-1){3}}
\put(168,219){\vector(0,-1){3}}
\put(168,213){\vector(0,-1){3}}
\put(168,207){\vector(0,-1){3}}
\put(168,201){\vector(0,-1){3}}
\put(168,195){\vector(0,-1){3}}
\put(168,189){\vector(0,-1){3}}
\put(168,183){\vector(0,-1){3}}
\put(168,177){\vector(0,-1){3}}
\put(168,171){\vector(0,-1){3}}
\put(168,165){\vector(0,-1){3}}
\put(168,159){\vector(0,-1){3}}
\put(168,153){\vector(0,-1){3}}
\put(168,147){\vector(0,-1){3}}
\put(168,141){\vector(0,-1){3}}
\put(168,135){\vector(0,-1){3}}
\put(168,129){\vector(0,-1){3}}
\put(168,123){\vector(0,-1){3}}
\put(168,117){\vector(0,-1){3}}
\put(168,111){\vector(0,-1){3}}
\put(168,105){\vector(0,-1){3}}
\put(168,99){\vector(0,-1){3}}
\put(168,93){\vector(0,-1){3}}
\put(168,87){\vector(0,-1){3}}
\put(168,81){\vector(0,-1){3}}
\put(168,75){\vector(0,-1){3}}
\put(168,69){\vector(0,-1){3}}
\put(168,63){\vector(0,-1){3}}
\put(168,57){\vector(0,-1){3}}
\put(168,51){\vector(0,-1){3}}
\put(168,45){\vector(0,-1){3}}
\put(168,39){\vector(0,-1){3}}
\put(168,33){\vector(0,-1){3}}
\put(168,27){\vector(0,-1){3}}
\put(168,21){\vector(0,-1){3}}
\put(168,15){\vector(0,-1){3}}
\put(168,9){\vector(0,-1){3}}
\put(188,372){\vector(0,-1){39}}
\put(192,360){\vector(0,-1){39}}
\put(196,348){\vector(0,-1){39}}
\put(200,336){\vector(0,-1){39}}
\put(188,324){\vector(0,-1){39}}
\put(192,312){\vector(0,-1){39}}
\put(196,300){\vector(0,-1){39}}
\put(200,288){\vector(0,-1){39}}
\put(188,276){\vector(0,-1){39}}
\put(192,264){\vector(0,-1){39}}
\put(196,252){\vector(0,-1){39}}
\put(200,240){\vector(0,-1){39}}
\put(188,228){\vector(0,-1){39}}
\put(192,216){\vector(0,-1){39}}
\put(196,204){\vector(0,-1){39}}
\put(200,192){\vector(0,-1){39}}
\put(188,180){\vector(0,-1){39}}
\put(192,168){\vector(0,-1){39}}
\put(196,156){\vector(0,-1){39}}
\put(200,144){\vector(0,-1){39}}
\put(188,132){\vector(0,-1){39}}
\put(192,120){\vector(0,-1){39}}
\put(196,108){\vector(0,-1){39}}
\put(200,96){\vector(0,-1){39}}
\put(188,84){\vector(0,-1){39}}
\put(192,72){\vector(0,-1){39}}
\put(196,60){\vector(0,-1){39}}
\put(200,48){\vector(0,-1){39}}
\put(220,372){\vector(0,-1){15}}
\put(224,360){\vector(0,-1){15}}
\put(220,348){\vector(0,-1){15}}
\put(224,336){\vector(0,-1){15}}
\put(220,324){\vector(0,-1){15}}
\put(224,312){\vector(0,-1){15}}
\put(220,300){\vector(0,-1){15}}
\put(224,288){\vector(0,-1){15}}
\put(220,276){\vector(0,-1){15}}
\put(224,264){\vector(0,-1){15}}
\put(220,252){\vector(0,-1){15}}
\put(224,240){\vector(0,-1){15}}
\put(220,228){\vector(0,-1){15}}
\put(224,216){\vector(0,-1){15}}
\put(220,204){\vector(0,-1){15}}
\put(224,192){\vector(0,-1){15}}
\put(220,180){\vector(0,-1){15}}
\put(224,168){\vector(0,-1){15}}
\put(220,156){\vector(0,-1){15}}
\put(224,144){\vector(0,-1){15}}
\put(220,132){\vector(0,-1){15}}
\put(224,120){\vector(0,-1){15}}
\put(220,108){\vector(0,-1){15}}
\put(224,96){\vector(0,-1){15}}
\put(220,84){\vector(0,-1){15}}
\put(224,72){\vector(0,-1){15}}
\put(220,60){\vector(0,-1){15}}
\put(224,48){\vector(0,-1){15}}
\put(220,36){\vector(0,-1){15}}
\put(224,24){\vector(0,-1){15}}
\put(244,372){\vector(0,-1){3}}
\put(244,360){\vector(0,-1){3}}
\put(244,348){\vector(0,-1){3}}
\put(244,336){\vector(0,-1){3}}
\put(244,324){\vector(0,-1){3}}
\put(244,312){\vector(0,-1){3}}
\put(244,300){\vector(0,-1){3}}
\put(244,288){\vector(0,-1){3}}
\put(244,276){\vector(0,-1){3}}
\put(244,264){\vector(0,-1){3}}
\put(244,252){\vector(0,-1){3}}
\put(244,240){\vector(0,-1){3}}
\put(244,228){\vector(0,-1){3}}
\put(244,216){\vector(0,-1){3}}
\put(244,204){\vector(0,-1){3}}
\put(244,192){\vector(0,-1){3}}
\put(244,180){\vector(0,-1){3}}
\put(244,168){\vector(0,-1){3}}
\put(244,156){\vector(0,-1){3}}
\put(244,144){\vector(0,-1){3}}
\put(244,132){\vector(0,-1){3}}
\put(244,120){\vector(0,-1){3}}
\put(244,108){\vector(0,-1){3}}
\put(244,96){\vector(0,-1){3}}
\put(244,84){\vector(0,-1){3}}
\put(244,72){\vector(0,-1){3}}
\put(244,60){\vector(0,-1){3}}
\put(244,48){\vector(0,-1){3}}
\put(244,36){\vector(0,-1){3}}
\put(244,24){\vector(0,-1){3}}
\put(244,12){\vector(0,-1){3}}
\end{picture}
\end{picture}
\end{center}
\caption{The $P'_6$ network}
\label{merge4}
\end{figure}
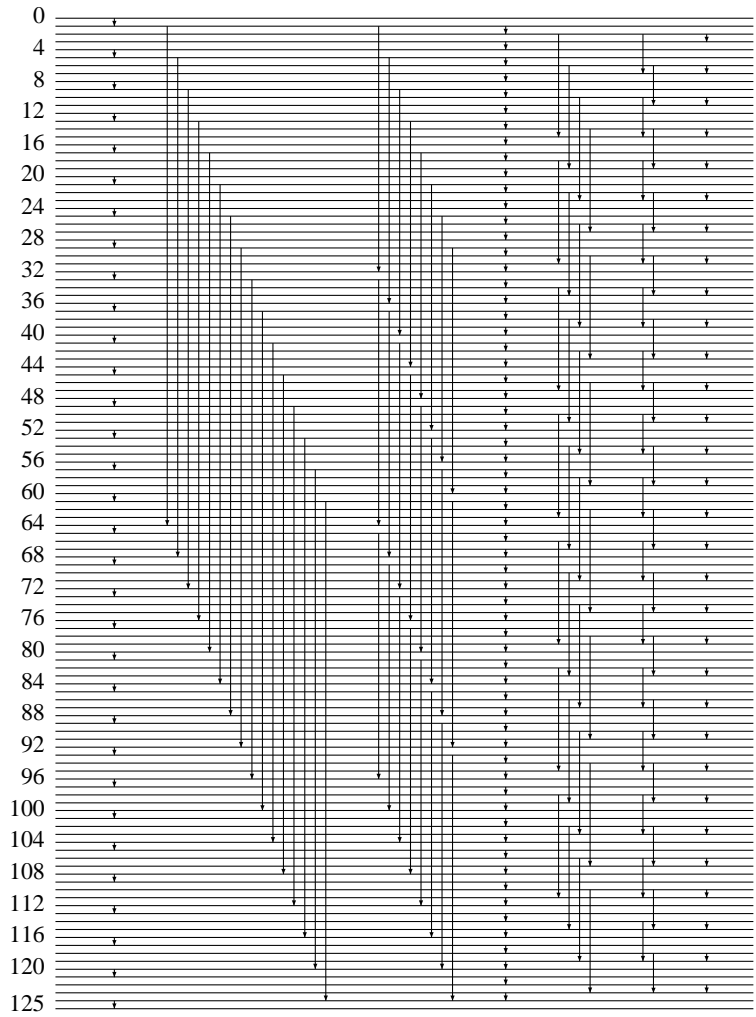

\section{Proof of Theorem \ref{3merger}}  

The first observation we would like to make is that we can consider
inputs consisting of 0's and 1's only. The well-known Zero-One Principle
states that any comparator network that sorts 0-1 input sequences
correctly sorts also arbitrary input sequences \cite{k}.  In the similar
way, we can prove that the same property holds also for merging:
\begin{proposition}
If a comparator network merges any two 0-1 sorted sequences, then it
correctly merges any two sorted sequences. \qed
\end{proposition}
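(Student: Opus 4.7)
The plan is to mimic the standard proof of the Zero-One Principle for sorting, adapting it so that the 0-1 inputs we produce are still a valid instance of the merging problem, i.e.\ each of the two sequences is individually sorted.

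First I would recall (or reprove in one line) the monotonicity lemma for comparator networks: for every monotone nondecreasing function $f$ and every comparator network $A$, running $A$ on input $(f(x_1),\ldots,f(x_N))$ produces the same output as first running $A$ on $(x_1,\ldots,x_N)$ and then applying $f$ coordinatewise. This follows by induction on the number of stages from the identities $\min(f(a),f(b))=f(\min(a,b))$ and $\max(f(a),f(b))=f(\max(a,b))$, which hold because $f$ is monotone.

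Next I would argue by contraposition. Suppose $A$ fails to merge some pair of sorted sequences $X=(x_1,\ldots,x_{N/2})$ and $Y=(y_1,\ldots,y_{N/2})$ (placed in the odd and even registers, say). Then the output is not sorted, so there exist two adjacent output positions holding values $a>b$ in the wrong order. Pick any threshold $t$ with $b<t\le a$ and define the monotone threshold function $f_t(z)=1$ if $z\ge t$ and $0$ otherwise. Applying $f_t$ coordinatewise to $X$ produces a sorted 0-1 sequence (since $X$ was sorted, the $1$'s form a suffix), and similarly for $Y$; so $(f_t(X),f_t(Y))$ is a legitimate 0-1 merging input. By the monotonicity lemma, the output of $A$ on this 0-1 input agrees with $f_t$ applied coordinatewise to the output of $A$ on $(X,Y)$. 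At the two offending adjacent positions this gives the pair $(f_t(a),f_t(b))=(1,0)$, so the 0-1 output is not sorted, contradicting the hypothesis that $A$ merges every pair of sorted 0-1 sequences.

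I do not anticipate a real obstacle: the only place one has to be careful is to verify that $f_t$ applied to a sorted sequence really yields a sorted 0-1 sequence (so the merging hypothesis, as opposed to a general sorting hypothesis, can be invoked), and this is immediate from monotonicity of $f_t$. The whole argument is essentially the classical one, specialized to the merging setting.
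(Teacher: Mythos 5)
Your proof is correct and is exactly the argument the paper has in mind: the paper states the proposition without proof, remarking only that it follows ``in the similar way'' as the classical Zero-One Principle, and your write-up supplies precisely that adaptation (monotone threshold functions plus the observation that thresholding preserves sortedness of each input half, so the 0-1 hypothesis is applicable). No gaps.
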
 

Therefore we can analyze computations of the network $M_k$, $k\ge 3$, by
describing each state of registers as a 0-1 sequence $\overline{x}=(x_1,
\ldots, x_{N_k})$, where $x_i$ represents the content of register
$i$. If $\overline{x}$ is an input sequence for $2k-5$ passes of $M_k$,
then by $\overline{x}^{(i)}$ we denote the content of registers after $i$
passes of $M_k$, $i=0,\ldots,2k-5$,, that is,
$\overline{x}^{(0)} = \overline{x}$ and $\overline{x}^{(i+1)} =
M_k(\overline{x}^{(i)})$. Since $M_k$ consists of three stages $T^k_1$,
$T^k_2$ and $T^k_3$, we extend the notation to describe the output of each
stage: $\overline{x}^{(i,0)} = \overline{x}^{(i)}$ and
$\overline{x}^{(i,j)} = T^k_j(\overline{x}^{(i,j-1)})$, for
$j=1,2,3$. For other values of $j$ we assume that $\overline{x}^{(i,j)}
= \overline{x}^{(i+j~div~3,j~mod~3)}$. We will use this superscript
notation for other equivalent representations of sequence
$\overline{x}$. 

Now let us fix some technical notations and definitions. A 0-1
sequence can be represented as a word over $\Sigma=\{0,1\}$. A
non-decreasing (also called {\em sorted}) 0-1 sequence has a form of
$0^*1^*$ and can be equivalently represented by the number of ones (or
zeros) in it. For any $x\in\Sigma^*$ let $ones(x)$ denote the number
of $1$ in $x$.  If $x\in\Sigma^n$ then $x_i$, $1\le i\le n$, denotes
the $i$-th letter of $x$ and $x_A$, $A=\{i_1,\ldots,i_m\}, 1\le
i_1<\ldots,<i_m\le n$ denotes the word $x_{i_1}\ldots x_{i_m}$. We say
that a 0-1 sequence $\overline{x}=(x_1,\ldots,x_{N_k})$ is {\em
  2-sorted} if both $(x_1,x_3,\ldots,x_{N_k-1})$ and
$(x_2,x_4,\ldots,x_{N_k})$ are sorted.

\subsection{Reduction to Analysis of Columns}

For any $k\ge 3$ let $n_k=2^{k-1}-1$, $b_k=2(k-2)$ (thus $N_k=n_k\cdot
b_k$). The set of registers $Reg_k=\{1,\ldots,N_k\}$ can be analyzed as an
$n_k\times b_k$ matrix with $C_j^k=\{j+ib_k: 0\le i<n_k\}$,
$j=1,\ldots,b_k$, as columns. A content of all registers in the matrix,
that is $x\in\Sigma^{N_k}$, can be equivalently represented by the
sequence of contents of registers in $C_1$, $C_2$, \ldots, $C_{b_k}$,
that is $(x_{C_1},\ldots,x_{C_{b_k}})$.  Since $b_k$ is an even
number, the following fact is obviously true.
\begin{fact} If $x\in\Sigma^{N_k}$ is 2-sorted then each $x_{C_j}$,
  $j=1,\ldots,b_k$, is sorted. \qed
\end{fact}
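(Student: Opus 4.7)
The plan is to give a direct parity argument exploiting that $b_k=2(k-2)$ is even. Fix a column $C_j^k=\{j, j+b_k, j+2b_k, \ldots, j+(n_k-1)b_k\}$. Since $b_k$ is even, every index $j+ib_k\in C_j^k$ satisfies $j+ib_k\equiv j\pmod 2$, so all registers in column $j$ have indices of the same parity as $j$.

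Consequently, if $j$ is odd, then $x_{C_j^k}$ is a subword of the odd-indexed subsequence $(x_1,x_3,\ldots,x_{N_k-1})$; if $j$ is even, it is a subword of the even-indexed subsequence $(x_2,x_4,\ldots,x_{N_k})$. Moreover, the indices $j<j+b_k<\cdots<j+(n_k-1)b_k$ appear in strictly increasing order, so $x_{C_j^k}$ is obtained by restricting the corresponding parity-subsequence to an increasing set of positions. Hence $x_{C_j^k}$ inherits the sorted order from that subsequence.

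By the hypothesis that $x$ is 2-sorted, both the odd-indexed and even-indexed subsequences of $x$ are sorted; thus each of their subwords is sorted, and in particular $x_{C_j^k}$ is sorted for every $j=1,\ldots,b_k$. There is no real obstacle here — the only thing that could go wrong is if $b_k$ were odd, in which case indices in a single column would mix both parities and the conclusion would fail; the evenness of $b_k=2(k-2)$ is exactly what rules this out.
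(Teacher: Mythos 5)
Your argument is correct and is precisely the one the paper intends: the paper gives no explicit proof, merely remarking that the fact is ``obviously true'' because $b_k$ is even, and your parity computation ($j+ib_k\equiv j\pmod 2$, so each column is an increasing subsequence of one of the two sorted parity-subsequences) is exactly the justification behind that remark.
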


That is, the columns are sorted at the beginning of a computation of
$2k-5$ passes of $M_k$. The first lemma we would like to prove is that
columns remain sorted after each stage of the computation. We start
with a following technical fact:

\begin{fact}\label{f33}
Let $A=\{a_1,\ldots,a_n\}$ and $B=\{b_1,\ldots,b_n\}$ be subsets of
$\{1,\ldots,N_k\}$ such that $a_1 < b_1 < a_2 < b_2 < \ldots < a_n <
b_n$. Let $h\ge 0$ and $S_{A,B,h}=\{[a_i:b_{i+h}]: 1\le i \le
n-h\}$. Then for any $x\in\Sigma^{N_k}$ such that $x_A$ and $x_B$
are sorted, the output $y=S_{A,B,h}(x)$ has the following properties:
\begin{enumerate}
\item[(i)] $y_A$ and $y_B$ are sorted.
\item[(ii)] Let $m_1=ones(x_A)$ and $m_2=ones(x_B)$. Then $ones(y_A) =
  \min(m_1,m_2+h)$ and $ones(y_B) = \max(m_1-h,m_2)$.
\end{enumerate}
\end{fact}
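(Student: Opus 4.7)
The plan is to exploit the 0-1 assumption fully and give, for each position, a single uniform closed-form predicate for whether the output bit is $1$. Because $x_A$ is a sorted 0-1 word of length $n$ with $m_1$ ones, we have $x_{a_i}=1 \iff i>n-m_1$, and similarly $x_{b_j}=1\iff j>n-m_2$. The goal is to derive analogous formulas for $y_A$ and $y_B$; sortedness and the count will both fall out at once.

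First I would split the indices into the positions touched by $S_{A,B,h}$ and the untouched ones: $a_i$ with $i\le n-h$ is paired with $b_{i+h}$, while the positions $a_i$ with $i>n-h$ and $b_j$ with $j\le h$ are fixed by the stage. For a paired index, $y_{a_i}=\min(x_{a_i},x_{b_{i+h}})$ equals $1$ exactly when both $i>n-m_1$ and $i+h>n-m_2$, i.e., iff $i>n-\min(m_1,m_2+h)$. For an unpaired index ($i>n-h$), $y_{a_i}=x_{a_i}=1$ iff $i>n-m_1$; but $i>n-h$ already forces $i>n-(m_2+h)$, so the condition collapses to the same formula $i>n-\min(m_1,m_2+h)$. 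Hence for every $i$, $y_{a_i}=1 \iff i>n-\min(m_1,m_2+h)$, which simultaneously says that the $1$s of $y_A$ form a suffix (giving (i) for $A$) and that there are exactly $\min(m_1,m_2+h)$ of them (giving (ii) for $A$).

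The argument for $B$ is dual. For $j>h$, $y_{b_j}=\max(x_{a_{j-h}},x_{b_j})=1$ iff $j-h>n-m_1$ or $j>n-m_2$, i.e., iff $j>n-\max(m_1-h,m_2)$. For $j\le h$, $y_{b_j}=x_{b_j}=1$ iff $j>n-m_2$; here $j\le h\le n-m_1+h$ rules out the $m_1-h$ branch of the $\max$, so the predicate is again $j>n-\max(m_1-h,m_2)$. Combining, $y_{b_j}=1$ iff $j$ lies in the last $\max(m_1-h,m_2)$ positions of $B$, proving (i) and (ii) for $B$.

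The only place that needs care is the seam between the paired and unpaired segments: if one were sloppy, one could end up with $1$s scattered over two disjoint blocks rather than forming a single suffix. The two elementary inequalities $i>n-h \Rightarrow i>n-(m_2+h)$ and $j\le h \Rightarrow j\le n-m_1+h$ are exactly what forces the two local formulas to agree, so that both outputs are genuinely sorted. Everything else is direct arithmetic, and no induction or combinatorial machinery is needed.
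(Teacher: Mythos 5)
Your proof is correct. The one point that genuinely needs checking --- that the ``$1$''-predicates on the paired and unpaired segments glue into a single suffix condition --- is exactly the point you isolate, and the two inequalities you invoke ($i>n-h\Rightarrow i>n-(m_2+h)$ since $m_2\ge 0$, and $j\le h\Rightarrow j\le n-(m_1-h)$ since $m_1\le n$) do close it; the counting step then needs only $0\le\min(m_1,m_2+h)\le n$ and $0\le\max(m_1-h,m_2)\le n$, both immediate.

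Your route is genuinely different from the paper's. The paper treats (i) and (ii) separately: for (i) it verifies $y_{a_i}\le y_{a_{i+1}}$ by a three-way case analysis on whether $i$ is below, at, or above the seam $n-h$, using only monotonicity of $\min$ (so that part does not even need the 0--1 assumption); for (ii) it splits into the cases $m_1\le m_2+h$ and $m_1>m_2+h$, shows in the first case that no comparator exchanges a $1$ with a $0$ (by contradiction), and in the second case characterizes exactly which comparators perform an exchange and counts them. You instead exploit the 0--1 hypothesis from the start, writing each of $x_A$, $x_B$ as a threshold function of its index and pushing the thresholds through $\min$ and $\max$ to get one closed-form predicate per output position; sortedness and the count of ones then drop out of the same formula, with no case split on the relative sizes of $m_1$ and $m_2+h$. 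What the paper's argument buys is a sortedness proof valid for arbitrary ordered values and an explicit description of which comparators fire; what yours buys is uniformity and brevity, at the cost of being tied to 0--1 inputs --- which is all the statement claims, so nothing is lost.
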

\begin{proof}
To prove (i) we show only that $y_{a_i}\le y_{a_{i+1}}$ for
$i=1,\ldots,n-1$. If $1\le i<n-h$ then $y_{a_i} =
\min(x_{a_i},x_{b_{i+h}}) \le \min(x_{a_{i+1}},x_{b_{i+h+1}}) =
y_{a_{i+1}}$ since $\min$ is a non-decreasing function and both $x_A$
and $x_B$ are sorted . If $i=n-h$ then $y_{a_i} =
\min(x_{a_i},x_{b_{i+h}}) \le x_{a_{i+1}} = y_{a_{i+1}}$. For $i>n-h$
we have $y_{a_i} = x_{a_i} \le x_{a_{i+1}} = y_{a_{i+1}}$.

To prove (ii) let $m'_1=\min(m_1,m_2+h)$ and $m'_2=\max(m_1-h,m_2)$.
We consider two cases. If $m_1\le m_2+h$ then $m_1-h\le m_2$ and we
get $m'_1=m_1$ and $m'_2=m_2$. In this case no comparator from
$S_{A,B,h}$ exchanges 0 with 1. To see this assume a.c. that a
comparator $[a_i:b_{i+h}]$ exchanges $x_{a_i}=1$ with
$x_{b_{i+h}}=0$. Then $i>n-m_1$ and $i+h\le n-m_2$ hold because of the
definitions of $m_1$ and $m_2$. It follows that $n-m_1 < n-m_2-h$,
thus $m_1-h > m_2$ --- a contradiction. If $m_1 > m_2+h$ then $m'_1 =
m_2+h$ and $m'_2=m_1-h$. In this case let us observe that a comparator
$[a_i:b_{i+h}]$ exchanges $x_{a_i}=1$ with $x_{b_{i+h}}=0$ if and only
if $m_2+h \le n-i < m_1$. Therefore $ones(y_A) = m_1-(m_1-m_2-h) =
m_2+h$ and $ones(y_B) = m_2+(m_1-m_2-h) = m_1-h$. \qed
\end{proof}
According to the definition of $M_k$, it consists of three stages $T^k_1,
T^k_2, T^k_3$, where $T^k_i = \cup\{S_{k,i+3j}: 0 \le j\le \lfloor
\frac{2k-i-3}{3} \rfloor\}$ (sets $S_j$ are defined in Def.~\ref{defMk}). 
Using the notation from Fact \ref{f33}, the following fact is an easy
consequence of Definition \ref{defMk}.
\begin{fact}\label{f34} Let $L_i=C_i$ and $R_i=C_{b_k-i+1}$ denote the
  corresponding left and the right columns of registers, and
  $h_i=2^{k-i-1}-1$, $i=1,\ldots,\frac{b_k}{2}$. Then
\begin{itemize}
\item[(i)] $regs(S_{k,1})\subseteq L_1\cup R_1$ and $S_{k,1} =
  S_{R_1-\{N_k\},L_1-\{1\},0}$
\item[(ii)] $regs(S_{k,2j})\subseteq L_j\cup R_j$ and
  $S_{k,2j} = S_{L_j,R_j,h_j}$, for any $j=1,\ldots,\frac{b_k}{2}$
\item[(iii)] $regs(S_{k,2j+1})\subseteq L_j\cup L_{j+1}\cup R_{j+1}\cup
  R_j$ and $S_{2j+1} = S_{L_j,L_{j+1},0}\cup S_{R_{j+1},R_j,0}$, for any
  $j=1,\ldots,\frac{b_k}{2}-1$
\item[(iv)] $regs(S_{k,2k-3})\subseteq L_{k-2}\cup R_{k-2}$ and
  $S_{k,2k-3} = S_{L_{k-2},R_{k-2},0}$
\item[(v)] if $(L_j\cup R_j)\cap regs(S_{k,i})\neq\emptyset$ then $2j-1\le
  i\le 2j+1$, for any $j=1,\ldots,\frac{b_k}{2}-1$
\end{itemize} \qed
\end{fact}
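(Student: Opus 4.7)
The plan is to verify each item by direct arithmetic unpacking of Definition~\ref{defMk} in the $n_k \times b_k$ column representation. Every register $m \in \{1, \ldots, N_k\}$ has the unique decomposition $m = b_k i + c$ with $0 \le i < n_k$ and $1 \le c \le b_k$; then $m \in C_c$, and in particular $m \in L_j$ iff $c = j \le b_k/2$ while $m \in R_j$ iff $c = b_k - j + 1 > b_k/2$. Consequently, every comparator produced by Definition~\ref{defMk}, written in the form $[b_k i + c : b_k i' + c']$, immediately identifies the columns of its endpoints by reducing the offset modulo $b_k$.

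For (i), the comparators $[b_k i : b_k i + 1]$ with $i = 1, \ldots, n_k - 1$ join the $i$-th element of $R_1 = C_{b_k}$ to the $(i+1)$-th element of $L_1 = C_1$, excluding exactly the registers $N_k$ and $1$; the interleaving needed to apply Fact~\ref{f33} reduces to $b_k > 1$, which holds since $k \ge 3$. For (ii), the endpoints $b_k i + j$ and $b_k(i + h_j) + (b_k - j + 1)$ with $h_j = 2^{k-j-1}-1$ land in row $i$ of $L_j$ and row $i + h_j$ of $R_j$ respectively, and the required interleaving $j < b_k - j + 1 < j + b_k$ holds whenever $1 \le j \le b_k/2$; matching the shift, this yields exactly $S_{L_j, R_j, h_j}$. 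For (iii), the two families of short comparators in $S_{k, 2j+1}$ separate naturally into $S_{L_j, L_{j+1}, 0}$ (the family $[b_k i + j : b_k i + j + 1]$) and $S_{R_{j+1}, R_j, 0}$ (the family $[b_k i + (b_k - j) : b_k i + (b_k - j + 1)]$), using the identity $b_k - j = b_k - (j+1) + 1$ to map the second family into the column pair $R_{j+1}, R_j$.

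The only item requiring special care is (iv), which corresponds to $j = b_k/2 = k-2$ in the odd-stage formula of Definition~\ref{defMk}. Here $b_k - j = j$ and $b_k - j + 1 = j + 1$, so the two comparator families collapse onto one another and reduce to a single family of comparators between the adjacent columns $L_{k-2} = C_{k-2}$ and $R_{k-2} = C_{k-1}$, matching $S_{L_{k-2}, R_{k-2}, 0}$. Finally, (v) is a disjointness-of-ranges check using (i)--(iv): the offsets $c \in \{j, b_k - j + 1\}$ characterizing $L_j \cup R_j$ appear in the formula for $S_{k, i}$ only when $i \in \{2j-1, 2j, 2j+1\}$, as the even stage $S_{k, 2i'}$ uses offsets $\{i', b_k - i' + 1\}$ and the odd stage $S_{k, 2i'+1}$ uses offsets $\{i', i'+1, b_k - i', b_k - i' + 1\}$.

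There is no conceptual obstacle; the verification is arithmetic bookkeeping on residues modulo $b_k$. The only subtle point worth flagging is the $j = b_k/2$ collapse that forces (iv) to be stated separately rather than as the natural endpoint of (iii), and the accompanying restriction of (v) to $j \le b_k/2 - 1$ so that $L_{j+1}$ and $R_{j+1}$ remain distinct columns.
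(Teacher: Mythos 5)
Your verification is correct, and it is the same direct unpacking of Definition~\ref{defMk} in the column representation that the paper intends: the paper states this fact with no written proof, calling it ``an easy consequence'' of the definition. All the details you supply check out, including the interleaving hypotheses needed to invoke Fact~\ref{f33} and the collapse of the two comparator families at $j=b_k/2$ that makes item (iv) a separate case.
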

\begin{lemma} \label{l35} If the initial content of registers is a
  2-sorted 0-1 sequence $x$ then after each stage of multi-pass
  computation of $M_k=T^k_1,T^k_2,T^k_3$ the content of each column $C_j$,
  $j=1,\ldots,b_k$, is sorted, that is, each $(x^{(p,i)})_{C_j}$ is of
  the form $0^*1^*$, $p=0,\ldots$, $i=1,2,3$.
\end{lemma}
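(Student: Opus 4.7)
The plan is to proceed by induction on the number of stages executed within the multi-pass computation, i.e., on pairs $(p,i)$ with $i \in \{0,1,2,3\}$. The base case $\overline{x}^{(0,0)}=\overline{x}$ is immediate from the fact stated just before the lemma: 2-sortedness of $\overline{x}$, combined with $b_k$ being even, implies that each column $\overline{x}_{C_j}$ is already of the form $0^*1^*$. So I only need to verify that applying any one of the three stages $T^k_1, T^k_2, T^k_3$ to a configuration with sorted columns produces another such configuration.

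For the inductive step, let $y=\overline{x}^{(p,i-1)}$ and assume every column $y_{C_j}$ is sorted. By definition $T^k_i$ is the disjoint union of the component stages $S_{k,m}$ with $m\equiv i\pmod 3$, and Fact 3.4 classifies each such component into one of only a few shapes: the boundary case $S_{k,1}$, the ``diagonal'' stages $S_{k,2j}=S_{L_j,R_j,h_j}$, the middle stage $S_{k,2k-3}=S_{L_{k-2},R_{k-2},0}$, and the ``short horizontal'' stages $S_{k,2j+1}=S_{L_j,L_{j+1},0}\cup S_{R_{j+1},R_j,0}$. Each of these is either of the form $S_{A,B,h}$ with $A$ and $B$ subsets of two distinct columns, or a disjoint union of two such pieces. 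In every case the interleaving hypothesis $a_1<b_1<a_2<b_2<\ldots$ of Fact 3.3 is satisfied because $b_k$ is the uniform row-stride of the matrix. Invoking Fact 3.3(i) on each piece, the restrictions of the output to $A$ and to $B$ remain sorted. Finally, by Fact 3.4(v), components belonging to the same $T^k_i$ act on pairwise disjoint pairs of columns, and columns not touched by $T^k_i$ are unchanged, so the per-piece conclusions combine to give sortedness of every column of $T^k_i(y)$.

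The only subtlety is the boundary component $S_{k,1}$, whose $A$ and $B$ are the truncated sets $R_1-\{N_k\}$ and $L_1-\{1\}$, so Fact 3.3(i) only guarantees sortedness of those truncations and I still need to reattach registers $1$ and $N_k$. Here I would argue separately: the register $1$ is not touched, and if $y_1=1$ then the assumed sortedness of $L_1$ forces $y_{L_1-\{1\}}$ to be all $1$s, i.e.\ $m_2=n_k-1$; Fact 3.3(ii) with $h=0$ then gives $ones(\text{new }L_1-\{1\})=\max(m_1,m_2)=n_k-1$, so this truncation stays all $1$s and $L_1$ remains sorted. If $y_1=0$ sortedness is trivial. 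A symmetric argument using $y_{N_k}$ (with $m_1=ones(R_1-\{N_k\})=0$ when $y_{N_k}=0$) handles $R_1$. This edge-case bookkeeping — rather than anything conceptually hard — is the main obstacle; once it is dispatched, the induction goes through in a single pass.
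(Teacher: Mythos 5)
Your proof is correct and follows essentially the same route as the paper: induction over stages, reducing to the claim that each $T^k_i$ preserves sorted columns, which is obtained by decomposing $T^k_i$ into the component sets $S_{k,i+3j}$ and applying Fact~\ref{f33}(i) together with the column identifications and disjointness from Fact~\ref{f34}. Your explicit treatment of the boundary component $S_{k,1}$, where registers $1$ and $N_k$ must be reattached to the truncated columns, is a legitimate (and correctly resolved) detail that the paper's two-sentence proof leaves implicit.
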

\begin{proof} By induction it suffices to prove that for each sequence
  $y\in\Sigma^{N_k}$ with sorted columns $C_j$, $j=1,\ldots,b_k$, the
  outputs $z_i=T^k_i(y)$, $i=1,2,3$ have also the columns sorted. Since
  each $T^k_i$, as a mapping, is a composition of mapping $S_{i+3j}, 0 \le
  j\le \lfloor \frac{2k-i-3}{3} \rfloor$, each of which, due to Facts 
  \ref{f33} and \ref{f34}, transforms sorted columns into sorted columns, the
  lemma follows. \qed
\end{proof}
From now on, instead of looking at 0-1 sequences with sorted columns, we
will analyze the computations of $M_k$ on sequences of integers
$\overline{c}=(c_1,\ldots,c_{b_k})$, where $c_t$, $t=1,\ldots,b_k$,
denote the number of ones in a sorted column $C_t$. Transformations of
0-1 sequences defined by sets $S_j$, $j=1,\ldots,2k-3$ will be
represented by the following mappings:
\begin{definition} \label{defFun}
Let $k\ge 3$, $h_i=2^{k-i-1}-1$ for $i=1,\ldots,k-2$ and
$b_k=2(k-2)$. The functions $dec^k_i$, $mov^k_i$ and $cyc^k$ over
sequences of $b_k$ reals are defined as follows. Let
$\overline{c}=(c_1,\ldots,c_{b_k})$ and $t\in\{1,\ldots,b_k\}$.
\begin{eqnarray}
(dec^k_i(\overline{c}))_t &=& \left\{
\begin{array}{ll}
\min(c_i,c_{b_k-i+1}+h_i) & \mbox{ if } t=i\\
\max(c_i-h_i,c_{b_k-i+1}) & \mbox{ if } t=b_k-i+1\\
c_t                       & \mbox{ otherwise} 
\end{array}
\right.\\
(mov^k_i(\overline{c}))_t &=& \left\{
\begin{array}{ll}
\min(c_t,c_{t+1}) & \mbox{ if } t=i \mbox{ or } t=b_k-i\\
\max(c_{t-1},c_t) & \mbox{ if } t=i+1 \mbox{ or } t=b_k-i+1\\
c_t                       & \mbox{ otherwise} 
\end{array}
\right.\\
(cyc^k(\overline{c}))_t &=& \left\{
\begin{array}{ll}
\max(c_1,c_{b_k}-1) & \mbox{ if } t=1\\
\min(c_1+1,c_{b_k}) & \mbox{ if } t=b_k\\
c_t                       & \mbox{ otherwise} 
\end{array}
\right.
\end{eqnarray}
\end{definition}
\begin{fact} \label{f6}
Let $x\in\Sigma^{N_k}$ be a 0-1 sequence with sorted columns $C_1, \ldots,
  C_{b_k}$, let $c_i=ones(x_{C_i})$ and $\overline{c} =
  (c_1,\ldots,c_{b_k})$. Let $y_j=S_{k,j}(x)$, $d_{j,i}=ones((y_j)_{C_i})$
  and $\overline{d_j} =(d_{j,1},\ldots,d_{j,b_k})$, where
  $i=1,\ldots,b_k$ and $j=1,\ldots,2k-3$. Then
\begin{enumerate}
\item[(i)] $\overline{d_1} = cyc^k(\overline{c})$
\item[(ii)] $\overline{d_{2j}} = dec^k_j(\overline{c})$, for any
  $j=1,\ldots,\frac{b_k}{2}$
\item[(iii)] $\overline{d_{2j+1}} = mov^k_j(\overline{c})$, for any
  $j=1,\ldots,\frac{b_k}{2}$
\end{enumerate}
\end{fact}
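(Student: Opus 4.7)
The plan is to combine the generic stage formula Fact \ref{f33}(ii) with the stage-by-stage classification in Fact \ref{f34}. Lemma \ref{l35} guarantees that every column stays sorted throughout, so the count of ones is a faithful summary of the column's content and it suffices to verify that the count update prescribed by Fact \ref{f33}(ii) matches the three functions $cyc^k, dec^k_j, mov^k_j$. For columns not meeting the current stage, Fact \ref{f34}(v) guarantees they are untouched, which matches the ``$c_t$ otherwise'' branch in every definition.

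For part (ii) the argument is direct: by Fact \ref{f34}(ii), $S_{k,2j} = S_{L_j,R_j,h_j}$, so Fact \ref{f33}(ii) with $m_1 = c_j$, $m_2 = c_{b_k-j+1}$, $h = h_j$ returns $\min(c_j,c_{b_k-j+1}+h_j)$ for $L_j$ and $\max(c_j-h_j,c_{b_k-j+1})$ for $R_j$, which is the definition of $dec^k_j$. For part (iii), Fact \ref{f34}(iii) splits $S_{k,2j+1}$ into the disjoint union $S_{L_j,L_{j+1},0}\cup S_{R_{j+1},R_j,0}$; applying Fact \ref{f33}(ii) independently to each piece with $h=0$ gives $\min/\max$ on each of the pairs $(c_j,c_{j+1})$ and $(c_{b_k-j},c_{b_k-j+1})$, which is exactly $mov^k_j$. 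The boundary case $j=b_k/2$ is handled by Fact \ref{f34}(iv): the two pieces then coincide (the left columns of the second pair are the right columns of the first), and simultaneously $mov^k_{b_k/2}$ collapses its four cases to two, so both sides still agree.

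The main obstacle is part (i), because $S_{k,1}$ does not act on the entire outer columns but on $R_1\setminus\{N_k\}$ and $L_1\setminus\{1\}$, as stated in Fact \ref{f34}(i). Fact \ref{f33}(ii) with $h=0$ therefore gives the counts of ones only on the restricted sets, and we must re-incorporate the values at the excluded registers $1$ and $N_k$ to obtain the full-column counts. Using that each column is sorted in the form $0^{n_k-c}1^c$, we have $x_1 = 1$ iff $c_1 = n_k$ and $x_{N_k} = 1$ iff $c_{b_k}\ge 1$. Setting $m_1 = ones(x_{R_1\setminus\{N_k\}})$ and $m_2 = ones(x_{L_1\setminus\{1\}})$, the new counts of the outer columns are $\max(m_1,m_2) + x_1$ and $\min(m_1,m_2) + x_{N_k}$, respectively. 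A short case split on the four possibilities for $(x_1,x_{N_k})$ shows
$$\max(m_1,m_2) + x_1 = \max(c_1, c_{b_k}-1), \qquad \min(m_1,m_2) + x_{N_k} = \min(c_1+1, c_{b_k}),$$
which are exactly the values assigned to coordinates $1$ and $b_k$ by $cyc^k$. All other coordinates are left unchanged since $S_{k,1}\subseteq L_1\cup R_1$, so $\overline{d_1} = cyc^k(\overline{c})$, finishing the proof.
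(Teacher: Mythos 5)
Your proposal is correct and follows essentially the same route as the paper's proof: combine Fact \ref{f34} with Fact \ref{f33}(ii), treat part (ii) directly, and for part (i) account for the excluded registers $1$ and $N_k$ via the identities $m_1=c_{b_k}-x_{N_k}$, $m_2=c_1-x_1$ followed by a case split on $(x_1,x_{N_k})$, exactly as the paper does. You are in fact slightly more thorough, since you spell out part (iii) (which the paper dismisses as ``similar'') including the collapsing boundary case $j=b_k/2$.
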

\begin{proof} Generally, the fact follows from Fact \ref{f34} and the
  part (ii) of Fact \ref{f33} We prove only its parts (i) and (ii). Part
  (iii) can be proved in the similar way.

\textit{(i)~} Observe that $y_1=S_{k,1}(x)=S_{R_1-\{N_k\},L_1-\{1\},0}(x)$
  due to Fact \ref{f34} \textit{(ii)}. It follows that only the content
  of columns $L_1=C_1$ and $R_1=C_{b_k}$ can change, but they remain
  sorted (according to Lemma \ref{l35}). Using Fact
  \ref{f33} \textit{(ii)} we have: $m_1 = ones(x_{R_1-\{N_k\}}) =
  c_{b_k}-x_{N_k}$, $m_2 = ones(x_{L_1-\{1\}}) = c_1-x_1$ and
$$d_{1,1} = \max(m_1,m_2)+x_1 = \max(c_{b_k}-x_{N_k}+x_1,c_1),$$
$$d_{1,b_k} = \min(m_1,m_2)+x_{N_k} = \min(c_{b_k},c_1+x_{N_k}-x_1).$$
Now let us consider the following three cases of values $x_1$ and
$x_{N_k}$: \\
\noindent
{\it Case $x_1=0$ ~and~ $x_{N_k}=1$.} Then 
  $d_{1,1} = \max(c_{b_k}-1,c_1) = cyc^k(\overline{c})_1$ and 
  $d_{1,b_k} = \min(c_{b_k},c_1+1)=cyc^k(\overline{c})_1$.\\
\noindent
{\it Case $x_1=1$.} Then $c_1=n_k$, $c_{b_k}\le n_k$ and
$c_{b_k}-x_{N_k}\le n_k-1$. In this case:
  $d_{1,1} = \max(n_k,c_{b_k}-x_{N_k}+1,) = n_k = \max(c_1,c_{b_k}-1)$ and 
  $d_{1,b_k} = \min(n_k-1+x_{N_k},c_{b_k}) = c_{b_k} = \min(c_1+1,c_{b_k})$.\\
\noindent
{\it Case $x_{N_k}=0$.} Then $c_{b_k}=0$ and $c_1-x_1\ge 0$. In this case:
  $d_{1,1} = \max(c_1,x_1) = c_1 = \max(c_1,c_{b_k}-1)$ and 
  $d_{1,b_k} = \min(c_1-x_1,c_{b_k}) = c_{b_k} = \min(c_1+1,c_{b_k})$.

\textit{(ii)~} We fix any $j\in\{1,\ldots,\frac{b_k}{2}\}$ and observe
  that $y_{2j}=S_{2j}(x)=S_{L_j,R_j,h_j}(x)$ due to Fact \ref{f34}
  \textit{(ii)}. It follows that only the content of columns $L_j=c_j$
  and $R_j=c_{b_k-j+1}$ can change, but they remain sorted (according to
  Lemma \ref{l35}). Using Fact \ref{f33} \textit{(ii)} we have:
$$d_{2j,j} = ones((y_{2j})_{L_j}) = \min(c_j,c_{b_k-j+1}+h_j) = 
  (dec^k_j(\overline{c}))_j,$$
$$d_{2j,b_k-j+1} = ones((y_{2j})_{R_j}) = \max(c_j-h_j,c_{b_k-j+1}) = 
  (dec^k_j(\overline{c}))_{b_k-j+1}.$$
\end{proof}
\begin{definition} \label{defQ}
Let $k\ge 3$. Let $Q^k_1$, $Q^k_2$ and $Q^k_3$ denote the following sets
of functions.
\begin{eqnarray}
Q^k_1 & = & \left\{cyc^k\right\} \cup 
          \left\{dec^k_{3i-1}\right\}_{i=1}^{\lfloor\frac{k-1}{3}\rfloor} \cup
          \left\{mov^k_{3i}\right\}_{i=1}^{\lfloor\frac{k-2}{3}\rfloor}\\
Q^k_2 & = & \left\{dec^k_{3i-2}\right\}_{i=1}^{\lfloor\frac{k}{3}\rfloor} \cup
          \left\{mov^k_{3i-1}\right\}_{i=1}^{\lfloor\frac{k-1}{3}\rfloor}\\
Q^k_3 & = & \left\{dec^k_{3i}\right\}_{i=1}^{\lfloor\frac{k-2}{3}\rfloor} \cup
          \left\{mov^k_{3i-2}\right\}_{i=1}^{\lfloor\frac{k}{3}\rfloor}
\end{eqnarray}\
\end{definition}
Let us observe that each function in $Q^k_i$, $i=1,2,3$, can only modify
a few positions in a given sequence of numbers. Moreover, different
functions in $Q^k_i$ can only modify disjoint sets of positions. For a
function $f:R^m\mapsto R^m$ let us
define $$args(f)=\left\{i\in\{1,\ldots,m\}:\exists_{\overline{c}\in R^m}
(f(\overline{c}))_i\neq (\overline{c})_i\right\}$$ The following facts
formalize our observations.
\begin{fact}
$args(cyc^k)=\{1,b_k\}$, $args(dec^k_i)=\{i,b_k-i+1\}$, 
    $args(mov^k_i)=\{i,i+1,b_k-i,b_k-i+1\}$, where $i=1,\ldots,k-2$.
\end{fact}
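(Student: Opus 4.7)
\medskip
\noindent\textbf{Proof plan.} The claim is essentially a reading of Definition~\ref{defFun}, so the plan is to carry out a two-way inclusion for each of the three functions. I would handle $cyc^k$, $dec^k_i$, and $mov^k_i$ in turn, and in each case prove ``$\subseteq$'' directly from the defining case analysis and ``$\supseteq$'' by exhibiting a specific witness sequence.

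For the ``$\subseteq$'' direction I would observe that every definition in Definition~\ref{defFun} has an ``otherwise'' branch returning $c_t$ unchanged, so any coordinate $t$ not matching one of the first two branches satisfies $(f(\overline{c}))_t = c_t$ for every $\overline{c}$ and therefore cannot lie in $args(f)$. For the ``$\supseteq$'' direction I would supply explicit witnesses: for $cyc^k$ the choice $c_1 = 0$, $c_{b_k} = 2$ moves both endpoint coordinates (coordinate $1$ becomes $\max(0,1) = 1$ and coordinate $b_k$ becomes $\min(1,2) = 1$); for $dec^k_i$ the choice $c_i = h_i + 1$, $c_{b_k-i+1} = 0$ moves both listed coordinates (to $h_i$ via $\min$ and to $1$ via $\max$); and for $mov^k_i$ any sequence with $c_i > c_{i+1}$ moves positions $i$ and $i+1$ since $\min$ and $\max$ then swap, and symmetrically $c_{b_k-i} > c_{b_k-i+1}$ handles the mirror pair.

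The only point requiring attention, which I would flag as the main obstacle if one has to be named, is the boundary value $i = k-2$ in $mov^k_i$: there $b_k - i = i$ and $b_k - i + 1 = i+1$, so the four listed positions collapse into two. The stated equality still holds because it is asserted as a set identity, but I would briefly verify that in this degenerate case the two $\min$-branches of Definition~\ref{defFun} assign the same value $\min(c_i, c_{i+1})$, and likewise for the two $\max$-branches, so the function remains well-defined and the witness $c_i > c_{i+1}$ still moves both relevant coordinates. Beyond this boundary check, the entire argument is bookkeeping directly against the three cases of Definition~\ref{defFun}.
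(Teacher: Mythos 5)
Your proof is correct: the paper states this fact without any argument, treating it as an immediate reading of Definition~\ref{defFun}, and your two-way inclusion (the ``otherwise'' branch for $\subseteq$, explicit witnesses for $\supseteq$) is exactly the verification being left implicit. Your observation about the degenerate case $i=k-2$, where $b_k-i=i$ and the four positions of $mov^k_i$ collapse to two while the two branches agree, is a boundary check the paper silently skips, and it is handled correctly.
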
 \qed
\begin{fact}
For each pair of functions $f,g\in Q^k_i$, $f\neq g$, $i=1,2,3$, we have
\begin{itemize}
\item[(i)] $args(f) \cap args(g) = \emptyset$;
\item[(ii)] for any $\overline{c}=(c_1,\ldots,c_{b_k})$ and 
                                                  $j\in\{1,\ldots,b_k\}$
\begin{equation}
(f(g(\overline{c})))_j = \left\{ \begin{array}{ll}
(f(\overline{c}))_j & \mbox{ if } j\in args(f)\\
(g(\overline{c}))_j & \mbox{ if } j\in args(g)\\
c_j                 & \mbox{ otherwise }
\end{array}
\right.
\end{equation}
\end{itemize} 
\end{fact}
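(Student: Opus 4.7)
The plan is to prove (i) by direct inspection of the three sets $Q^k_i$, exploiting a mod-$3$ structure of the indices used there, and to derive (ii) as an almost formal consequence of (i) together with a \emph{locality} property inherent in Definition~\ref{defFun}.

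For (i), the first step is to observe that each of $args(cyc^k)$, $args(dec^k_i)$, $args(mov^k_i)$ is invariant under the involution $j\mapsto b_k-j+1$; since $b_k=2(k-2)$ is even, this involution partitions $\{1,\ldots,b_k\}$ into $b_k/2$ orbits of size $2$, so two $args$ sets are disjoint iff their sets of orbit representatives in $\{1,\ldots,b_k/2\}$ are. For $Q^k_1$ these representatives split cleanly by residue mod $3$: $cyc^k$ contributes $\{1\}$, each $dec^k_{3i-1}$ contributes $\{3i-1\}$ (residue $2$), and each $mov^k_{3i}$ contributes $\{3i, 3i+1\}$ (residues $0$ and $1$, starting from $3$). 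Different values of $i$ give different base positions, the three residue classes are disjoint, and the $cyc$-position $1$ cannot coincide with any $mov$-position since the latter start at $3$. The cases $Q^k_2$ and $Q^k_3$ go the same way, with the residue classes cyclically permuted and no $cyc^k$ to accommodate.

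For (ii), the crucial observation is a locality property immediate from Definition~\ref{defFun}: each $h\in\{cyc^k\}\cup\{dec^k_i\}\cup\{mov^k_i\}$ both reads and writes only positions in $args(h)$. Concretely, $(h(\overline{c}))_j = c_j$ for $j\notin args(h)$, and for $j\in args(h)$ the value $(h(\overline{c}))_j$ depends only on $\overline{c}$ restricted to $args(h)$. Combined with (i), this yields (ii) by a three-way case split on $j$: if $j\in args(f)$, then $args(g)$ is disjoint from $args(f)$, so $g(\overline{c})$ agrees with $\overline{c}$ on $args(f)$ and $(f(g(\overline{c})))_j = (f(\overline{c}))_j$; if $j\in args(g)$, then $j\notin args(f)$ and $f$ leaves position $j$ untouched, so $(f(g(\overline{c})))_j = (g(\overline{c}))_j$; otherwise neither function modifies position $j$ and the value stays $c_j$.

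There is no deep obstacle here---both parts are essentially bookkeeping. The step most prone to a slip is in (i): verifying that the index ranges $\lfloor(k-1)/3\rfloor$, $\lfloor(k-2)/3\rfloor$, $\lfloor k/3\rfloor$ keep all orbit representatives inside $\{1,\ldots,b_k/2\}$, and that a degenerate case such as $args(mov^k_{b_k/2}) = \{b_k/2, b_k/2+1\}$ (a single orbit straddling the midpoint) does not accidentally clash with an adjacent $dec$-representative. This reduces to a routine mod-$3$ verification using $b_k=2(k-2)$.
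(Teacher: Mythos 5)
Your argument is correct: both the disjointness of the $args$ sets (via the mod-$3$ index structure and the symmetry $j\mapsto b_k-j+1$) and the locality property that each of $cyc^k$, $dec^k_i$, $mov^k_i$ reads and writes only positions in its own $args$ set check out against Definitions \ref{defFun} and \ref{defQ}, including the degenerate middle orbit $\{k-2,k-1\}$. The paper states this fact without proof, treating it as immediate from the definitions, and your write-up is a sound filling-in of exactly the bookkeeping the paper relies on.
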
  \qed
\begin{corollary} \label{color1}
Each set $Q^k_i$, $i=1,2,3$, uniquely determines a mapping, in which
functions from $Q^k_i$ can be apply in any order. Moreover, if $f\in
Q^k_i$, $\overline{c}\in R^{b_k}$ and $j\in args(f)$ then
$(Q^k_i(\overline{c}))_j = (f(\overline{c}))_j$.
\end{corollary}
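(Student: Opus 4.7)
The proof plan is to lean on the two parts of the preceding fact (pairwise disjoint argument sets and the composition formula for coordinate-wise action). Both claims of the corollary are really consequences of commutativity of the functions in $Q^k_i$.

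First I would establish that any two distinct $f,g \in Q^k_i$ commute. Fix a coordinate $j \in \{1,\ldots,b_k\}$ and apply part (ii) of the preceding fact to $f \circ g$. Because $args(f) \cap args(g) = \emptyset$ by part (i), the three cases ($j\in args(f)$, $j\in args(g)$, neither) give an expression that is symmetric in $f$ and $g$, so $f\circ g = g\circ f$. Coupled with the standard fact that a set of pairwise commuting transformations can be composed in any order with the same result (formally, any permutation is generated by adjacent transpositions, and each adjacent swap preserves the composite), this proves that $Q^k_i$, viewed as an unordered set of maps, determines a single mapping $R^{b_k}\to R^{b_k}$, which I denote $Q^k_i$ by abuse of notation.

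Second, for the pointwise identity, I would fix $f\in Q^k_i$, $\overline{c}\in R^{b_k}$, and $j \in args(f)$. By the well-definedness just proved, I may write $Q^k_i = f \circ H$, where $H$ is the composition of the remaining functions of $Q^k_i$ in an arbitrary chosen order. Iterating part (ii) of the preceding fact along this chosen order shows that for every index $i' \notin \bigcup_{g\neq f} args(g)$ we have $(H(\overline{c}))_{i'} = c_{i'}$. Since $args(f)$ is disjoint from $args(g)$ for all $g\neq f$ in $Q^k_i$, this applies in particular to every $i' \in args(f)$. Consequently $H(\overline{c})$ and $\overline{c}$ agree on $args(f)$, and since the output of $f$ at any coordinate depends only on the entries in $args(f)$ (immediate from the definitions of $cyc^k$, $dec^k_i$, $mov^k_i$), we conclude $(f(H(\overline{c})))_j = (f(\overline{c}))_j$, which is exactly $(Q^k_i(\overline{c}))_j = (f(\overline{c}))_j$.

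The only mildly delicate point is the last remark that $f$'s output at $j\in args(f)$ depends only on values at $args(f)$; this is not stated explicitly as a lemma but is visible from the case analysis in Definition~\ref{defFun}. I would spell this out in one line per function type rather than invoking a general principle, after which the corollary falls out with essentially no further calculation.
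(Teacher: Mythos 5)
Your proof is correct and follows the route the paper intends: the corollary is stated there without an explicit proof, as an immediate consequence of the two preceding facts, and your argument (commutativity via the symmetry of the composition formula in part (ii), plus disjointness of the $args$ sets for the pointwise claim) is precisely the filling-in of that implicit reasoning. Your added observation that $(f(\overline{c}))_j$ for $j\in args(f)$ depends only on the coordinates of $\overline{c}$ in $args(f)$ is indeed needed, is easily checked against Definition~\ref{defFun}, and closes the only real gap in treating the corollary as immediate.
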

We would like to prove that the result of applying $Q^k_i$, $i=1,2,3$,
to a sequence $\overline{c}=(c_1,\ldots,c_{b_k})$ of numbers of ones in
columns $C_1,\ldots,C_{b_k}$ is equivalent to applying the set of
comparators $T^k_i$ to the content of registers, if each column is
sorted.
\begin{lemma} \label{l2}
Let $x\in\Sigma^{N_k}$ be a 0-1 sequence with sorted columns $C_1, \ldots,
  C_{b_k}$, let $c_i=ones(x_{C_i})$ and $\overline{c} =
  (c_1,\ldots,c_{b_k})$. Let $y_j=T^k_j(x)$, $d_{j,i}=ones((y_j)_{C_i})$
  and $\overline{d_j} =(d_{j,1},\ldots,d_{j,b_k})$, where
  $i=1,\ldots,b_k$ and $j=1,2,3$. Then $Q^k_j(\overline{c}) = \overline{d_j}$.
\end{lemma}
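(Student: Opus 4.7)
}
My plan is to reduce the lemma to a per-stage statement that has already been proved, namely Fact~\ref{f6}, by exploiting the disjoint-register structure built into the definition of $T^k_j$. First I would unfold the definition $T^k_j=\bigcup\{S_{k,j+3i}:0\le i\le (2k-j-3)/3\}$; because this is a \emph{union} of stages in the sense introduced before Definition~\ref{defMk}, the component stages $S_{k,j+3i}$ must have pairwise disjoint register sets. I would verify this disjointness explicitly from Fact~\ref{f34}(v): stage $S_{k,m}$ touches only columns $L_{\lceil m/2\rceil}\cup R_{\lceil m/2\rceil}$ (possibly together with the neighbouring pair when $m$ is odd), and consecutive indices in the arithmetic progression $j,j+3,j+6,\dots$ are far enough apart that their column ranges never overlap. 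Once this is established, applying all comparators of $T^k_j$ simultaneously has the same effect as applying the component stages one after another in any order.

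Next I would use Fact~\ref{f6} to translate each individual stage into its column-count function: $S_{k,1}\mapsto cyc^k$, $S_{k,2m}\mapsto dec^k_m$ and $S_{k,2m+1}\mapsto mov^k_m$. Matching the index set $\{j+3i\}$ against the definitions in Definition~\ref{defQ} would give the bookkeeping verification that the multiset of functions obtained this way is exactly $Q^k_j$; for $j=1$ the term $i=0$ contributes $cyc^k$, the even indices $j+3i=6i'-2$ contribute $dec^k_{3i'-1}$, and the odd indices $j+3i=6i'+1$ contribute $mov^k_{3i'}$, matching the decomposition of $Q^k_1$, and similarly for $j=2,3$.

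Finally, because the constituent comparator stages act on disjoint registers, the corresponding functions in $Q^k_j$ act on disjoint argument positions (Fact on $args$), so by Corollary~\ref{color1} their composition in any order is well defined and equals $Q^k_j$. Applying this composition to $\overline{c}$ yields exactly $\overline{d_j}$, because for each position modified by some $f$ we have $(Q^k_j(\overline{c}))_p=(f(\overline{c}))_p$ by Corollary~\ref{color1}, while Fact~\ref{f6} guarantees that this equals $d_{j,p}$; positions outside every $args(f)$ are unchanged on both sides since the corresponding columns are touched by no comparator of $T^k_j$. This completes the proof.

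The only non-routine step I anticipate is the disjointness bookkeeping in paragraph one: one must be careful that the stage $S_{k,1}$ (which lives in $L_1\cup R_1$) does not collide with $S_{k,4}$ (which lives in $L_2\cup R_2$), and that the $mov$-type stages, which span two adjacent column pairs, are separated from the $dec$-type stages two steps of $T^k_j$ away by a sufficient column gap. Given the arithmetic of the index spacing $3$ versus the column spacing $\ge 1$ between consecutive $L_j$'s, this is a short case check rather than a real obstacle.
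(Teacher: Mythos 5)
Your proposal is correct and follows essentially the same route as the paper's own proof: both decompose $T^k_j$ into the stages $S_{k,j+3i}$, use Fact~\ref{f34} to show these act on pairwise disjoint columns, invoke Fact~\ref{f6} for the per-stage correspondence with $cyc^k$, $dec^k_*$, $mov^k_*$, and finish positionwise via Corollary~\ref{color1}. The index bookkeeping you sketch (e.g.\ $j+3i=6i'-2$ giving $dec^k_{3i'-1}$) matches Definition~\ref{defQ}, so nothing is missing.
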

\begin{proof}
Recall that $T^k_j=\bigcup\{S_{k,j+3i}: 0\le i\le
\frac{2k-j-3}{3}\}$. For a set of comparators $S$ let us define
$$ cols(S) = \left\{ i\in\{1,\ldots,b_k\}: regs(S)\cap C_i\neq\emptyset
\right\} \enspace .$$ From Fact \ref{f34}(i--iv) it follows that
$cols(S_{k,1})=\{1,b_k\}$ and for $i=1,\ldots,k-2$ $cols(S_{k,2i}) =
\{i,b_k-i+1\}$ and $ cols(S_{k,2i+1}) = \{i, i+1, b_k-i, b_k-i+1\}
$. From Fact \ref{f34}(v) we get that $ cols(S_{k,j+3i}) \cap
cols(S_{k,j+3i'}) = \emptyset $ if $i\neq i'$. Thus we can observe a 1-1
correspondence between a function $f$ in $Q^k_j$ and a set of
comparators $S_{k,j+3i} \subseteq T^k_j$ such that
$args(f)=cols(S_{k,j+3i})$ Then for each $t\in args(f)$
$(Q^k_j(\overline{c}))_t = (f(\overline{c}))_t = (\overline{d_j})_t$, as
the consequence of Corollary \ref{color1} and Fact \ref{f6}. \qed
\end{proof}
\begin{definition} \label{flat}
We say that a sequence of numbers $\overline{c} = (c_1,\ldots,c_{2m})$
is {\em flat} if $c_1\le c_2\le \ldots, c_{2m}\le c_1 + 1$. We say that
a sequence $\overline{c}$ is {\em 2-flat} if subsequences
$(c_1,c_3,\ldots,c_{2m-1})$ and $(c_2,c_4,\ldots,c_{2m})$ are flat. We
say that $\overline{c}$ is balanced if $c_i+c_{2m-i+1} = c_1+c_{2m}$,
for $i=2,\ldots,m$. For a balanced sequence $\overline{c}$ define 
$height(\overline{c})$ as $c_1+c_{2m}$.
\end{definition}
\begin{proposition} \label{p3}
Let $k\ge 3$, $x\in \Sigma^{N_k}$, $\overline{c} =
(c_1,\ldots,c_{b_k})$, where $c_i=ones(x_{C_i})$ ($C_i$ is as usual a
column in the matrix of registers), $i=1,\ldots,b_k$. Then
\begin{enumerate}
\item $x$ is sorted if and only if columns of $x$ are sorted  and 
$\overline{c}$ is flat;
\item $x$ is 2-sorted if and only if columns of $x$ are sorted and 
$\overline{c}$ is 2-flat;
\end{enumerate}
\end{proposition}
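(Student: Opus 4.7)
My plan is to exploit the raster-scan interpretation of the register matrix: reading the registers in order $1, 2, \ldots, N_k$ amounts to reading the $n_k \times b_k$ matrix row by row, since register $j + ib_k$ sits at row $i$ of column $C_j$. Part (2) will reduce to part (1) because $b_k = 2(k-2)$ is even, which forces the parity of $j + ib_k$ to match the parity of $j$; hence odd-indexed registers fall in odd-numbered columns and even-indexed registers in even-numbered columns.

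For part (1), I would prove the forward direction by writing $x = 0^a 1^{N_k - a}$ with $a = qb_k + r$, $0 \leq r < b_k$, and counting: column $C_j$ contains $q+1$ zeros if $j \leq r$ and $q$ zeros if $j > r$, all sitting at the top. Thus every column is sorted and $\overline{c}$ is non-decreasing with $c_{b_k} - c_1 \leq 1$, which is exactly flatness. For the converse, assume the columns are sorted and $\overline{c}$ is flat, so $c_j \in \{c_1, c_1 + 1\}$. Letting $r$ be the number of indices with $c_j = c_1$, the matrix consists of $(n_k - c_1 - 1)$ rows that are entirely $0$, a boundary row of shape $0^r 1^{b_k - r}$ (absent if all $c_j$ are equal), and $c_1$ rows that are entirely $1$, so its raster scan is a sorted sequence.

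Part (2) then follows by applying part (1) separately to the sub-matrices formed by the odd and even columns. Specifically, the subsequence $(x_1, x_3, \ldots, x_{N_k - 1})$ is the raster scan of the $n_k \times (b_k/2)$ sub-matrix whose columns are $C_1, C_3, \ldots, C_{b_k - 1}$, so by part (1) it is sorted iff each of these columns is sorted and $(c_1, c_3, \ldots, c_{b_k-1})$ is flat; symmetrically for the even subsequence and $(c_2, c_4, \ldots, c_{b_k})$. Combining the two characterizations recovers exactly the 2-flat condition.

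The only substantive work is the indexing translation between the scalar threshold $a$ in the definition of sortedness and the pair $(q, r)$ governing the column heights. This is a routine modular-arithmetic exercise and presents no real obstacle; the conceptual content of the proposition is just the raster-scan equivalence together with the parity observation that makes the 2-sorted case decouple along the odd and even columns.
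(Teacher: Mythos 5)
Your proof is correct. Note that the paper does not actually supply an argument for Proposition \ref{p3} --- it is stated with an immediate \qed as an obvious consequence of the column decomposition --- so there is no official proof to compare against; your raster-scan reading of the $n_k\times b_k$ matrix, the counting of zeros per column via $a=qb_k+r$ for the forward direction, the reconstruction of $0^*1^*$ from sorted columns with a flat height vector for the converse, and the parity observation that $b_k$ even makes the odd/even register subsequences coincide with the raster scans of the odd/even column sub-matrices, is precisely the justification the author leaves implicit. The only cosmetic caveat is the degenerate boundary row when all $c_j$ are equal (or when $c_1=n_k$), which you already flag and which does not affect the argument.
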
 \qed

Now we are ready to reduce the proof of Theorem \ref{3merger} to the
proof of following lemma.
\begin{lemma} \label{l3}
Let $k\ge 3$. If for each 2-flat sequence $\overline{c} =
(c_1,\ldots,c_{b_k})$ of integers from $[0,2^{k-1}-1]$ the result of
application $(Q^k_3\circ Q^k_2\circ Q^k_1)^{2k-5}$ to $(\overline{c})$ is a
flat sequence, then $M_k$ is a $2k-5$-pass merger of two sorted sequences
given in odd and even registers, respectively.
\end{lemma}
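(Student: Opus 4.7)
The plan is to chain together the reductions already set up in earlier lemmas, moving the analysis from registers to columns and then to column-count vectors. First, by the Zero-One Principle for merging (the Proposition at the start of the proof section), it suffices to show that $M_k$ correctly merges any pair of sorted 0-1 sequences placed in the odd and even registers. Any such input $x\in\Sigma^{N_k}$ is 2-sorted, so every column $x_{C_j}$ is sorted (by the Fact preceding Fact \ref{f33}); setting $c_j = ones(x_{C_j})$, the vector $\overline{c}=(c_1,\ldots,c_{b_k})$ has entries in $[0,n_k]=[0,2^{k-1}-1]$ and is 2-flat by Proposition \ref{p3}(2), so the hypothesis of the lemma applies to it.

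Second, I would invoke Lemma \ref{l35} to guarantee that columns remain sorted after every stage of every pass. This is exactly what is needed to apply Lemma \ref{l2} at each stage, giving the stagewise identity $ones((T^k_j(x))_{C_i}) = (Q^k_j(\overline{c}))_i$. A straightforward induction on the pass index $p$ then yields
\[
\overline{c}^{(p)} = (Q^k_3\circ Q^k_2\circ Q^k_1)^{p}(\overline{c}),
\]
since each pass applies $T^k_1, T^k_2, T^k_3$ in order. Setting $p = 2k-5$ and invoking the assumption of the lemma, we conclude that $\overline{c}^{(2k-5)}$ is flat. Combined with the column-sortedness of $x^{(2k-5)}$ (again from Lemma \ref{l35}), Proposition \ref{p3}(1) then gives that $x^{(2k-5)}$ is sorted, which is exactly the $2k-5$-pass merging property claimed.

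The main obstacle is really only bookkeeping: confirming that the one-pass correspondence of Lemma \ref{l2} iterates cleanly across $2k-5$ passes. This is not quite a triviality, because Lemma \ref{l2} is stated for a single input with sorted columns, whereas here we must apply it repeatedly to states produced by previous passes. The key observation is that Lemma \ref{l35} preserves column-sortedness after every stage (not just after every complete pass), so Lemma \ref{l2} remains available at the entry of every subsequent stage, and the iterated composition is indeed $(Q^k_3\circ Q^k_2\circ Q^k_1)^{2k-5}$ as required.
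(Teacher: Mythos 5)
Your proposal is correct and follows essentially the same route as the paper: reduce to 0-1 inputs, pass to the column-count vector via Proposition \ref{p3}, iterate Lemma \ref{l2} (justified by the column-sortedness invariant of Lemma \ref{l35}) to identify the pass-by-pass dynamics with $(Q^k_3\circ Q^k_2\circ Q^k_1)^{p}$, and conclude sortedness from flatness via Proposition \ref{p3}(1). If anything, you are slightly more explicit than the paper about why the one-pass correspondence iterates (the paper compresses this into ``easy induction''), which is a welcome clarification rather than a deviation.
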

\begin{proof}
Assume that for each 2-flat sequence $\overline{c} =
(c_1,\ldots,c_{b_k})$ the result of application $(Q^k_3\circ Q^k_2\circ
Q^k_1)^{2k-5}$ to $(\overline{c})$ is a flat sequence.  Let
$\overline{x}\in \Sigma^{N_k}$ be a 2-sorted sequence and $\overline{c}
= (c_1,\ldots,c_{b_k})$, where $c_i=ones(\overline{x}_{C_i})$ ($C_i$ is
as usual a column in the matrix of registers), $i=1,\ldots,b_k$. Then
$\overline{c}$ is 2-flat due to Proposition \ref{p3} and each 
$c_i\in[0,2^{k-1}-1]$, because the height of columns is $2^{k-1}-1$. Recall 
that $\overline{x}^{(j)}=(M_k)^j(\overline{x})$ and let
$c_{j,i}=ones(\overline{x}^{(j)}_{C_i})$. Using Lemma \ref{l2} and easy
induction we get that the equality $(Q^k_3\circ Q^k_2\circ
Q^k_1)^{j}(\overline{c}) = (c_{j,1},\ldots,c_{j,b_k})$ is true for
$j=1,\ldots,2k-5$. Since $(Q^k_3\circ Q^k_2\circ
Q^k_1)^{2k-5}(\overline{c})$ is a flat sequence, the sequence
$\overline{x}^{(2k-5)}$ is sorted. \qed
\end{proof}

\subsection{Analysis of Balanced Columns}

Due to Lemma \ref{l3} we can only analyze the results of periodic
application of the functions $Q^k_1$, $Q^k_2$ and $Q^k_3$ to a sequence
of integers representing the numbers of ones in each register column. We
know also that an initial sequence is 2-flat. To simplify our analysis
further, we start it with initial values restricted to be balanced
2-flat sequences. Then we observe that the functions are monotone and
any 2-flat sequence can be bounded from below and above by balanced
2-flat sequences whose heights differ only by one.

\begin{lemma} \label{l4}
Let $k\ge 3$ and $\overline{c} = (c_1,\ldots,c_{b_k})$ be a balanced
sequence of numbers. Let $s = height(\overline{c})$ and let $f$ be a
function from $Q^k_1\cup Q^k_2\cup Q^k_3$. Then $f(\overline{c})$ is
also balanced and $height(f(\overline{c}))=s$.
\end{lemma}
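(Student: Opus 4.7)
The plan is a direct case analysis on the three families of functions from Definition~\ref{defFun}. For each $f$ I will first identify, via $args(f)$, which coordinates are modified, then check that (a) every pair $(i, b_k-i+1)$ that is touched has its sum preserved, and (b) pairs untouched by $f$ are automatically preserved. Since $height(\overline{c}) = c_1 + c_{b_k}$ coincides with the common pair-sum of a balanced sequence, showing all pair-sums are unchanged gives both balance and equality of heights.

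First I would dispatch $cyc^k$, which only acts on positions $1$ and $b_k$. Writing $\min(c_1+1, c_{b_k}) = 1 + \min(c_1, c_{b_k}-1)$ and applying the elementary identity $\min(a,b)+\max(a,b)=a+b$ with $a=c_1$, $b=c_{b_k}-1$ yields
\[
(cyc^k(\overline{c}))_1 + (cyc^k(\overline{c}))_{b_k} = \max(c_1,c_{b_k}-1) + 1 + \min(c_1,c_{b_k}-1) = c_1+c_{b_k}=s.
\]
Next I would handle $dec^k_i$, which touches positions $i$ and $b_k-i+1$. Factoring $h_i$ out of the min so that $\min(c_i, c_{b_k-i+1}+h_i) = h_i + \min(c_i-h_i, c_{b_k-i+1})$ and again invoking $\min+\max=a+b$, the new sum at that pair equals $c_i + c_{b_k-i+1}$, which is $s$ (by balance if $i\ge 2$, or trivially if $i=1$).

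The main subtlety is $mov^k_i$, since it modifies four coordinates at once, namely $i, i+1, b_k-i, b_k-i+1$, so both pairs $(i,b_k-i+1)$ and $(i+1,b_k-i)$ must be verified. The trick is the \emph{reflection identity} provided by the balance hypothesis: $c_{b_k-i+1}=s-c_i$ and $c_{b_k-i}=s-c_{i+1}$, from which
\[
\max(c_{b_k-i}, c_{b_k-i+1}) = s - \min(c_i, c_{i+1}), \qquad \min(c_{b_k-i}, c_{b_k-i+1}) = s - \max(c_i, c_{i+1}).
\]
Summing the new values at the two pairs gives
\[
\min(c_i,c_{i+1}) + \bigl(s-\min(c_i,c_{i+1})\bigr) = s, \qquad \max(c_i,c_{i+1}) + \bigl(s-\max(c_i,c_{i+1})\bigr) = s,
\]
so both pair-sums remain $s$.

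Finally, for coordinates outside $args(f)$ the values are unchanged by definition, hence the corresponding pair-sums remain equal to $s$. In particular, in the $dec$ and $mov$ cases with $i\ge 2$ positions $1$ and $b_k$ are untouched, while for $cyc^k$ and $mov^k_1$ they are touched simultaneously and their sum is preserved as shown. This establishes both balance of $f(\overline{c})$ and $height(f(\overline{c}))=s$. I expect no serious obstacle beyond the reflection rewrite in the $mov$ case; the rest is a clean application of $\min+\max=a+b$.
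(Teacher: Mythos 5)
Your proposal is correct and follows essentially the same route as the paper's proof: a case analysis over $cyc^k$, $dec^k_i$ and $mov^k_i$, using $\min(a,b)+\max(a,b)=a+b$ after factoring out the shift ($1$ or $h_i$), and the reflection $c_{b_k-i+1}=s-c_i$ to handle the two pairs touched by $mov^k_i$. The paper's argument is word-for-word the same computation, so there is nothing to add.
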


\begin{proof}
Let $\overline{c}$ and $s$ be as in Lemma and let $f(\overline{c}) =
(d_1,\ldots,d_{b_k})$.  The function $f\in Q^k_1\cup Q^k_2\cup Q^k_3$
can be either $cyc^k$ or one of $mov^k_j$, $dec^k_j$, $j=1,\ldots,k-2$,
according to Definition \ref{defQ}. Each of the functions can only
modify one or two pairs of positions of the form $(i,b_k-i+1)$ in
$\overline{c}$ (see Definition \ref{defFun}). The other pairs are left
untouched, so the sum of their values cannot change. In case of $cyc^k$
the modified pair is $(1,b_k)$ and $d_1+d_{b_k} = \max(c_1,c_{b_k}-1) +
\min(c_1+1,c_{b_k}) = s$. In case of $dec^k_j$ the pair is $(j,b_k-j+1)$
and $ d_j+d_{b_k-j+1} = \min(c_j,c_{b_k-j+1}+h_j) +
\max(c_j-h_j,c_{b_k-j+1}) = \min(c_j-h_j,c_{b_k-j+1}) + h_j +
\max(c_j-h_j,c_{b_k-j+1}) = s$.  Finally, if $f=mov^k_j$ then we have
two pairs $(j,b_k-j+1)$ and $(j+1,b_k-j)$. Then $ d_j+d_{b_k-j+1} =
\min(c_j,c_{j+1}) + \max(c_{b_k-j},c_{b_k-j+1}) = \min(c_j,c_{j+1}) +
\max(s-c_{j+1},s-c_j) = s$ and in case of the second pair
$d_{j+1}+d_{b_k-j} = \max(c_j,c_{j+1}) + \min(c_{b_k-j}, c_{b_k-j+1}) =
\max(c_j,c_{j+1}) + \min(s-c_{j+1},s-c_j) = s$. \qed
\end{proof}

It follows from Lemma \ref{l4} that if we start the periodical
application of the functions $Q^k_1$, $Q^k_2$ and $Q^k_3$ to a balanced
2-flat initial sequence then it remains balanced after each function
application and its height will not changed. Therefore, we can only
trace the values in the first half of generated sequences. If needed, a
value in the second half can be computed from the height and the
corresponding value in the first half. To get a better view on the
structure of generated sequences, we subtract half of the height from
each element of the initial sequence and proceed with such modified
sequences to the end. At the end the subtracted value is added to each
element of the final sequence. The following fact justifies the described
above procedure.

\begin{fact}\label{fct-10}
Let $f$ be a function from $Q^k_1\cup Q^k_2\cup Q^k_3$. Then $f$ is
monotone and for each $t\in R$ and $(c_1,\ldots,c_{b_k})$ the following
equation is true
$$ f(c_1-t,\ldots,c_{b_k}-t) =  f(c_1,\ldots,c_{b_k}) - (t,\ldots,t) \enspace .$$
\end{fact}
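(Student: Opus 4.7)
The plan is to verify the two asserted properties — monotonicity and translation-equivariance — separately for each of the three kinds of functions that make up $Q^k_1\cup Q^k_2\cup Q^k_3$, namely $cyc^k$, $dec^k_j$ and $mov^k_j$ (see Definition \ref{defFun}). Each of these acts as the identity on all but a fixed set of two or four coordinates, and on the modified coordinates it outputs a $\min$ or a $\max$ of two expressions of the form $c_i+\alpha$ where $\alpha$ is an additive constant (either $0$, $\pm 1$, or $\pm h_j$) that does not depend on the input $\overline{c}$. Both properties will follow coordinate-by-coordinate from two completely elementary facts about $\min$ and $\max$.

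For the translation equation, I would use the identity
\begin{equation*}
\min(a-t+\alpha,\, b-t+\beta) \;=\; \min(a+\alpha,\, b+\beta) - t,
\end{equation*}
and its $\max$-analogue, which hold for any reals $a,b,\alpha,\beta,t$. Applying this inside each case of Definition \ref{defFun} shows that every modified output coordinate of $f(\overline{c}-t\cdot\mathbf{1})$ equals the corresponding coordinate of $f(\overline{c})$ minus $t$; on the unmodified coordinates the output is $c_i-t$, which is again the $i$-th coordinate of $f(\overline{c})-t\cdot\mathbf{1}$. This proves the displayed equation uniformly for $cyc^k$, $dec^k_j$ and $mov^k_j$.

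For monotonicity, note that $\min$ and $\max$ are non-decreasing in each argument, that the maps $x\mapsto x\pm h_j$ and $x\mapsto x\pm 1$ are non-decreasing, and that the identity on the remaining coordinates is non-decreasing. Since each output coordinate of $f$ is obtained by composing such non-decreasing building blocks applied to the input coordinates, $f$ is monotone with respect to the componentwise order on $\mathbb{R}^{b_k}$. There is no real obstacle here; the only thing to be careful about is to go through all three function templates and to remember that the shift constant inside $cyc^k$ is $\pm 1$ rather than $\pm h_j$ (which affects none of the above reasoning).
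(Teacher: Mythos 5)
Your proposal is correct and takes essentially the same route as the paper's own proof: both reduce the claim coordinate-by-coordinate to the monotonicity of $\min$ and $\max$ and to the identities $\min(x-t,y-t)=\min(x,y)-t$ and $\max(x-t,y-t)=\max(x,y)-t$, applied to the templates $cyc^k$, $dec^k_j$ and $mov^k_j$ of Definition \ref{defFun}. You merely spell out more explicitly (via the additive constants $\alpha,\beta\in\{0,\pm 1,\pm h_j\}$) what the paper leaves as an immediate observation.
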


\begin{proof}
The fact follows from the similar properties of $\min$ and $\max$
functions: they are monotone and the equations: $\min(x-t,y-t) =
\min(x,y)-t$ and $\max(x-t,y-t) = \max(x,y)-t$ are obviously true. Each
$f$ in $Q^k_1\cup Q^k_2\cup Q^k_3$ is defined with the help of these
simple functions, thus $f$ inherits the properties. \qed
\end{proof}

\begin{corollary}
Let $f = f_l\circ f_{l-1}\circ\ldots\circ f_1$, where $f_i\in\{Q^k_1,
Q^k_2, Q^k_3\}$, $1\le i\le l$. Then $f$ is monotone and for any $t\in
R$ and $(c_1,\ldots,c_{b_k})\in R^{b_k}$
\[ f(c_1-t,\ldots,c_{b_k}-t) =  
             f(c_1,\ldots,c_{b_k}) - (t,\ldots,t) \enspace .\] \qed
\end{corollary}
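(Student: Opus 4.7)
The plan is to treat the corollary as an essentially immediate compositional extension of Fact \ref{fct-10}, proceeding by induction on the length $l$. First, I would observe that each $Q^k_j$ (viewed as a single mapping via Corollary \ref{color1}) is itself obtained by composing, in any order, finitely many functions drawn from $Q^k_1 \cup Q^k_2 \cup Q^k_3$, since the elements of $Q^k_j$ act on disjoint coordinate sets. So it suffices to prove that both stated properties are preserved under composition of functions already known to satisfy them.

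Next, I would verify the two preservation steps separately. For monotonicity, if $g,h \colon R^{b_k} \to R^{b_k}$ are monotone coordinatewise, then $\overline{c} \le \overline{c'}$ implies $h(\overline{c}) \le h(\overline{c'})$, and applying $g$ gives $g(h(\overline{c})) \le g(h(\overline{c'}))$, so $g \circ h$ is monotone. For translation equivariance, writing $\overline{t} = (t,\ldots,t)$, one computes directly
\begin{equation*}
(g \circ h)(\overline{c} - \overline{t}) \;=\; g\bigl(h(\overline{c} - \overline{t})\bigr) \;=\; g\bigl(h(\overline{c}) - \overline{t}\bigr) \;=\; g(h(\overline{c})) - \overline{t} \;=\; (g \circ h)(\overline{c}) - \overline{t},
\end{equation*}
using the equivariance property first of $h$ and then of $g$.

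An easy induction on $l$ now gives the corollary: the base case $l=1$ is precisely Fact \ref{fct-10} applied to each constituent of $f_1$ and then combined via the observation above, and the inductive step follows from the two closure properties just checked. There is essentially no obstacle here; the work was done in Fact \ref{fct-10}, and the corollary merely lifts it from the building blocks to arbitrary compositions, which is exactly the form in which it will be applied to iterations of $Q^k_3 \circ Q^k_2 \circ Q^k_1$ in the subsequent analysis of balanced 2-flat sequences.
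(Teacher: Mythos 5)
Your argument is correct and is exactly the reasoning the paper leaves implicit: the corollary is stated with no written proof, being treated as an immediate consequence of Fact \ref{fct-10} together with the observation that each $Q^k_j$ is a composition of its member functions (Corollary \ref{color1}) and that monotonicity and translation equivariance are closed under composition. Your explicit induction on $l$ fills in precisely that omitted routine step, so nothing further is needed.
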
 

\begin{definition} \label{reduce}
Let $\overline{c}=(c_1,\ldots,c_{b_k})\in R^{b_k}$ be a balanced
sequence and $s=height(\overline{c})$. We call $(c_1-\frac{s}{2},
c_2-\frac{s}{2}, \ldots, c_{k-2}-\frac{s}{2})\in R^{b_k/2}$ the reduced
sequence of $\overline{c}$ and denote it by $reduce(\overline{c})$. For
a sequence $\overline{d} = (d_1, \ldots, d_{k-2})\in R^{k-2}$ we define
$s$-extended sequence $ext(\overline{d},s)$ as
\[(d_1+\frac{s}{2}, d_2+\frac{s}{2}, \ldots, d_{k-2}+\frac{s}{2}, 
  \frac{s}{2}-d_{k-2}, \frac{s}{2}-d_{k-3}, \ldots, \frac{s}{2}-d_1) \enspace .\]
For any $t\in R$ and a function $f: R^{b_k}\mapsto R^{b_k}$ that maps
each balanced sequence to a balanced one and preserves its height let
$reduce(f,t)$ denote a function on $R^{k-2}$ such that
$(reduce(f,t))(\overline{d}) = reduce(f(ext(\overline{d},t)))$ for any
$\overline{d}\in R^{k-2}$.
\end{definition}

Observe that for a balanced sequence $\overline{c}$ with height $s$ the
sequence $ext(reduce(\overline{c}),s)$ is equal to
$\overline{c}$. Moreover, for any $t\in R$ and a sequence
$\overline{d}\in R_{k-2}$ the sequence $ext(\overline{d},t)$ is balanced
and its height is $t$, thus $reduce(ext(\overline{d},t)) =
\overline{d}$. Note also that functions $Q^k_1$, $Q^k_2$ and $Q^k_3$
preserve the property of being balanced and the sequence height (see
Lemma \ref{l4}), so we can analyze a periodical application of their
reduced forms to a reduced balanced 2-flat input.

\begin{fact}
Let $f = f_l\circ f_{l-1}\circ\ldots\circ f_1$, where $f_i\in\{Q^k_1,
Q^k_2, Q^k_3\}$, $1\le i\le l$. Let $\overline{c}\in R^{b_k}$ be
balanced and $s=height(\overline{c})$ Let $\hat{f_i} = reduce(f_i,s)$,
$1\le i\le l$. Then $f(\overline{c}) = ext((\hat{f_l} \circ
\hat{f_{l-1}} \circ \ldots \circ
\hat{f_1})(reduce(\overline{c})),s)$. \qed
\end{fact}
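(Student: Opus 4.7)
The claim is a compositionality statement: reducing-then-composing the $\hat f_i$ agrees with composing the $f_i$ on the original balanced vector, provided the height is correctly carried along. I would prove it by induction on $l$, using as the invariant that at every intermediate step the iterate is still balanced with the same height $s$. This invariant is immediate from Lemma \ref{l4}, which tells us that each $f_i\in\{Q^k_1,Q^k_2,Q^k_3\}$ sends a balanced sequence to a balanced sequence and preserves its height; hence the composition $f_i\circ\cdots\circ f_1$ applied to $\overline c$ is balanced with height $s$ after every prefix.

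For the base case $l=1$, I would just unfold definitions. Using the identity $ext(reduce(\overline c),s)=\overline c$ (valid because $\overline c$ is balanced with height $s$), one has $f_1(\overline c)=f_1(ext(reduce(\overline c),s))$. Since $f_1(\overline c)$ is again balanced with height $s$, applying $ext\circ reduce$ is the identity on it, giving
\[
 f_1(\overline c)=ext\bigl(reduce(f_1(ext(reduce(\overline c),s))),\,s\bigr)
               =ext\bigl(\hat f_1(reduce(\overline c)),\,s\bigr),
\]
by the very definition of $reduce(f_1,s)=\hat f_1$.

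For the inductive step, set $g=f_{l-1}\circ\cdots\circ f_1$ and $\hat g=\hat f_{l-1}\circ\cdots\circ\hat f_1$. By the induction hypothesis, $g(\overline c)=ext(\hat g(reduce(\overline c)),s)$; in particular, $g(\overline c)$ is balanced with height $s$, so setting $\overline d=\hat g(reduce(\overline c))$ we have $g(\overline c)=ext(\overline d,s)$ with $reduce(g(\overline c))=\overline d$. Now apply $f_l$ and use the base-case argument again at this new input: since $f_l(g(\overline c))$ is balanced with height $s$ too,
\[
 f(\overline c)=f_l(ext(\overline d,s))
              =ext\bigl(reduce(f_l(ext(\overline d,s))),\,s\bigr)
              =ext\bigl(\hat f_l(\overline d),\,s\bigr)
              =ext\bigl((\hat f_l\circ\hat g)(reduce(\overline c)),\,s\bigr),
\]
which is exactly what we want.

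The only step that requires any care is verifying that the height is actually preserved all the way along the composition — otherwise the fixed parameter $s$ used in every $\hat f_i=reduce(f_i,s)$ would be inconsistent with the intermediate sequences. That is precisely what Lemma \ref{l4} buys us, so no further obstacle appears; the rest is a mechanical use of the identities $ext(reduce(\cdot),s)=\mathrm{id}$ on balanced sequences of height $s$ and $reduce(ext(\cdot,s))=\mathrm{id}$, plus the defining equation $\hat f_i(\overline d)=reduce(f_i(ext(\overline d,s)))$.
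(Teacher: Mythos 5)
Your proof is correct and follows exactly the route the paper intends: the paper states this fact with no proof, relying on the observations immediately preceding it (that $ext(reduce(\cdot),s)$ and $reduce(ext(\cdot,s))$ are identities on the relevant sequences, and that Lemma \ref{l4} preserves balancedness and height), and your induction on $l$ simply makes that implicit argument explicit. No gaps.
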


\begin{definition}\label{def-Cyc}
Let $MinMax(x,y) = (min(x,y),\max(x,y))$, $Min(x) = \min(x,-x)$, $Cyc(x)
= \max(x,-x-1)$ and $Dec_i(x) = \min(x,-x+H_i)$, where $H_i = 2^i-1,
i=1,\ldots$. Moreover, let us define the following sequences of
functions:
\begin{eqnarray}
\hat{Q}^k_1 & = & (Cyc) \oplus 
\bigoplus_{i=1}^{\lfloor\frac{k-3}{3}\rfloor} 
                                 (Dec_{k-3i}, MinMax) \oplus (F^k_1) \\
\hat{Q}^k_2 & = & \bigoplus_{i=1}^{\lfloor\frac{k-2}{3}\rfloor} 
                                 (Dec_{k-3i+1}, MinMax) \oplus (F^k_2) \\
\hat{Q}^k_3 & = & \bigoplus_{i=1}^{\lfloor\frac{k-2}{3}\rfloor} 
                                 (MinMax, Dec_{k-3i-1}) \oplus (F^k_3)\enspace,
\end{eqnarray}
where $\oplus$ denote concatenation of sequences and for $i=1,2$ \[
F^k_i = \left\{
\begin{array}{ll}
() & \mbox{ if } k\equiv 2i+1 \;(\mbox{mod } 3) \\
(Dec_1) & \mbox{ if } k\equiv 2i+2 \;(\mbox{mod } 3)\\
(Dec_2,Min) & \mbox{ if } k\equiv 2i \;(\mbox{mod } 3)\\
\end{array}
\right. \qquad
F^k_3 = \left\{
\begin{array}{ll}
() & \mbox{ if }  k\equiv 2 \;(\mbox{mod } 3)\\
(Min) & \mbox{ if }  k\equiv 0 \;(\mbox{mod } 3)\\
(MinMax) & \mbox{ if }  k\equiv 1 \;(\mbox{mod } 3)\\
\end{array}
\right.\]
\end{definition}

\begin{lemma}\label{l5}
Let $k\ge 3$ and $t\in R$. Then $reduce(Q^k_i,t) = \otimes \hat{Q}^k_i$, where
$i=1,2,3$ and $\otimes$ denotes the Cartesian product of a sequence of
functions.
\end{lemma}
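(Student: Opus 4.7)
\medskip\noindent\textbf{Proof plan.}
The plan is to compute $reduce(f,t)$ for each elementary $f$ in $Q^k_i$ separately and verify that the collection coincides with $\otimes\hat{Q}^k_i$. By Corollary~\ref{color1}, on a balanced input the $p$-th coordinate of $Q^k_i(\overline{c})$ is determined by the unique $f\in Q^k_i$ with $p\in args(f)$; Fact~\ref{fct-10} then makes $reduce(f,t)$ well defined. Since the $args$-sets within $Q^k_i$ are pairwise disjoint in $\{1,\ldots,b_k\}$, so are their projections to $\{1,\ldots,k-2\}$, and so the reductions of distinct $f\in Q^k_i$ combine as an unordered Cartesian product of functions on $R^{k-2}$.

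\medskip\noindent
The core calculation exploits that $ext(\overline{d},t)$ places $d_p+t/2$ at position $p\le k-2$ and $t/2-d_p$ at the symmetric position $b_k-p+1$. Together with $\min(x+s,y+s)=\min(x,y)+s$ and the analogous identity for $\max$, a direct substitution yields: (i)~$reduce(cyc^k,t)$ acts on position $1$ as $x\mapsto\max(x,-x-1)=Cyc$; (ii)~for every $j\in\{1,\ldots,k-2\}$, $reduce(dec^k_j,t)$ acts on position $j$ as $Dec_{k-j-1}$, since $h_j=2^{k-j-1}-1=H_{k-j-1}$ (for $j=k-2$ this gives $h_{k-2}=1$, hence $Dec_1$); (iii)~for $j\in\{1,\ldots,k-3\}$, $reduce(mov^k_j,t)$ acts on $(j,j+1)$ as $MinMax$; and (iv)~in the degenerate case $j=k-2$, where the set $\{j,j+1,b_k-j,b_k-j+1\}$ collapses to $\{k-2,k-1\}$, $reduce(mov^k_{k-2},t)$ acts on position $k-2$ as $x\mapsto\min(x,-x)=Min$.

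\medskip\noindent
With these individual reductions in hand, I would enumerate the elementary functions of $Q^k_i$ in order of increasing index per Definition~\ref{defQ} and compare the resulting sequence to $\hat{Q}^k_i$. The bulk sums in $\hat{Q}^k_i$ match the consecutive pairs $(dec^k_{*},mov^k_{*+1})$ (with the starred index equal to $3\ell-1$, $3\ell-2$, or $3\ell$ for $i=1,2,3$ respectively); each such pair occupies a consecutive coordinate triple, contributing $(Dec_{k-3\ell},MinMax)$, $(Dec_{k-3\ell+1},MinMax)$, or $(MinMax,Dec_{k-3\ell-1})$, exactly matching the body of $\hat{Q}^k_i$.

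\medskip\noindent
The main obstacle is the tail of the list, which is sensitive to $k\bmod 3$. Depending on the residue and on $i$, the last one or two elementary functions of $Q^k_i$ may be $mov^k_{k-3}$ (giving $MinMax$ at positions $(k-3,k-2)$), a single $dec^k_{k-2}$ (giving $Dec_1$), a single $mov^k_{k-2}$ (giving $Min$), or the pair $dec^k_{k-3},mov^k_{k-2}$ (giving $Dec_2,Min$). The verification reduces to a finite case analysis: for each of the three residues of $k$ modulo $3$ and each $i\in\{1,2,3\}$, one identifies the tail of $Q^k_i$ and checks that it matches $F^k_i$ exactly as specified in Definition~\ref{def-Cyc}.
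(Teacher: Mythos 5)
Your proposal is correct and follows essentially the same route as the paper: both reduce to the translation identities for $\min/\max$ under $ext(\cdot,t)$, the computation of $Cyc$ from $cyc^k$, $Dec_{k-j-1}$ from $dec^k_j$ via $h_j=H_{k-j-1}$, $MinMax$ from $mov^k_j$ for $j<k-2$, and the collapsed case $mov^k_{k-2}\mapsto Min$; the paper merely organizes the same verification by output coordinate (cases on $i+j \bmod 3$) rather than by elementary function, which lets it absorb your tail analysis on $k \bmod 3$ into its coordinate cases. (One minor bookkeeping slip: for $i=3$ the pair producing $(MinMax,Dec_{k-3\ell-1})$ is $(mov^k_{3\ell-2},dec^k_{3\ell})$, not $(dec^k_{3\ell},mov^k_{3\ell+1})$, but this does not affect the argument.)
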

\begin{proof}
Let $k\ge 3$, $i\in\{1,2,3\}$ and $t\in R$. Let $\overline{d}\in R^{k-2}$. By 
Def. \ref{reduce}, $(reduce(Q^k_i,t))(\overline{d}) = 
reduce(Q^k_i(ext(\overline{d},t)))$. Let $\overline{e} = ext(\overline{d},t) 
= (d_1+\frac{t}{2}, \ldots, d_{k-2}+\frac{t}{2}, -d_{k-2}+\frac{t}{2},\ldots, 
-d_1+\frac{t}{2})$. The sequence $\overline{e}$ is balanced and 
$height(\overline{e}) = t$. To get the lemma we would like to prove that for 
$j = 1, \ldots, k-2$ the equalities $(Q^k_i(\overline{e}))_j - \frac{t}{2} = 
((\otimes \hat{Q}^k_i)(\overline{d}))_j$ hold. The proof is by case analysis 
of 
values of $i$ and $j$. In the following equations we use Definitions 
\ref{defFun}, \ref{defQ}, \ref{reduce} and \ref{def-Cyc}.
\begin{enumerate}
\item (Case: $i=1$ and $j = 1$). Then $(Q^k_1(\overline{e}))_1 = 
(cyc^k(\overline{e}))_1 = \max(d_1+\frac{t}{2},-d_1+\frac{t}{2}-1) = 
\max(d_1,-d_1-1)+\frac{t}{2} = Cyc(d_1)+\frac{t}{2} = ((\otimes 
\hat{Q}^k_i)(\overline{d}))_1+\frac{t}{2}$.
\item (Case: $i+j > 2$ and $i+j\equiv 0 (mod~3)$). Let $l$ be such that $j = 
3l - i$. Then $(Q^k_i(\overline{e}))_j = (dec^k_{3l-i}(\overline{e}))_{3l-i} 
= \min(d_{3l-i}+\frac{t}{2},-d_{3l-i}+\frac{t}{2} + 2^{k-(3l-i)-1}-1) = 
\min(d_{3l-i},-d_{3l-i} + H_{k-(3l-i)-1}) + \frac{t}{2} = 
Dec_{k-3l+i-1}(d_{3l-i}) + \frac{t}{2} = ((\otimes 
\hat{Q}^k_i)(\overline{d}))_j + \frac{t}{2}$.
\item (Case: $i+j > 2$, $j<k-2$ and $i+j\equiv 1 (mod~3)$). Let $l$ be such 
that $j = 3l-i+1$. Then $(Q^k_i(\overline{e}))_j = 
(mov^k_{3l-i+1}(\overline{e}))_{3l-i+1} = 
\min(d_{3l-i+1}+\frac{t}{2},d_{3l-i+2}+\frac{t}{2}) = 
\min(d_{3l-i+1},d_{3l-i+2})+\frac{t}{2}$. Starting from the other side we get
$((\otimes \hat{Q}^k_i)(\overline{d}))_{3l-i+1} = (MinMax(d_{3l-i+1}, 
d_{3l-i+2}))_1 = \min(d_{3l-i+1},d_{3l-i+2})$ and we are done.
\item (Case: $i+j > 2$, $j=k-2$ and $i+j\equiv 1 (mod~3)$). Let $l$ be as in 
previous case. Then $(Q^k_i(\overline{e}))_{k-2} = 
(mov^k_{k-2}(\overline{e}))_{k-2} = 
\min(d_{k-2}+\frac{t}{2},-d_{k-2}+\frac{t}{2}) = \min(d_{k-2},-d_{k-2}) + 
\frac{t}{2} = Min(d_{k-2})+\frac{t}{2} = ((\otimes 
\hat{Q}^k_i)(\overline{d}))_{k-2} + \frac{t}{2}$.
\item (Case: $i+j > 2$ and $i+j\equiv 2 (mod~3)$). Let $l$ be such 
that $j = 3l-i+2$. Then $(Q^k_i(\overline{e}))_{3l-i+2} = 
(mov^k_{3l-i+1}(\overline{e}))_{3l-i+2} = 
\max(d_{3l-i+1}+\frac{t}{2}, d_{3l-i+2}+\frac{t}{2}) =$ \\ $\max(d_{3l-i+1}, 
d_{3l-i+2})+\frac{t}{2}$. Starting from the other side we get
$((\otimes \hat{Q}^k_i)(\overline{d}))_{3l-i+2} = (MinMax(d_{3l-i+1}, 
d_{3l-i+2}))_2 = \max(d_{3l-i+1},d_{3l-i+2})$ and we are finally done. \qed
\end{enumerate}
\end{proof}

Instead of tracing individual values in reduced sequences after each
application of a function from
$\{\otimes\hat{Q}^k_1,\otimes\hat{Q}^k_2,\otimes\hat{Q}^k_3\}$ we 
will
trace intervals in which the values should be and observe how the
lengths of intervals are decreasing during the computation. So let us
now define the intervals and show a fact about computations on them.
\begin{definition} \label{interval}
Let $k\ge 3$, $H_i=2^i-1$ for $1\le i\le k-1$. Let $I(0)$ denote the
interval $[-\frac 12,0]$ and, in similar way, let $I(i) = [-\frac
  12,\frac{H_i}{2}]$, $1\le i\le k-1$, $I(-k) = [-\frac{H_{k-1}}{2},0]$
and $I(\pm k) = [-\frac{H_{k-1}}{2},\frac{H_{k-1}}{2}]$. Moreover, we
will write $I(w_1,w_2,\ldots,w_l)$ for the Cartesian product $I(w_1)
\times I(w_2) \times \ldots \times I(w_l)$, where each $w_i \in
\{0,1,2,\ldots,k-1,-k,\pm k\}$.
\end{definition}
\begin{fact} \label{fct-12}
The following inclusions are true:
\begin{enumerate}
\item $Dec_i(I(i+1))\subseteq I(i)$ and 
  $Dec_i(I(w))\subseteq I(w)$, for $1\le i\le k-2$ and $w\in\{0,-k,\pm k\}$;
\item $Cyc(I(-k)) \subseteq I(k-1)$ and 
  $Cyc(w)\subseteq Cyc(w)$, for $w\in\{0,k-1\}$;
\item $Min(I(\pm k)) \subseteq I(-k)$ and $Min(I(1)) \subseteq I(0)$;
\item $MinMax(I(\pm k,-k)) \subseteq (I(-k,\pm k))$;
\item $MinMax(I(i,w)) \subseteq (I(w,i))$, 
                              for $1\le i\le k-1$ and $w\in\{0,-k\}$.
\end{enumerate}
\end{fact}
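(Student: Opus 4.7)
The plan is to verify each of the five inclusions by direct case analysis, since all claims reduce to checking that certain $\min$/$\max$ expressions lie within specified intervals. All of the intervals involved are described by two endpoints, so for each claim I would verify both the lower and upper bound of the image, plugging in the relevant extremes of the source intervals. The computations are elementary once one records three useful observations: (a) $Dec_i(x)=\min(x,H_i-x)$ is maximized at $x=H_i/2$ with value $H_i/2$; (b) $Cyc(x)=\max(x,-x-1)$ is minimized at $x=-1/2$ with value $-1/2$; (c) $Min(x)=\min(x,-x)$ is always $\le 0$, with minimum over any symmetric interval attained at the endpoints.

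First I would handle claim (1). For $x\in I(i+1)=[-1/2,H_{i+1}/2]$, observation (a) gives the upper bound $Dec_i(x)\le H_i/2$. For the lower bound, using $H_{i+1}/2 = H_i + 1/2$, both $x$ and $H_i-x$ are bounded below by $-1/2$, so $Dec_i(x)\ge -1/2$ and the image lies in $I(i)$. For $w\in\{0,-k\}$, every $x\in I(w)$ satisfies $x\le 0 < H_i\le H_i-x$, so $Dec_i(x)=x$ and the inclusion is trivial. For $w=\pm k$, since $i\le k-2$ the maximum value $H_i/2\le H_{k-1}/2$, and the lower bound $\min(x,H_i-x)\ge -H_{k-1}/2$ follows from $H_i\ge 0$.

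Next I would handle claims (2) and (3). For $Cyc(I(-k))\subseteq I(k-1)$: by (b) the image is $\ge -1/2$, and the upper bound follows from $x\le 0$ and $-x-1\le H_{k-1}/2-1\le H_{k-1}/2$. For $Cyc(I(0))$ and $Cyc(I(k-1))$, observe that on $[-1/2,0]$ one has $-x-1\le -1/2\le x$, so $Cyc(x)=x$; and on $[-1/2,H_{k-1}/2]$ the upper bound follows as before. For claim (3), by (c) $Min$ has image $\subseteq (-\infty,0]$; the required lower bounds ($-H_{k-1}/2$ and $-1/2$ respectively) follow because the input intervals are symmetric about $0$ with exactly those half-widths.

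Finally I would handle claims (4) and (5) by treating the two coordinates of $MinMax$ separately. In each case $\min(x,y)$ is $\le$ the upper endpoint of the $y$-interval and $\ge$ the smaller of the two lower endpoints; symmetrically for $\max(x,y)$. Matching these to the target intervals $I(-k,\pm k)$ and $I(w,i)$ is immediate from the definitions. There is no conceptual obstacle; the only thing to be careful about is that the target of each $\min$ (respectively $\max$) coordinate has a lower endpoint equal to the minimum of the two source lower endpoints (respectively upper endpoint equal to the maximum of the two source upper endpoints), which is precisely why the intervals $I(w)$ were chosen with these specific endpoints. \qed
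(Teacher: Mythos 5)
Your proof is correct and takes essentially the same approach as the paper's: a direct endpoint check of each inclusion from the definitions of the functions and intervals (e.g.\ using $H_{i+1}=2H_i+1$ for item (1) and symmetry of $I(\pm k)$ about $0$ for item (3)). You are in fact more thorough than the paper, which verifies only item (1) explicitly and dismisses the rest as straightforward; your bound $Dec_i(x)\le H_i/2$ replaces the paper's case split on whether $x\le H_i/2$, but it is the same elementary computation.
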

\begin{proof}
The proof of each inclusion is a straightforward consequence of the
definitions of a given function and intervals. Therefore we check only
inclusions given in the first item. Let $x\in I(i+1) = [-\frac
  12,\frac{H_{i+1}}{2}]$. If $x\in I(i) = [-\frac
  12,\frac{H_i}{2}]$. then $Dec_i(x) = \min(x,-x+H_i) = x$ since $2x\le
H_i$. Otherwise $x$ must be in $(\frac{H_i}{2},\frac{H_{i+1}}{2}]$, but
  then $x > -x+H_i$ and $Dec_i(x) = -x+H_i \in [-\frac
    12,\frac{H_i}{2})$ since $H_{i+1} = 2H_i+1$.

To proof the second inclusion for $Dec_i$ let us observe that if $x\le 0$
then $Dec_i(x) = x$. It follows that $Dec_i(I(0))\subseteq I(0)$ and
$Dec_i(I(-k))\subseteq I(-k)$. In case of $x\in I(\pm k)$ we only have
to check the positive values of $x$. such that $x\ge -x+H_i$. But then
$Dec_i(x) = -x+H_i > -x$ and both $x,-x\in I(\pm k)$. \qed
\end{proof}

Now we are ready to define sequences of intervals that are used to
describe states of computation after each periodic application of
functions $\hat{Q}^k_1$, $\hat{Q}^k_2$ and $\hat{Q}^k_3$ to a reduced
sequence of numbers of ones in columns.

\begin{definition}
Let $k\ge 3$. By $Z^k$ we denote the sequence
$(0,0,0)^{\lceil\frac{k-2}{3}\rceil}$ and, in the similar way, $U^k_1 =
(\pm k, \pm k, -k)^{\lceil\frac{k-2}{3}\rceil}$, $U^k_2 = (\pm k, -k,
\pm k)^{\lceil\frac{k-2}{3}\rceil}$ and $U^k_0 = (-k, \pm k, \pm
k)^{\lceil\frac{k-2}{3}\rceil}$.

Next, let $V^k_1 = \bigoplus_{i=1}^{\lceil\frac{k-2}{3}\rceil} (k-3i+2,
k-3i, -k)$, $V^k_2 = \bigoplus_{i=1}^{\lceil\frac{k-2}{3}\rceil}
(k-3i+1, -k, k-3i)$ and let $V^k_0 =
\bigoplus_{i=1}^{\lceil\frac{k-2}{3}\rceil} (-k, k-3i+1, k-3i-1)$.

Finally, let $W^k_1 = \bigoplus_{i=1}^{\lceil\frac{k-2}{3}\rceil}
(k-3i+2, k-3i, 0)$, $W^k_2 = \bigoplus_{i=1}^{\lceil\frac{k-2}{3}\rceil}
(k-3i+1, 0, k-3i)$ and let $W^k_0 =
\bigoplus_{i=1}^{\lceil\frac{k-2}{3}\rceil} (0, k-3i+1, k-3i-1)$.
\end{definition}

Note that all sequences defined above are of length
$3\lceil\frac{k-2}{3}\rceil \ge k-2$ and their elements are interval
descriptors as defined in Definition \ref{interval}.

\begin{definition}
Let $k\ge 3$. Let $\overline{a}=(a_1,\ldots,a_n)$ and
$\overline{b}=(b_1,\ldots,b_n)$ be any sequences, where $n\ge k-2$. For
$0\le i\le k-2$ let $join_k(i,\overline{a},\overline{b})$ denote $(a_1,
\ldots, a_i, b_{i+1}, \ldots, b_{k-2})$.
\end{definition}

\begin{definition}\label{def-X}
Let $k\ge 3$. Let $X^k_i$ denote a state sequence after $i$ stages and
be defined as:
$$ X^k_i = \left\{
\begin{array}{ll}
join_k(\lceil\frac{i+1}{2}\rceil,V^k_{i~mod~3},U^k_{i~mod~3}) & 
					\mbox{ for }i = 1,\ldots,2k-5\\
join_k(3k-6-i,V^k_{i~mod~3},W^k_{i~mod~3}) & 
					\mbox{ for }i = 2k-4,\ldots,3k-7\\
join_k(\lceil\frac{i+1-(3k-6)}{2}\rceil,Z^k,W^k_{i~mod~3}) & 
					\mbox{ for }i = 3k-6,\ldots,5k-12\\
\end{array}
\right. $$
\end{definition}

For example, to create $X^k_1$ we take the first element of $V^k_1$ and
the rest of elements from $U^k_1$ obtaining the sequence $(k-1,\pm
k,-k,\pm k,\pm k,-k,\pm k,\pm k, -k, \ldots)$ of length $k-2$. In the
next lemma we claim that $X^k_1$ really describes the state after the
first stage of computation, where input is a balanced 2-flat sequence.

\begin{lemma}\label{l6}
Let $k\ge 3$ and let $\overline{c} = (c_1,\ldots,c_{b_k})$ be a balanced
2-flat sequence of integers from $[0,2^{k-1}-1]$. Let $s =
height(\overline{c})$ and let $\overline{d} =
reduce(\overline{c})$. Then $(\otimes \hat{Q}^k_1)(\overline{d}) \in
I(X^k_1)$.
\end{lemma}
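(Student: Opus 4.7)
The plan is to propagate interval memberships through $\hat{Q}^k_1$ one function at a time, using Fact~\ref{fct-12}.

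Before applying any function I would record two structural consequences of the hypotheses on $\overline{d}=\textit{reduce}(\overline{c})$. First, since each $c_i\in[0,H_{k-1}]$ and the sequence is balanced with height $s$, every component satisfies $|d_i|\le H_{k-1}/2$, so $d_i\in I(\pm k)$. Second, and crucially, for every $j\in\{1,\dots,\lfloor(k-3)/3\rfloor\}$ the pair-sum bound $d_{3j}+d_{3j+1}\le 0$ holds. This is where the balanced-and-2-flat hypotheses combine: the indices $3j$ and $2k-4-3j$ have the same parity (their difference $2k-4-6j$ is even) and satisfy $3j\le 2k-4-3j$ (since $j\le(k-3)/3<(k-2)/3$), so the 2-flatness of the common parity-subsequence yields $c_{3j}\le c_{2k-4-3j}$; combining this with the balanced identity $c_{3j+1}+c_{2k-4-3j}=s$ rearranges to $c_{3j}+c_{3j+1}\le s$, i.e., $d_{3j}+d_{3j+1}\le 0$.

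With these two facts I would walk through $\hat{Q}^k_1=(Cyc)\oplus\bigoplus_{i=1}^{\lfloor(k-3)/3\rfloor}(Dec_{k-3i},MinMax)\oplus F^k_1$ in order. The opening $Cyc$ acts on position~$1$: from $d_1\in I(\pm k)$ and the identity $Cyc(x)=\max(x,-x-1)$, a direct check shows $Cyc(d_1)\in[-1/2,H_{k-1}/2]=I(k-1)$, matching $X^k_1[1]=k-1$. For each block $(Dec_{k-3i},MinMax)$ acting on positions $3i-1,3i,3i+1$: Fact~\ref{fct-12}(1) gives $Dec_{k-3i}(I(\pm k))\subseteq I(\pm k)$, matching $X^k_1[3i-1]=\pm k$; at positions $(3i,3i+1)$ the pair-sum bound forces $\min(d_{3i},d_{3i+1})\le 0$, placing position $3i$ in $I(-k)=I(X^k_1[3i])$ while $\max$ sits automatically in $I(\pm k)=I(X^k_1[3i+1])$.

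Finally I would handle the tail $F^k_1$ by cases of $k\bmod 3$: when $k\equiv 0$, $F^k_1$ is empty and all $k-2$ positions are already processed; when $k\equiv 1$, $Dec_1$ at position $k-2$ preserves $I(\pm k)$ by Fact~\ref{fct-12}(1); when $k\equiv 2$, $Dec_2$ at position $k-3$ preserves $I(\pm k)$ and then $Min$ at position $k-2$ maps $I(\pm k)$ into $I(-k)$ by Fact~\ref{fct-12}(3). The chief technical obstacle is the pair-sum bound above: it is the only place where the balanced and 2-flat hypotheses must be combined non-coordinate-wise, and it is precisely what forces each $MinMax$ output into the narrower $I(-k)$ rather than merely $I(\pm k)$; the remainder is routine interval tracking through Fact~\ref{fct-12}.
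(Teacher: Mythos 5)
Your proof is correct, and while its outer skeleton (pushing interval memberships through $\hat{Q}^k_1$ block by block via Fact~\ref{fct-12}) matches the paper, the way you justify the one non-trivial membership is genuinely different. The paper first proves a global dichotomy on the reduced input (its Fact~\ref{fct-13}): either every odd-indexed $d_i$ lies in $I(k-1)$ and every even-indexed one in $I(-k)$, or vice versa; establishing this requires a three-way case analysis on $d_1$, including the fiddly half-integer boundary case $d_1=-\frac{1}{2}$, after which each $MinMax$ receives one argument from $I(k-1)$ and one from $I(-k)$ and Fact~\ref{fct-12} applies. You instead prove only the local inequality $d_{3j}+d_{3j+1}\le 0$ at each $MinMax$ block, obtained by combining the 2-flatness comparison $c_{3j}\le c_{2k-4-3j}$ (same parity, smaller index) with the balanced identity $c_{3j+1}+c_{2k-4-3j}=s$; this forces the $\min$-output into $I(-k)$ directly and bypasses the global alternating structure and its boundary cases entirely. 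You also replace the paper's appeal to $Cyc(I(-k))\subseteq I(k-1)$ and $Cyc(I(0))\subseteq I(0)$ with the direct inclusion $Cyc(I(\pm k))\subseteq I(k-1)$, which is not literally listed in Fact~\ref{fct-12} but is a one-line check from $Cyc(x)=\max(x,-x-1)$. The trade-off: your argument is shorter and more robust at the boundary, while the paper's dichotomy yields a sharper description of the state after reduction that it happens not to need for this lemma. Both arguments correctly account for all $k-2$ positions of $X^k_1=(k-1)\oplus\bigoplus(\pm k,-k,\pm k)\oplus(Y^k_1)$, including the tail cases of $F^k_1$ by residue of $k$ modulo $3$.
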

\begin{proof}
Recall that $H_i=2^i-1$. Let $\overline{d} = (d_1,\ldots,d_{k-2})$ By
Definitions \ref{flat} and \ref{reduce} $s=c_i+c_{b_k-i+1}$ and each
$d_i = c_i-\frac{s}{2} = \frac{c_i-c_{b_k-i+1}}{2}$. Observe that each
$d_i\in I(\pm k) = [-\frac{H_{k-1}}{2},\frac{H_{k-1}}{2}]$. It follows
from the following sequence of inequalities: $-\frac{H_{k-1}}{2} \le 
\frac{-c_{b_k-i+1}}{2} \le \frac{c_i-c_{b_k-i+1}}{2} \le \frac{c_i}{2} \le 
\frac{H_{k-1}}{2}$. Moreover, the sequence $\overline{d}$ is 2-flat, because 
$\overline{c}$ is 2-flat. That means that $d_1\le d_3\le d_5\le \ldots \le 
d_{k'} \le d_1+1$ and $d_2\le d_4\le d_6\le \ldots \le d_{k''} \le d_2+1$, 
where $k' = 2\lceil\frac{k-2}{2}\rceil-1$ and $k'' = 
2\lfloor\frac{k-2}{2}\rfloor$.
\begin{fact} \label{fct-13}
Either $-\frac{1}{2}\le d_1$ and $d_{k''}\le 0$ or $-\frac{1}{2}\le d_2$ 
and $d_{k'}\le 0$.
\end{fact}
To prove the fact we consider three cases of the value of $d_1$.

\noindent{\em Case $d_1\ge 0$:} In this case we only have to prove that 
$d_{k''}\le 0$. But it is true since $d_{k''}=\frac{c_{k''}-c_{b_k-k''+1}}{2} 
\le \frac{c_{b_k}-c_1}{2} = -d_1 \le 0$. The last inequality holds, because 
$\overline{c}$ is 2-flat and both $k''$ and $b_k$ are even.

\noindent{\em Case $d_1\le -1$:} Then $d_{k'} \le d_1+1 \le 0$. Thus we have 
only to prove that $d_2\ge -\frac{1}{2}$. Similar to the previous case, we 
observe that $d_2 = \frac{c_2-c_{b_k-1}}{2} \ge \frac{c_{b_k}-1-(c_1+1)}{2} = 
-d_1-1 \ge 0$.

\noindent{\em Case $d_1 = -\frac{1}{2}$:} Then $d_{k'} \le d_1+1 = \frac{1}{2}$
and from $-\frac{1}{2} = \frac{c_1-c_{b_k}}{2}$ we get $c_1+1 = c_{b_k} \le
c_2+1$. Since $c_2 \ge c_1$, we have $d_2 \ge d_1 = -\frac{1}{2}$. If $d_{k'}\le
0$, we are done. Otherwise $d_{k'} = \frac{1}{2}$ and we have to show that
$d_{k''} \le 0$. To this end let us notice that $\frac{s}{2} = c_1-d_1 =
c_1+\frac{1}{2}$ and $c_{b_k-k'+1} = s-c_{k'} = s-(d_{k'}+\frac{s}{2}) =
\frac{s}{2}-\frac{1}{2} = c_1$. It follows that $c_{k''} = c_1$ since $c_1 \le c_2
\le c_{k''} \le c_{b_k-k'+1} = c_1$. Thus $d_{k''} = d_1 = -\frac{1}{2}$ and this
concludes the proof of Fact \ref{fct-13}.

From Fact \ref{fct-13} and since $\overline{d}$ is 2-flat we can immediately 
get the following corollary.
\begin{corollary}
$\overline{d} \in I((k-1,-k,k-1,-k,\ldots) \cup I(-k,k-1,-k,k-1,\ldots)$.
\end{corollary}
To finish the proof of the lemma we need one more fact:
\begin{fact}
$(\otimes \hat{Q}^k_1)(I((k-1,-k,k-1,-k,\ldots) \cup 
I(-k,k-1,-k,k-1,\ldots)) 
\subseteq I(X^k_1)$.
\end{fact}
To prove this fact let us firstly represent $X^k_1$ in the same form as 
$\hat{Q}^k_1$ is.
\[X^k_1 = (k-1) \oplus \bigoplus_{i=1}^{\lfloor\frac{k-3}{3}\rfloor} (\pm 
k,-k,\pm k) \oplus (Y^k_1), \]
where $Y^k_1$ is empty if $k\equiv 0 \;(\mbox{mod } 3)$, $Y^k_1 = (\pm k)$ if 
$k\equiv 1 \;(\mbox{mod } 3)$ and $Y^k_1 = (\pm k,-k)$ if 
$k\equiv 2 \;(\mbox{mod } 3)$. Looking now at both representations we can see 
that the output of $Cyc$ function should be in $I(k-1)$, the output of each 
$Dec_i$ function should be in $I(\pm k)$ and the output of $MinMax$ should be 
in $I(-k)\times I(\pm k)$. If $Min$ function is used, then its output should 
be in $I(-k)$. The input to $Cyc$ is either from $I(k-1)$ or from $I(-k)$. In 
both cases we get desired output according to Fact \ref{fct-12}.2. In the 
similar way, the input to each $Dec_i$ function is either from $I(k-1) 
\subseteq I(\pm k)$ or from $I(-k) \subseteq I(\pm k)$. But $Dec_i(I(\pm k)) 
\subseteq I(\pm k)$ by Fact \ref{fct-12}.1. From Fact \ref{fct-12}.3 we have 
$Min(I(\pm k)) \subseteq I(-k)$. Finally, the input to $MinMax$ function is 
either from $I(k-1)\times I(-k)$ or from $I(-k)\times I(k-1)$. For this 
function the result follows from Fact \ref{fct-12}.4. \qed
\end{proof}

\begin{lemma}\label{l7}
For $k\ge 3$ and each $i=1,2,\ldots,5k-13$ the following inclusion holds:
\[ (\bigotimes \hat{Q}^k_{i~mod~3 +1})(I(X^k_i))\subseteq I(X^k_{i+1}). \]
\end{lemma}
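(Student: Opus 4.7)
The inclusion is a position-by-position containment and can be verified separately for each $i$, so no induction is required. My plan is to do a case analysis on which of the three ranges in Definition \ref{def-X} the index $i$ lies in (Phase 1: $1\le i\le 2k-5$; Phase 2: $2k-4\le i\le 3k-7$; Phase 3: $3k-6\le i\le 5k-13$) and, within each phase, on the residue $r=i\bmod 3$, since $r$ selects both the applied function list $\hat{Q}^k_{r+1}$ and which of $U^k_r, V^k_r, W^k_r, Z^k$ contribute to $X^k_i$ and $X^k_{i+1}$.

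The key structural observation is that $\hat{Q}^k_{r+1}$ and $X^k_i$ both partition into blocks of three consecutive positions (plus a leading singleton for $Cyc$ when $r+1=1$ and a short tail $F^k_{r+1}$ depending on $k\bmod 3$), and these partitions are aligned. In particular, each full block of $\hat{Q}^k_{r+1}$ is of the form $(Dec_m, MinMax)$ or a cyclic shift thereof, and it acts on a triple of intervals drawn from the alphabet $\{I(0), I(m+1), I(\pm k), I(-k), I(k-1)\}$ determined entirely by whether the block lies in the $V$, $U$, $W$, or $Z$ region of $X^k_i$. The heart of the proof is then a finite, tabular check that each combination (region of $X^k_i$ versus position in the block) maps under the corresponding function of $\hat{Q}^k_{r+1}$ into the correct position of $X^k_{i+1}$. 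Every such block check is a direct invocation of one of the five items of Fact \ref{fct-12}; for instance, in the $V$-prefix of Phase 1 the block $I(m+1,\pm k,-k)$ is sent by $(Dec_m, MinMax)$ into $I(m,-k,\pm k)$, which matches the corresponding block of $X^k_{i+1}$. Blocks lying entirely in the $U$-region are stabilized by the second clauses of items 1, 2, and 5, while blocks involving $I(0)$ in Phases 2 and 3 rely on the analogous stability of $Dec_m$, $Cyc$, and $Min$ on $I(0)$, together with $MinMax(I(1,0))\subseteq I(0,1)$.

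The main obstacle will be the bookkeeping at phase boundaries and the index matching inside the $V$- and $W$-triples. When the $V$-prefix grows by one in Phase 1, the newly created position must change from $I(\pm k)$ (the leftmost entry of the old $U$-suffix, which is contained in $I(m+1)$ for any $m$) to $I(k-3l+c)$ for certain constants $l$ and $c$ read off from $V^k_r$; one must verify that the $Dec$-index in $\hat{Q}^k_{r+1}$ at this position is exactly $k-3l+c$, which is a direct comparison of Definitions \ref{def-Cyc} and of $V^k_r, W^k_r$. Analogous matching occurs at the Phase 2/Phase 3 transition $i=3k-7\to 3k-6$, where the leftmost $V$-entry collapses to $I(0)$ via $Dec_1$ acting on $I(1)\subseteq I(2)$, or via $Min$ acting on $I(1)$. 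Finally, the tails $F^k_1, F^k_2, F^k_3$ induce a small number of residue-of-$k$ subcases on the last one or two positions; each is a one-line verification against Fact \ref{fct-12} items 1 and 3. Once the alignment is set up cleanly, the rest is mechanical.
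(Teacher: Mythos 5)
Your plan is essentially the paper's proof: both reduce the lemma to a finite, local case check in which every transition of interval descriptors is discharged by one of the five items of Fact~\ref{fct-12}, and the delicate points you flag (growth of the $V$-prefix, the $V$-to-$W$ and $W$-to-$Z$ boundaries, the tails $F^k_i$) are exactly where the paper's three case tables and its small parity arguments (excluding impossible source/target region pairs) do their work. The only difference is bookkeeping --- the paper organizes the check by function type ($Cyc$, $Dec_*$, $Min$, $MinMax$) ranging over all admissible $U/V/W/Z$ region pairs, while you organize it by phase and aligned position block --- and your sample computation (the block $I(k-1,\pm k,-k)$ sent to $I(k-2,-k,\pm k)$ by $(Dec_{k-2},MinMax)$ via Fact~\ref{fct-12}, items 1 and 4) agrees with the paper's tables.
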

\begin{proof}
We have to prove that for $k\ge 3$ and $x = 1,2,3$ the following inclusions 
are true: $(\bigotimes \hat{Q}^k_x)(I(X^k_{3j+x-1}))\subseteq 
I(X^k_{3j+x})$, where $j=1,2,\ldots\lfloor\frac{5k-13}{3}\rfloor$ for $x=1$ 
and  $j=0,1,2,\ldots\lfloor\frac{5k-12-x}{3}\rfloor$ for $x=2,3$. The 
sequences $\hat{Q}^k_x$, $x=1,2,3$, are built of functions $Cyc$, 
$Dec_*$, $MinMax$ and $Min$ introduced in Definition \ref{def-Cyc}. We 
consider these function one after another analysing which positions in state
sequences are modified by them and what values are in that positions before 
and after applying a function. In the following, we denote by $A_{i,j}$ the $j$-th 
element of a sequence $A_i$.

The function $Cyc$ is used only in the definition of $\hat{Q}^k_1$ and is
applied to position 1 of state sequences $I(X^k_{3j})$, where
$j=1,2,\ldots\lfloor\frac{5k-13}{3}\rfloor$. Thus it is enough to show the
inclusion $Cyc(I(X^k_{3j,1}))\subseteq I(X^k_{3j+1,1})$. By Definition
\ref{def-X} the argument of $Cyc\cdot I$ can be: $X^k_{3j,1} = V^k_{0,1} = -k$
for $3j\le 3k-9$ or $X^k_{3j,1} = W^k_{0,1} = 0$ for $3j=3k-6$ or $X^k_{3j,1}
= Z_1 = 0$ for $3j>3k-6$. The corresponding value of the next state sequence
is $X^k_{3j+1,1} = V^k_{1,1} = k-1$ for $3j+1\le 3k-8$ or $X^k_{3j+1,1} = Z_1
= 0$ for $3j+1\ge 3k-5$. Using Fact \ref{fct-12}, inclusions $Cyc(I(-k))
\subseteq I(k-1)$ and $Cyc(I(0)) \subseteq I(0)$ are true and we are done.

In the sequence $\hat{Q}^k_1$ we have several $Dec_{k-3l}$ functions, each
$Dec_{k-3l}$ is on the corresponding position $3l-1$ and it is applied to the
state sequence $I(X^k_{3j})$, where $l=1,\ldots,\lfloor\frac{k-1}{3} \rfloor$.
Similarly, in $\hat{Q}^k_2$ we have several $Dec_{k-3l+1}$ functions, each
$Dec_{k-3l+1}$ is on the corresponding position $3l-2$ and it is applied to the 
state sequence $I(X^k_{3j+1})$, where $l=1,\ldots,\lfloor\frac{k}{3} \rfloor$.
Finally, in $\hat{Q}^k_3$ we have $Dec_{k-3l-1}$ functions, each $Dec_{k-3l-1}$
is on the corresponding position $3l$ and it is applied to the state sequence
$I(X^k_{3j+2})$, where $l=1,\ldots,\lfloor\frac{k-2}{3}\rfloor$. Assuming that
$\hat{Q}^k_0$ also denotes $\hat{Q}^k_3$, we can rewrite our proof goal for that
functions as the following fact.
\begin{fact}
For $k\ge 3$ and $x=0,1,2$ the set $Dec_{k-3l+x-1}(I(X^k_{3j+x-1,3l-x}))$ is
a subset of $I(X^k_{3j+x,3l-x})$, where $l=1,\ldots,\lfloor\frac{k-2+x}{3}
\rfloor$, $j=1,2,\ldots\lfloor\frac{5k-12-x}{3}\rfloor$ for $x=0,1$ and
$j=0,1,2,\ldots\lfloor\frac{5k-14}{3}\rfloor$ for $x=2$.
\end{fact}
The sequences $X^k_*$ are defined with the help of sequences $U^k_*$, $V^k_*$,
$W^k_*$ and $Z_*$, therefore we prove the fact by considering all possible
cases in the following table. In it we assume that $U^k_{-1}=U^k_2$,
$V^k_{-1}=V^k_2$ and $W^k_{-1}=W^k_2$.
\setlength{\tabcolsep}{3pt}
\begin{center}
\begin{tabular}{||l|l||c|c||c||}
\hline\hline \multicolumn{1}{||c|}{Cases of} & 
      \multicolumn{1}{|c||}{Cases of} & Value of  & Value of  & Why  \\ 
      $s=X^k_{3j+x-1,3l-x}$  & $t=X^k_{3j+x,3l-x}$  & $s$  & $t$  & 
      $Dec_{k-3l+x-1}(I(s)) \subseteq I(t)$? \\ \hline
\hline $s=U^k_{x-1,3l-x}$ & $t=U^k_{x,3l-x}$ & $\pm k$  & $\pm k$  & 
      \multirow{5}{*}{Fact \ref{fct-12}.1}  \\ 
\cline{1-4} $s=V^k_{x-1,3l-x}$ & $t=V^k_{x,3l-x}$ & $k-3l+x$ & $k-3l+x-1$ &\\ 
\cline{1-4} $s=V^k_{x-1,3l-x}$ & $t=W^k_{x,3l-x}$ & $k-3l+x$ & $k-3l+x-1$ &\\ 
\cline{1-4} $s=W^k_{x-1,3l-x}$ & $t=W^k_{x,3l-x}$ & $k-3l+x$ & $k-3l+x-1$ &\\ 
\cline{1-4} $s=Z_{3l-x}$ & $t=Z_{3l-x}$ & 0 & 0 & \\ 
\hline\hline 
\end{tabular}
\end{center}
The two remaining cases: (1) $X^k_{3j+x-1,3l-x} = U^k_{x-1,3l-x}$ and 
$X^k_{3j+x,3l-x} = V^k_{x,3l-x}$ and (2) $X^k_{3j+x-1,3l-x} = W^k_{x-1,3l-x}$ 
and $X^k_{3j+x,3l-x} = Z_{3l-x}$ are not possible, because, otherwise, (1) 
$3j+x$ should be even and $\frac{3j+x}{2} = 3l-x-1$, which cannot hold for 
any integers $j$, $x$ and $l$; (2) $3j+x -(3k-6)$ should be even and 
$\frac{3j+x-(3k-6)}{2} = 3l-x-1$, which is not true for the same reason.

Now we consider the $Min$ function. It appears in the definition of 
$\hat{Q}^k_1$ ($\hat{Q}^k_2$ or $\hat{Q}^k_3$, respectively) on the position 
$k-2$ if $k~mod~3 = 2$ ($k~mod~3 = 1$ or $k~mod~3 = 0$, respectively). Thus, 
to prove the lemma, it suffices to show the following fact.
\begin{fact}
For $k\ge 3$ and $x=0,1,2$ the set $Min(I(X^k_{3j+x-1,k-2}))$ is a subset of
$I(X^k_{3j+x,k-2})$, where  $k-2 \equiv 1-x~(mod~3)$
$j=1,2,\ldots\lfloor\frac{5k-12-x}{3}\rfloor$ for $x=0,1$ and
$j=0,1,2,\ldots\lfloor\frac{5k-14}{3}\rfloor$ for $x=2$.
\end{fact}
As in the case of $Dec_*$ functions we prove the fact by considering all 
possible cases in the following table. In it we assume that $U^k_{-1}=U^k_2$,
$V^k_{-1}=V^k_2$ and $W^k_{-1}=W^k_2$.
\begin{center}
\begin{tabular}{||l|l||c|c||c||}
\hline\hline
      \multicolumn{1}{||c|}{Cases of} & \multicolumn{1}{|c||}{Cases of} & 
      Value of  & Value of  & Why  \\ 
      $s=X^k_{3j+x-1,k-2}$  & $t=X^k_{3j+x,k-2}$  & $s$  & $t$  & 
      $Min(I(s)) \subseteq I(t)$? \\ \hline
\hline\multirow{2}{*}{$s=U^k_{x-1,k-2}$} & $t=U^k_{x,k-2}$ & $\pm k$ & $-k$  &
      \multirow{6}{*}{Fact \ref{fct-12}.3}  \\ 
\cline{2-4} & $t=V^k_{x,k-2}$ & $\pm k$ &  $-k$ & \\ 
\cline{1-4} $s=V^k_{x-1,k-2}$ & $t=W^k_{x,k-2}$ & $1$ & $0$ & \\ 
\cline{1-4} \multirow{2}{*}{$s=W^k_{x-1,k-2}$} & $t=W^k_{x,k-2}$ & $1$ & $0$ 
& \\ 
\cline{2-4} & $t=Z_{x,k-2}$ & $1$ & $0$ & \\ 
\cline{1-4} $s=Z_{k-2}$ & $t=Z_{k-2}$ & 0 & $0$ & \\ 
\hline\hline 
\end{tabular}
\end{center}
The remaining case $X^k_{3j+x-1,k-2} = V^k_{x-1,k-2}$ and 
$X^k_{3j+x,k-2} = V^k_{x,k-2}$ is not possible, because, otherwise $3j+x-1 = 
2k-5$, that is, $2(k-2) = 3j+x$, but $k-2 \equiv 1-x~(mod~3)$ and in the 
consequence $x \equiv 2(1-x)~(mod~3)$ - contradiction.

The last function we have to consider is $MinMax$, which appears in the
definition of all $\hat{Q}^k_x$, $x=1,2,3$, functions. In $\hat{Q}^k_1$
($\hat{Q}^k_2$ and $\hat{Q}^k_3$, respectively) a copy of $MinMax$ is on
positions $(3,4), (6,7), \ldots$ ((2,3), (5,6), \ldots and (1,2), (4,5), 
\ldots, respectively). Thus, to prove the lemma, it suffices to show the
following fact.
\begin{fact}
For $k\ge 3$ and $x=1,2,3$ the set $MinMax(I(X^k_{3j+x-1,3l-x+1},
X^k_{3j+x-1,3l-x+2}))$ is a subset of $I(X^k_{3j+x,3l-x+1},
X^k_{3j+x,3l-x+2})$, where $l=1,\ldots,\lfloor\frac{k-4+x}{3}\rfloor$
$j=1,2,\ldots\lfloor\frac{5k-13}{3}\rfloor$ for $x=1$ and
$j=0,1,2,\ldots\lfloor\frac{5k-12-x}{3}\rfloor$ for $x=2,3$.
\end{fact}
As in the case of previous functions we prove the fact by considering all
possible cases in the following table. In it we assume that $U^k_3=U^k_0$,
$V^k_3=V^k_0$ and $W^k_3=W^k_0$. To reduce the size of the table we also use
the following shortcuts: $a=3j+x$, $b=3l-x+1$ and $y=k-3l+x-2$. Observe that
$2\le y\le k-2$, therefore $I(0)\subseteq I(y)\subseteq I(\pm k)$ and we can 
also apply Fact \ref{fct-12}.5.
\begin{center}
\begin{tabular}{|*{4}{|l}|*{4}{|c}||c||}
\hline\hline \multicolumn{2}{||c|}{Cases of $(s_1,s_2)$} & 
      \multicolumn{2}{|c||}{Cases of $(t_1,t_2)$} & 
      \multicolumn{2}{|c|}{Value of} & 
      \multicolumn{2}{|c||}{Value of}  & Why $I(t_1,t_2)\supseteq$ \\ 
      $X^k_{a-1,b-1}$  & $X^k_{a-1,b}$  & $X^k_{a,b-1}$  & $X^k_{a,b}$  & 
      $s_1$  & $s_2$  & $t_1$  & $t_2$ & $MinMax(I(s_1,s_2))$? \\ \hline
\hline \multirow{2}{*}{$U^k_{x-1,b-1}$} & \multirow{2}{*}{$U^k_{x-1,b}$} &
      $U^k_{x,b-1}$ & $U^k_{x,b}$ & 
    $\pm k$ & $-k$ & $-k$ & $\pm k$ & \multirow{2}{*}{Fact \ref{fct-12}.4} \\
\cline{3-8} & & $V^k_{x,b-1}$ & $U^k_{x,b}$ & 
      $\pm k$ & $-k$ & $-k$ & $\pm k$ &   \\ 
\hline \multirow{5}{*}{$V^k_{x-1,b-1}$} & \multirow{2}{*}{$U^k_{x-1,b}$} & 
\multirow{2}{*}{$V^k_{x,b-1}$} & $U^k_{x,b}$ & 
    $y$ & $-k$ & $-k$ & $\pm k$ &  \multirow{10}{*}{Fact \ref{fct-12}.5}  \\ 
\cline{4-8} &  &  & $V^k_{x,b}$ & $y$ & $-k$ & $-k$ & $y$ &  \\ 
\cline{2-8} & \multirow{2}{*}{$V^k_{x-1,b}$} &\multirow{2}{*}{$V^k_{x,b-1}$}  
      & $V^k_{x,b}$ & $y$ & $-k$ & $-k$ & $y$ &   \\ 
\cline{4-8} & & & $W^k_{x,b}$ & $y$ & $-k$ & $-k$ & $y$ &   \\ 
\cline{2-8} & $W^k_{x-1,b}$ & $W^k_{x,b-1}$ & $W^k_{x,b}$ & 
      $y$ & $0$ & $0$ & $y$ &   \\ 
\cline{1-8} \multirow{2}{*}{$W^k_{x-1,b-1}$} & \multirow{2}{*}{$W^k_{x-1,b}$} 
      & $W^k_{x,b-1}$ & $W^k_{x,b}$ & $y$ & $0$ & $0$ & $y$ &   \\ 
\cline{3-8} & & $Z_{b-1}$ & $W^k_{x,b}$ & $y$ & $0$ & $0$ & $y$ & \\ 
\cline{1-8} \multirow{3}{*}{$Z_{b-1}$} & \multirow{2}{*}{$W^k_{x-1,b}$} & 
      \multirow{2}{*}{$Z_{b-1}$} & $W^k_{x,b}$ & $0$ & $0$ & $0$ & $y$ &  \\ 
\cline{4-8} & & & $Z_{b}$ & $0$ & $0$ & $0$ & $0$ &  \\ 
\cline{2-8} & $Z_{b}$ & $Z_{b-1}$ & $Z_{b}$ & $0$ & $0$ & $0$ & $0$ &  \\ 
\hline\hline 
\end{tabular}
\end{center} \qed
\end{proof}

\begin{lemma}
Let $k\ge 3$ and let $\overline{c} = (c_1,\ldots,c_{b_k})$ be a balanced
2-flat sequence of integers from $[0,2^{k-1}-1]$ and let
$s=height(\overline{c})$. Let $f = f_{5k-12}\circ f_{5k-13}\circ \ldots\circ
f_1$, where $f_i = Q^k_{((i-1)~mod~3)+1}$, $i=1,\ldots,5k-12$. Then 
$f(\overline{c})
= (\frac{s}{2})^{b_k}$ if $s$ is even or $f(\overline{c}) =
(\frac{s-1}{2})^{k-2}\oplus(\frac{s+1}{2})^{k-2}$ otherwise.
\end{lemma}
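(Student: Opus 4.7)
The plan is to track the computation in reduced form as set up by Lemmas \ref{l4}--\ref{l7} and then extend back. First I would set $\overline{d} = reduce(\overline{c}) \in R^{k-2}$. Since by Lemma \ref{l4} every function in $Q^k_1\cup Q^k_2\cup Q^k_3$ preserves balance and height, the full composition $f$ maps balanced sequences of height $s$ to balanced sequences of height $s$, and by Lemma \ref{l5} its action on the reduced side is realized by the corresponding composition of $\bigotimes\hat{Q}^k_1,\bigotimes\hat{Q}^k_2,\bigotimes\hat{Q}^k_3$. Hence it suffices to show that after $5k-12$ reduced stages, every coordinate of the reduced state equals $0$ (if $s$ is even) or $-\frac{1}{2}$ (if $s$ is odd).

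Next I would invoke Lemma \ref{l6} to conclude that after the first reduced stage the state lies in $I(X^k_1)$, and then iterate Lemma \ref{l7} for $i=1,2,\ldots,5k-13$ to reach $I(X^k_{5k-12})$. The key explicit point is that $X^k_{5k-12}$ is the all-zero interval descriptor of length $k-2$: plugging $i=5k-12$ into the third branch of Definition \ref{def-X} gives $\lceil\frac{(5k-12)+1-(3k-6)}{2}\rceil=\lceil\frac{2k-5}{2}\rceil=k-2$, so $X^k_{5k-12}=join_k(k-2,Z^k,W^k_{(5k-12)\bmod 3})$ coincides with the first $k-2$ entries of $Z^k=(0,0,0)^{\lceil(k-2)/3\rceil}$, all of which are $0$. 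Consequently every coordinate of the final reduced vector lies in $I(0)=[-\frac{1}{2},0]$.

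To finish I would invoke integrality: each $c_i$ is an integer, and the functions $cyc^k,dec^k_j,mov^k_j$ of Definition \ref{defFun} are built from $\min$ and $\max$ of integer inputs and integer offsets $h_j=2^{k-j-1}-1$, so the sequence stays integral throughout all $5k-12$ applications of $Q^k_j$. After extension, each coordinate equals $d_i+\frac{s}{2}$ with $d_i\in[-\frac{1}{2},0]$, i.e.\ lies in $[\frac{s}{2}-\frac{1}{2},\frac{s}{2}]$. If $s$ is even, the only integer in this interval is $\frac{s}{2}$, so the first $k-2$ coordinates all equal $\frac{s}{2}$ and by balance so do the last $k-2$, giving $(\frac{s}{2})^{b_k}$. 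If $s$ is odd, the only integer in the interval is $\frac{s-1}{2}$, so the first $k-2$ coordinates all equal $\frac{s-1}{2}$ and balance forces $c_{b_k-i+1}=s-\frac{s-1}{2}=\frac{s+1}{2}$ in the second half, yielding the stated form. I expect the main obstacle to be the explicit identification of $X^k_{5k-12}$ as the all-zero vector, together with verifying that Lemma \ref{l7} is indeed applicable for every $i$ up to $5k-13$ (in particular across the two break-points $i=2k-5$ and $i=3k-6$ where Definition \ref{def-X} switches branches); once the enclosing intervals have shrunk to $I(0)$, the parity dichotomy is essentially mechanical.
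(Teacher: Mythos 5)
Your proposal is correct and follows essentially the same route as the paper: reduce via Lemma \ref{l5}, land in $I(X^k_1)$ by Lemma \ref{l6}, iterate Lemma \ref{l7} to reach $I(X^k_{5k-12})=I(0^{k-2})$, and then use integrality together with balance to pin down the exact values before extending back. The only cosmetic difference is that you track integrality on the unreduced sequence rather than tracking membership in $\mathbb{Z}$ versus $\mathbb{Z}+\frac{1}{2}$ on the reduced side as the paper does (which, incidentally, sidesteps the paper's sign slip of writing $\frac{1}{2}$ where $-\frac{1}{2}$ is meant).
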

\begin{proof}
Since each $f_i$ maps a balanced sequence to a balanced one, let $\hat{f}_i
= reduce(f_i,s) = \bigotimes\hat{Q}^k_{((i-1)~mod~3)+1}$, where the later
equality follows from Lemma \ref{l5}. Let also $\overline{d}_0 =
reduce(\overline{c})$ and let $\overline{d}_i = \hat{f}_i(\overline{d}_{i-1})$
for $i=1,\ldots,5k-12$. Then $\overline{d}_1\in I(X^k_1)$ by Lemma \ref{l6}
and for $i=2,\ldots,5k-12$ we get $\overline{d}_i\in I(X^k_i)$ by an easy
induction and Lemma \ref{l7}. Let $\mathbb{Z}$ denote as usual the set of 
integers. By $\mathbb{Z}_{\frac{1}{2}}$ we will denote the set 
$\{z+\frac{1}{2}| z\in\mathbb{Z}\}$. Looking at Definitions \ref{reduce} and 
\ref{def-Cyc} observe the 
following fact:
\begin{fact}
If $s$ is even then all elements of sequences $\overline{d}_i$, $i=0,\ldots, 
5k-12$, are integers. If $s$ is odd then  all elements of sequences 
$\overline{d}_i$, $i=0,\ldots, 5k-12$, are in $\mathbb{Z}_{\frac{1}{2}}$.
\end{fact}
Since $\overline{d}_{5k-12} \in I(X^k_{5k-12}) = I(0^{k-2})$ and $I(0)\cap 
\mathbb{Z} = \{0\}$ and $I(0)\cap \mathbb{Z}_{\frac{1}{2}} = 
\{\frac{1}{2}\}$, it follows that $\overline{d}_{5k-12} = 0^{k-2}$ if $s$ is 
even and  $\overline{d}_{5k-12} = \frac{1}{2}^{k-2}$, otherwise. Applying now 
the definition of $s$-extended sequence to $0^{k-2}$ and $\frac{1}{2}^{k-2}$ 
we get the desired conclusion of the lemma. \qed
\end{proof}

In this way, with respect to Lemma \ref{l3}, we have proved that the network 
$M_k$ is able to merge in $5k-12$ stages two sorted sequences given in odd 
and even registers, provided that the numbers of ones in our matrix columns 
form a balanced sequence. If the sequence is not balanced, $k-3$ additional 
stages are needed to get a sorted output.

\subsection{Analysis of General Columns}

In a general case we will use balanced sequences as lower and upper bounds on
the numbers of ones in our matrix columns and observe that $Q^k_1$, $Q^k_2$
and $Q^k_3$ are monotone functions (see Fact \ref{fct-10}).

\begin{definition}\label{def-14}
Let $k\ge 3$ and let $\overline{c} = (c_1,\ldots,c_{b_k})$ be a
2-flat sequence of integers from $[0,2^{k-1}-1]$ that is not balanced.
Since both $\overline{c}_{odd} =(c_1,\ldots,c_{b_k-1})$ and 
$\overline{c}_{evn} =(c_2,\ldots,c_{b_k})$ are 
flat sequences, let $i$ ($j$, respectively) be such that $c_{2i-1} < 
c_{2i+1}$ 
($c_{b_k-2j} < c_{b_k-2j+2}$, respectively) or let $i=k-2$ ($j=k-2$) if 
$\overline{c}_{odd}$ ($\overline{c}_{evn}$, respectively) is a constant 
sequence. The defined below sequences $\check{c}$ and $\hat{c}$ we will call 
lower and upper bounds of $\overline{c}$. If $i < j$ then for $l=1, \ldots, 
b_k$ \[
\check{c}_l = \left\{
\begin{array}{ll}
c_1       & \mbox{ if } l \mbox{ is odd and } l\le 2j-1\\
c_{b_k-1} & \mbox{ if } l \mbox{ is odd and } l\ge 2j+1\\
c_l       & \mbox{ if } l \mbox{ is even}\\
\end{array}
\right. \qquad
\hat{c}_l = \left\{
\begin{array}{ll}
c_{b_k-1} & \mbox{ if } l \mbox{ is odd}\\
c_{b_k}   & \mbox{ if } l \mbox{ is even}\\
\end{array}
\right.\] 
If $i > j$ then for $l=1, \ldots, b_k$ \[
\check{c}_l = \left\{
\begin{array}{ll}
c_1      & \mbox{ if } l \mbox{ is odd}\\
c_2      & \mbox{ if } l \mbox{ is even}\\
\end{array}
\right. \qquad
\hat{c}_l = \left\{
\begin{array}{ll}
c_l       & \mbox{ if } l \mbox{ is odd}\\
c_2       & \mbox{ if } l \mbox{ is even and } l\le b_k-2i\\
c_{b_k}   & \mbox{ if } l \mbox{ is even and } l >  b_k-2i\\
\end{array}
\right.\] 
\end{definition}
\begin{fact}\label{fct-19}
For $k\ge 3$ and any not balanced 2-flat sequence $\overline{c} =
(c_1,\ldots,c_{b_k})$ of integers from $[0,2^{k-1}-1]$ the sequences
$\check{c}$ and $\hat{c}$ are balanced, $height(\check{c})+1 =
height(\hat{c})$ and $\check{c}\le\overline{c}\le\hat{c}$.
\end{fact}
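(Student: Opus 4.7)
The plan is to directly verify the three claims---balance of $\check c$ and $\hat c$, the height relation, and the pointwise inequality---by exploiting the very rigid shape of a 2-flat sequence. Since $\overline c_{odd}$ and $\overline c_{evn}$ are each flat, the odd subsequence is constant equal to $c_1$ up to index $2i-1$ and equal to $c_1+1$ from $2i+1$ onward (with $i=k-2$ meaning constant), while the even subsequence is constant equal to $c_{b_k}$ or takes value $c_2=c_{b_k}-1$ up to index $b_k-2j$ and $c_{b_k}$ from $b_k-2j+2$ onward (with $j=k-2$ meaning constant). A direct computation of $c_l+c_{b_k-l+1}$ for $l=2,\ldots,k-2$ shows that $\overline c$ is balanced iff $i=j$, so the ``not balanced'' hypothesis splits precisely into the two cases $i<j$ and $i>j$ handled by Definition \ref{def-14}.

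For the case $i<j$, the bound $i<j\le k-2$ forces $i<k-2$, hence $c_{b_k-1}=c_1+1$. The sequence $\hat c$ has all odd entries equal to $c_{b_k-1}$ and all even entries equal to $c_{b_k}$, so every mirrored pair sums to $c_{b_k-1}+c_{b_k}$ and $\hat c$ is immediately balanced of height $c_{b_k-1}+c_{b_k}$. For $\check c$ I would verify each mirrored pair $(l,b_k-l+1)$ by subcases: when $l$ is odd with $l\le 2j-1$ its mate is an even index $\ge b_k-2j+2$ yielding sum $c_1+c_{b_k}$, while when $l$ is odd with $l\ge 2j+1$ (which is only possible if $j<k-2$) the mate is an even index $\le b_k-2j$ and the sum is $c_{b_k-1}+c_2=(c_1+1)+(c_{b_k}-1)=c_1+c_{b_k}$; the even-$l$ subcases are symmetric. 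This gives $height(\check c)=c_1+c_{b_k}$ and hence the height gap of one. The pointwise inequality $\check c\le\overline c\le\hat c$ is then immediate: $\check c$ and $\overline c$ agree except possibly on the odd positions $2i+1,\ldots,2j-1$ where $\overline c$ has already risen to $c_1+1=c_{b_k-1}=\hat c_l$, and on every even position $\overline c\le c_{b_k}=\hat c$ while $\overline c=\check c$.

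The case $i>j$ is handled by the symmetric argument with the roles of the odd and even subsequences swapped: $\check c$ collapses everything to the constant pattern with odd value $c_1$ and even value $c_2$, while $\hat c$ retains the original odd entries of $\overline c$ and lifts the even tail up to $c_{b_k}$. The main obstacle I anticipate is the bookkeeping of boundary subcases where $i=k-2$ or $j=k-2$ and one subsequence is actually constant; in each such situation one of the above subcases becomes vacuous, and the needed identity (either $c_{b_k-1}=c_1+1$ or $c_2=c_{b_k}-1$) must be deduced from the hypothesis $i\ne j$, which rules out precisely the combinations that would make the identity fail.
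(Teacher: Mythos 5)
Your proposal is correct and follows essentially the same route as the paper: a direct pairwise verification of balance for $\check{c}$ and $\hat{c}$ in the case $i<j$ (with the mirror pair sums reducing to $c_1+c_{b_k}$ via the identities $c_{b_k-1}=c_1+1$ and $c_2=c_{b_k}-1$ when the relevant subcases are nonvacuous), the height computation, a positionwise check of $\check{c}\le\overline{c}\le\hat{c}$ split by parity, and deferral of the symmetric case $i>j$. Your explicit observation that $\overline{c}$ is balanced iff $i=j$ --- so that the two cases of Definition~\ref{def-14} are exhaustive for non-balanced inputs --- is left implicit in the paper and is a small but welcome addition.
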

\begin{proof}
Let $i$ and $j$ be defined as in Definition \ref{def-14}. We will only 
consider the case $i<j$. The proof of the other case is similar. Directly 
from the definition we get that $\hat{c}$ is balanced. To see that 
$\check{c}$ is also balanced let us check for $l=1,\ldots,k-2$ whether the 
sum $\check{c}_{2l-1} + \check{c}_{b_k-2l+2}$ is constant.
\[\check{c}_{2l-1} + \check{c}_{b_k-2l+2} = \check{c}_{2l-1} + c_{b_k-2l+2} = 
\left\{
\begin{array}{ll}
c_1 + c_{b_k-2l+2} = c_1 + c_{b_k}  & \mbox{ if } l\le j\\
c_{b_k-1} + c_{b_k-2l+2} = c_{b_k-1} + c_2 & \mbox{ otherwise }\\
\end{array}
\right.\]
If $j = k-2$ there is no otherwise case and we are done. If $j < k-2$ then 
$c_{b_k}-c_2 = c_{b_k-1}-c_1 = 1$, because of the definition of $i$ and $j$ 
and we are also done. Moreover $height(\check{c}) + 1 = c_1 + c_{b_k} +1 = 
c_{b_k-1} + c_{b_k} = height(\hat{c})$. To prove that $\check{c}\le 
\overline{c}\le \hat{c}$ we consider even and odd indices. For even indices 
from the definition we have: $\check{c}_{2l} = c_{2l} \le c_{b_k} = 
\hat{c}_{2l}$. For odd indices $\hat{c}_{2l-1} = c_{b_k-1} \ge c_{2l-1} \ge 
c_1$. If $l\le j$ we are done, otherwise, $c_{2l-1} = c_{b_k-1} = 
\check{c}_{2l-1}$, because $\overline{c}_{odd}$ is flat. \qed
\end{proof}

\begin{theorem}\label{thm-19}
Let $k\ge 3$ and let $\overline{c} = (c_1,\ldots,c_{b_k})$ be a 2-flat
sequence of integers from $[0,2^{k-1}-1]$. Let $f = f_{6k-15}\circ
f_{6k-14}\circ \ldots\circ f_1$, where $f_i = Q^k_{((i-1)~mod~3)+1}$,
$i=1,\ldots,6k-15$. Then $f(\overline{c})$ is a flat sequence.
\end{theorem}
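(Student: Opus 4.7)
The plan is to exploit the monotonicity of each $Q^k_j$ (Fact~\ref{fct-10} and its corollary) to reduce the unbalanced case to the balanced case handled by the preceding lemma, via the sandwich supplied by Fact~\ref{fct-19}. If $\overline{c}$ is already balanced, the preceding lemma gives a flat sequence after $5k-12 \le 6k-15$ stages, and every additional stage preserves flatness: one checks directly that both the constant sequence $(a)^{b_k}$ and the split-constant sequence $(a)^{k-2}\oplus(a+1)^{k-2}$ are fixed points of $cyc^k$, of each $dec^k_j$, and of each $mov^k_j$, since the $\min$/$\max$ operations of Definition~\ref{defFun} evaluate trivially on these patterns. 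So assume from now on that $\overline{c}$ is not balanced.

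By Fact~\ref{fct-19} there exist balanced 2-flat sequences $\check{c}\le\overline{c}\le\hat{c}$ with $height(\hat{c}) = height(\check{c}) + 1$. Monotonicity ensures that every partial composition $f'$ of the $f_i$ satisfies $f'(\check{c}) \le f'(\overline{c}) \le f'(\hat{c})$. Applying the preceding lemma to the endpoints, after the first $5k-12$ stages both $\check{c}^{(5k-12)}$ and $\hat{c}^{(5k-12)}$ are flat balanced. Writing $s = height(\check{c})$ and $a=\lfloor s/2\rfloor$, and assuming without loss of generality that $s=2a$ is even (the odd case is symmetric), we get $\check{c}^{(5k-12)} = (a)^{b_k}$ and $\hat{c}^{(5k-12)} = (a)^{k-2}\oplus(a+1)^{k-2}$. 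The sandwich therefore forces the first $k-2$ entries of $\overline{c}^{(5k-12)}$ to equal $a$, while the last $k-2$ entries, which I will call the variable half $y$, lie in $\{a,a+1\}$.

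Third, I would argue that the remaining $k-3$ stages render $\overline{c}^{(6k-15)}$ flat. The bound sequences remain fixed by every $Q^k_j$, so the sandwich persists. On $\overline{c}^{(5k-12)}$ itself every $cyc^k$ and every $dec^k_j$ also acts trivially, since their two arguments consist of an $a$ from the constant half and some value in $\{a,a+1\}$ from the variable half, and the $\min$/$\max$ are forced. Hence only the short comparators $mov^k_j$ with $j \le k-3$ modify $\overline{c}$, each sorting a single adjacent pair inside $y$. A bookkeeping over the sets $Q^k_1, Q^k_2, Q^k_3$ shows that within one period every adjacent pair of $y$ is sorted exactly once, in a pattern resembling a bubble-sort pass. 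The main obstacle will be to prove that $k-3$ such stages actually suffice to sort $y$: since a generic 0-1 sequence of length $k-2$ can require up to $k-2$ parallel rounds of adjacent sorts, the argument must exploit a residual structure of $y$ inherited from $\overline{c}$'s 2-flatness through the earlier computation -- most naturally a 2-flatness-type invariant of the intermediate states that confines $y$ to patterns sorted by this schedule within $k-3$ stages.
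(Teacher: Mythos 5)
Your overall framework---sandwiching $\overline{c}$ between the balanced bounds $\check{c}\le\overline{c}\le\hat{c}$ of Fact \ref{fct-19}, propagating the sandwich by monotonicity, and invoking the balanced-case lemma on the two endpoints---is exactly the paper's strategy, and your observation that flat sequences are fixed points of $cyc^k$, $dec^k_j$ and $mov^k_j$ (disposing of the balanced case and showing the bounds stay put) is correct. The problem is the step you yourself flag as ``the main obstacle'': it is not a loose end but the actual content of the theorem, and your setup makes it unprovable. If you only look at $\overline{c}^{(5k-12)}$, the sandwich tells you nothing beyond ``one half is constant $a$ and the other half lies in $\{a,a+1\}$''; an arbitrary arrangement of $a$'s and $(a+1)$'s in a block of length $k-2$ genuinely cannot be sorted in $k-3$ further rounds of the $mov$-schedule (your own bubble-sort count shows this), and no ``2-flatness-type invariant'' of $\overline{c}^{(5k-12)}$ alone will rescue it, because the needed information is about \emph{where} the minority values sit, which the sandwich at a single time step does not control.

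The paper's resolution is to start the endgame at stage $3k-6$ rather than $5k-12$. Fact \ref{fct-21} shows that the pinned-down region grows by one position every two stages from stage $3k-6$ onward, so the minority elements enter the variable region one at a time; Fact \ref{fct-23} then tracks each such ``moving element'' individually and shows it migrates one position per stage toward its final sorted slot (using that $mov^k_{t-i}$ belongs to $f_{i_t+i}$ precisely when $(t-i)+(i_t+i)\equiv 1 \pmod 3$, and an inductive argument that the target position still holds the majority value). Because each element has been in transit since the stage at which it was revealed, all of them reach their sorted positions by stage $6k-15$. This pipelining across stages $3k-6,\ldots,6k-15$ is the missing idea; without it your proof establishes only the containment $\overline{c}^{(6k-15)}\in (a)^{k-2}\oplus\{a,a+1\}^{k-2}$ (or its mirror image), not flatness.
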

\begin{proof}
For a a 2-flat sequence $\overline{c}$ of integers from $[0,2^{k-1}-1]$ let 
$\check{c}$ and $\hat{c}$ be its balanced lower and upper bounds, as defined 
in Definition \ref{def-14}. Let $\overline{c}_0 = \overline{c}$, $\check{c}_0 
= \check{c}$, $\hat{c}_0 = \hat{c}$ and for $i = 1,\ldots,6k-15$ let us 
define $\overline{c}_i = f_i(\overline{c}_{i-1})$, $\check{c}_i = 
f_i(\check{c}_{i-1})$ and $\hat{c}_i = f_i(\hat{c}_{i-1})$. Observe that 
$\check{c}_i\le \overline{c}_i\le \hat{c}_i$, because of monotonicity of 
functions $Q^k_1$, $Q^k_2$, $Q^k_3$ and Fact \ref{fct-19}. To prove that 
$\overline{c}_{6k-15}$ is a flat sequence we need the following three 
technical facts.
\begin{fact}\label{fct-21}
Let $s = height(\check{c})$. If $s$ is even then $\overline{c}_{i,j} =
\frac{s}{2}$ and $\overline{c}_{i,b_k-j+1}\in\{\frac{s}{2},\frac{s}{2}+1\}$
for each $i = 3k-6, \ldots, 5k-12$ and $j = 1, \ldots,
\lceil\frac{i+1-(3k-6)}{2}\rceil$. If $s$ is odd then  $\overline{c}_{i,j}
\in\{\frac{s-1}{2},\frac{s+1}{2}\}$ and $\overline{c}_{i,b_k-j+1} =
\frac{s+1}{2}$ for each $i = 3k-6, \ldots, 5k-12$ and $j = 1, \ldots,
\lceil\frac{i+1-(3k-6)}{2}\rceil$. 
\end{fact}
\begin{proof}
Since both $\check{c}$ and $\hat{c}$ are balanced, we can consider reduced 
forms of them and use Lemmas \ref{l6} and \ref{l7}. For the given range of 
$i$'s values that means that \[
reduce(\check{c}_i), reduce(\hat{c}_i) \in I(X^k_i) = 
I(join_k(\lceil\frac{i+1-(3k-6)}{2}\rceil,Z^k,W^k_i)).\]
It follows that for a given range of $j$'s values 
$reduce(\check{c}_i)_j, reduce(\hat{c}_i)_j \in I(0) = [-\frac{1}{2},0]$. 
From Fact \ref{fct-19} we know that $height(\hat{c}) = s+1$ and from Lemma 
\ref{l4} that heights are preserved in sequences $\check{c}_i$ and 
$\hat{c}_i$. Thus, from the definition of a reduced sequence, 
$\check{c}_{i,j} \in [\frac{s-1}{2},\frac{s}{2}]$, $\check{c}_{i,b_k-j+1} \in 
[\frac{s}{2},\frac{s+1}{2}]$, $\hat{c}_{i,j} \in [\frac{s}{2},\frac{s+1}{2}]$
and $\hat{c}_{i,b_k-j+1} \in [\frac{s+1}{2},\frac{s+2}{2}]$. Since 
$\check{c}_i$ and $\hat{c}_i$ are sequences of integers, for even $s$ we get 
$\check{c}_{i,j} = \check{c}_{i,b_k-j+1} = \hat{c}_{i,j} = \frac{s}{2}$
and $\hat{c}_{i,b_k-j+1} = \frac{s+2}{2}$; for odd $s$ we conclude that 
$\check{c}_{i,j} = \frac{s-1}{2}$ and $\check{c}_{i,b_k-j+1} = \hat{c}_{i,j} 
= \hat{c}_{i,b_k-j+1} = \frac{s+1}{2}$. Since $\check{c}_{i,j} \le 
\overline{c}_{i,j} \le \hat{c}_{i,j}$, the fact follows.  \qed
\end{proof}
The second fact extends the first fact up to the last stage of our 
computation.
\begin{fact}\label{fct-22}
Let $s = height(\check{c})$. If $s$ is even then $\overline{c}_{i,j} =
\frac{s}{2}$ and $\overline{c}_{i,b_k-j+1}\in\{\frac{s}{2},\frac{s}{2}+1\}$
for each $i = 5k-11, \ldots, 6k-15$ and $j = 1, \ldots, k-2$. If $s$ is odd
then $\overline{c}_{i,j} \in\{\frac{s-1}{2},\frac{s+1}{2}\}$ and
$\overline{c}_{i,b_k-j+1} = \frac{s+1}{2}$ for each $i = 5k-11, \ldots, 6k-15$
and $j = 1, \ldots, k-2$.
\end{fact}
\begin{proof}
Consider first the sequence $\overline{c}_{5k-12}$ and observe that for $i =
5k-12$ the value of $\lceil\frac{i+1-(3k-6)}{2}\rceil$ is equal to $k-2$. It
follows from Fact \ref{fct-21} that for even $s$ all values from the left half
of $\overline{c}_{5k-12}$ are equal to $\frac{s}{2}$ and all values from the
right half of $\overline{c}_{5k-12}$ are in $\{\frac{s}{2}, \frac{s}{2}+1\}$.
For odd $s$ all values from the left half of $\overline{c}_{5k-12}$ are in
$\{\frac{s-1}{2},\frac{s+1}{2}\}$ and all values from the right half of
$\overline{c}_{5k-12}$ are equal to $\frac{s+1}{2}$. Since $Q^k_1$, $Q^k_2$ 
and $Q^k_3$ are built of functions $dec^k_*$, $mov^k_*$ and $cyc^k$ (cf. 
Definitions \ref{defFun} and \ref{defQ}) observe that each function $f_i$, $i 
= 5k-11, \ldots, 6k-15$ can only exchange values at positions from 
$args(mov^k_*)$ that are from non-constant half of arguments (in case of 
$dec^k_*$ and $cyc^k$ we can observe that for $a\le b\le a+1$ and any $h\ge 
0$ we have $\min(a,b+h) = a$, $\max(a-h,b) = b$, $\max(a,b-1) = a$ and 
$\min(a+1,b) = b$, that is, the functions are identity mappings in stages $5k-11, \ldots, 6k-15$). The $mov^k_*$ functions can only exchange unequal values 
at neighbour positions moving the smaller value to the left.  \qed
\end{proof}
The last fact states that unequal values $\overline{c}_{i,j}$ described in the
previous two facts are getting sorted during the computation. Observe that if
$s$ is odd (even, respectively) then we only have to trace the sorting process
in a left (right, respectively) region of indices
$[1,\min(k-2,\lceil\frac{i+1-(3k-6)}{2}\rceil)]$
($[\max(k-1,b_k-\lceil\frac{i+1-(3k-6)}{2}\rceil+1),b_k]$, respectively),
where $i = 3k-6, \ldots, 6k-15$ and the values to be sorted differs at most by
one. We trace the positions of the smaller values $s'=\frac{s-1}{2}$ in the
left region and the greater values $s'=\frac{s}{2}+1$ in the right region. We
will call $s'$ a moving element. For $t = 1, \ldots, k-2$ let us define $i_t =
3k+2t-8$ to be the stage, after which the length of the region extends from
$t-1$ to $t$ and a new element appears in it. Let $t'=t$ for odd $s$ and
$t'=b_k-t+1$, otherwise, be the position of this new element and
$a_t=c_{i_t,t'}$ be its value. Finally, let $n_t = |\{1\le l\le t | a_l =
s'\}|$ be the number of moving elements in the region after stage $i_t$.
\begin{fact}\label{fct-23}
Using the above definitions, for $t = 1, \ldots, k-2$, if $a_t = s'$ then for $i
= 0, \ldots, 6k - 15 - i_t$ we have $c_{i_t+i,\max(t-i,n_t)} = a_t$ if $s$ is 
odd and $c_{i_t+i,\min(t'+i,b_k-n_k+1)} = a_t$, otherwise.
\end{fact}
\begin{proof}
We prove the fact only for odd $s$, that is, for the left region. The proof
for the right region is symmetric. We would like to show that if $a_t = s'$
appears at position $t' = t$ after stage $i_t$ then it moves in each of the
following stages one position to the left up to its final position $n_t$. The
proof is by induction on $t$ and $i$. If $t=1$ and $a_1=s'$ appears at
position 1 after stage $i_1=3k-6$ then $n_1=1$ and $a_1$ is already at its
final position. It never moves, because values at second position are $\ge
s'$, by Facts \ref{fct-21} and \ref{fct-22}. If $t>1$ and $a_t = s'$ then the
basis $i=0$ is obviously true. In the inductive step $i>0$ we assume that 
$c_{i_t+i-1,\max(t'-i+1,n_t)} = a_t$ and that the fact is true for smaller 
values of $t$. If $\max(t-i+1,n_t) = n_t$ then also $\max(t-i,n_t) = n_t$ 
and, by the induction hypothesis, values at positions $1, \ldots, n_t-1$ 
are all equal $s'$. That means that $a_t$ is at its final position and we are 
done. Thus we left with the case: $n_t < t-i+1$, that is, with $n_t \le t-i$. 

Consider the sequences $\overline{c}_{i_t+i-1}$ and $\overline{c}_{i_t+i} =
f_{i_t+i}(\overline{c}_{i_t+i-1})$. We know that $\overline{c}_{i_t+i-1,t-i+1}
= s'$. To prove that $\overline{c}_{i_t+i,t-i} = s'$ we would like to show 
that $\overline{c}_{i_t+i-1,t-i} = s'+1$ and $mov^k_{t-i}\in f_{i_t+i}$. The 
later is a direct consequence of an observation that $mov^k_a\in f_b$ if and 
only if $(a+b)\equiv 1 (~mod~3)$. In our case $(t-i) + (i_t+i) = t+i_t = t + 
3k + 2t - 8 \equiv 1 (~mod~3)$. To prove the former, let us consider $a_u = 
s'$, $u\le t-1$. Then $i_u \le i_t-2$ and $n_u \le n_t-1$. By the induction 
hypothesis, $c_{i_u+j,\max(u-j,n_u)} = s'$. Setting $j = i_t - i_u + i-1$ we 
get $j\ge i+1$ and $\max(u-j,n_u) \le \max(t-1-(i+1),n_t-1) < \max(t-i,n_t) = 
t-i$. Moreover, $i_u+j = i_t+i-1$. That means that in the sequence  
$\overline{c}_{i_t+i-1}$ none of $n_t$ elements $s'$ is at position $t-i$ 
and, consequently, $\overline{c}_{i_t+i-1,t-i} = s'+1$. Since $mov^k_{t-i}$ 
switches $s'$ with $s'+1$, this completes the proof of Fact \ref{fct-23}.  \qed
\end{proof}
Now we are ready to prove that $\overline{c}_{6k-15}$ is a flat sequence. By
Fact \ref{fct-22}, if $s$ is odd then $\overline{c}_{6k-15} \in
\{\frac{s-1}{2},\frac{s+1}{2}\}^{k-2}(\frac{s+1}{2})^{k-2}$, otherwise,
$\overline{c}_{6k-15} \in
(\frac{s}{2})^{k-2}\{\frac{s}{2},\frac{s}{2}+1\}^{k-2}$. The number of minority
elements in $\overline{c}_{6k-15}$ has been denote by $n_{k-2}$. If $s$ is odd
and $a_t$, $t = 1, \ldots, k-2$, is a minority element $\frac{s-1}{2}$, then, by
Fact \ref{fct-23}, $c_{6k-15,n_t} = \frac{s-1}{2}$. If $s$ is even and $a_t$, $t
= 1, \ldots, k-2$, is a minority element $\frac{s}{2}+1$, then, by Fact
\ref{fct-23}, $c_{6k-15,b_k-n_t+1} = \frac{s}{2}+2$. In both cases this proves
that  $\overline{c}_{6k-15}$ is flat, which completes the proof of Theorem
\ref{thm-19}. \qed
\end{proof}

\subsection{Proof of Theorem \ref{3merger}}  

Theorem \ref{3merger} follows directly from Theorem \ref{thm-19} and Lemma
\ref{l3}. Let $k\ge 3$ and $\overline{c}$ be any 2-flat sequence of integers
from $[0,2^{k-1}-1]$. By Theorem \ref{thm-19} the result of application
$(Q^k_3\circ Q^k_2\circ Q^k_1)^{2k-5}$ to $(\overline{c})$ is a flat sequence.
Then, by Lemma \ref {l3}, the network $M_k$ is a $2k-5$-pass merger of two 
sorted sequences given in odd and even registers, respectively.

\section{Conclusions}

For each $k \ge 3$ we have shown a construction of a 3-periodic merging
comparator network of $N_k = 2^k(k-2)$ registers and proved that it merge any
two sorted sequences (given in odd and even registers, respectively) in time
$6k-15 = 3(k-5)$. A natural question remains whether it is the optimal merging
time for 3-periodic comparator networks.


\begin{thebibliography}{99} 

\bibitem{aks} {\sc M.~Ajtai, J.~Komlos and E.~Szemeredi}, {\em An $O(n
  \log n)$ sorting network}, in Proc.\ 15th Annual ACM Symp.\ on Theory
  of Computing, 1983, pp.~1--9.

\bibitem{b} {\sc K.E.~Batcher}, {\em Sorting networks and their
  applications}, in Proc.\ AFIPS 1968 SJCC, Vol. 32, AFIPS Press,
  Montvale, NJ, pp.~307--314.

\bibitem{bw} {\sc E.~A.~Bender and S.~G.~Williamson}, {\em  Periodic
  Sorting Using Minimum Delay, Recursively Constructed Merging
  Networks}, The Electronic Journal of Combinatorics 5 (1998), pp.~1--21.

\bibitem{cw} {\sc E.~R.~Canfield and S.~G.~Williamson}, {\em A
  sequential sorting network analogous to the Batcher merge}, Linear and
  Multilinear Algebra, 29 (1991), pp.~43--51.

\bibitem{dpsr} {\sc M.~Dowd, Y.~Perl, M.~Saks and L.~Rudolph}, {\em The
  periodic balanced sorting network}, Journal of ACM, 36 (1989),
  pp.~738--757.

\bibitem{k} {\sc D.E.~Knuth}, {\em The Art of Computer Programming},
  Vol. 3, 2nd edition, Addison Wesley, Reading, MA, 1975.

\bibitem{klo} {\sc M.~Kuty{\l}owski, K.~Lory{\'s} and
  B.~Oesterdiekhoff}, {\em Periodic Merging Networks}, Theory of
  Computing Systems, 31.5 (1998), pp.~551--578.

\bibitem{klow} {\sc M.~Kuty{\l}owski, K.~Lory{\'s}, B.~Oesterdiekhoff
  and R.~Wanka}, {\em Periodification scheme: constructing sorting
  networks with constant period}, Journal of ACM, 47 (2000),
  pp.~944-967. 

\bibitem{l} {\sc F.T.~Leighton}, {\em Introduction to Parallel
  Algorithms and Architectures{\rm :} Arrays, Trees and Hypercubes},
  Morgan-Kaufmann, San Mateo, CA, 1992.
  
\bibitem{ll} {\sc  T.~Levi and A.~Litman}, {\em The Strongest Model of 
Computation Obeying 0-1 Principles}, Theory of Computing Systems, 48(2) 
(2011), pp.~374-388.

\bibitem {mpt} {\sc P.B.~Miltersen, M.~Paterson and J.~Tarui}, {\em The 
asymptotic complexity of merging networks}, Journal of the ACM, 43(1) (1996), 
pp.~147-165.

\bibitem{o} {\sc B.~Oesterdiekhoff}, {\em Periodic comparator networks},
  Theoretical Computer Science, 245 (2000), pp.~175-202.

\bibitem{p} {\sc M.~Piotr{\'o}w}, {\em Periodic, Random-Fault-Tolerant
  Correction Networks}, in Proc.\ 13th ACM Symposium on Parallel
  Algorithms and Architectures, ACM Press, New York, 2001.

\bibitem{ph} {\sc M.~Piotr{\'o}w}, {\em A note on Constructing Binary
  Heaps with Periodic Networks}, Information Processing Letters, 83
  (2002), pp.~129-134.

\bibitem{s} {\sc J.~Seiferas}, {\em Research note: Networks for sorting 
multitonic sequences}, Journal on Parallel Distrib. Comput., 65(12) (2005), 
pp.~1601-1606.

\end{thebibliography}
\end{document}